\documentclass{article}

 \usepackage[preprint]{neurips_2025}
\usepackage{conference}

\title{Thermal Min-Max Games: Unifying Bounded Rationality and Typical-Case Equilibrium}

\author{
  Yuma Ichikawa\thanks{Code: \url{https://github.com/Yuma-Ichikawa/min-max-game-replica}.} \\
  Fujitsu Limited\\
  RIKEN center for AIP \\
  \texttt{ichikawa.yuma@fujitsu.com} 
}

\begin{document}

\maketitle

\begin{abstract}
    Strategic-form min-max game theory examines the existence, multiplicity, selection of equilibria, and the worst-case computational complexity under perfect rationality. However, in many applications, games are drawn from an ensemble, and players exhibit bounded rationality. 
    We introduce thermal min-max games, a thermodynamic relaxation that unifies bounded and perfect rationality by assigning each player a temperature to regulate their rationality level. To analyze typical behavior in the large-strategy limit, we develop a nested replica framework for this relaxation. This theory provides tractable predictions for typical equilibrium values and mixed-strategy statistics as functions of rationality strength, strategy-count aspect ratio, and payoff randomness. Numerical experiments demonstrate that these asymptotic predictions accurately align with the equilibrium of finite games of moderate size.
\end{abstract}

\section{Introduction}\label{sec:introduction}
Strategic-form min-max games are a cornerstone of game theory and optimization. They formalize decision-making against an adversary or under worst-case uncertainty and are prevalent in machine learning, economics, market design, security, defense, and robust operations research whenever guaranteed performance is required against strategic opponents or disturbances \citep{Goodfellow2014GAN,Madry2018,BenTal2009RobustOpt,BasarOlsder1999}.
Classical work established min-max theorems and the foundations of equilibrium \citep{vonNeumann1928,Sion1958,BasarOlsder1999}, developing a robust theory of Nash equilibrium that encompasses existence, characterization, and stability \citep{Nash1950EquilibriumPoints,Nash1951NonCooperativeGames,Glicksberg1952KakutaniGeneralization,KohlbergMertens1986StrategicStability,Harsanyi1973RandomlyDisturbedPayoffs}.
Subsequent research focused on equilibrium selection \citep{HarsanyiSelten1988EquilibriumSelection,Schelling1960StrategyOfConflict,CarlssonVanDamme1993GlobalGames} and clarified the worst-case computational complexity of equilibrium computation \citep{LemkeHowson1964Bimatrix,DaskalakisGoldbergPapadimitriou2009ComplexityNash}.
These results provide a rigorous foundation for perfectly rational play.

However, many real-world settings deviate from perfect rationality. In learning and control, players often employ finite-time algorithms rather than omniscient optimizers: an adversary may take only a few gradient steps; responses can be noisy or heuristic; and both sides may operate under fundamentally different computational budgets. Motivated by such scenarios, a substantial body of literature has introduced bounded rationality game models and regularized response rules, including noisy best responses, entropy-regularized objectives, and quantal response equilibrium \citep{McKelveyPalfrey1995QRE,GoereeHoltPalfrey2005RegularQRE,Blume1993LogitDynamics,HofbauerSandholm2002StochasticFP}. However, these lines of research have largely evolved in parallel, leaving us without a unified account of how equilibrium values and mixed-strategy structures change as players become more or less rational, particularly when rationality is asymmetric among them.

A second challenge is that the typical-case behavior of analytical tools remains underdeveloped.
Much of game theory emphasizes existence, multiplicity, selection, and worst-case complexity. In contrast, many real-world problems are addressed as a single realization from an ensemble of instances generated by uncertainty sets, security assumptions, randomized environments, or stochastic learning dynamics.
In such distributional settings, worst-case analyses and carefully constructed counterexamples may not represent common observations.
What we often require instead are predictions for typical equilibrium outcomes, such as the standard game value and the typical sparsity or concentration of mixed strategies, as functions of payoff randomness and the strength of rationality.

This study develops a typical-case theory that bridges both gaps. We introduce thermal min-max games, a thermodynamic relaxation of strategic-form min-max games in which each player possesses an inverse-temperature parameter that regulates the degree of rationality. At finite temperatures, play is governed by soft best responses.
In sequential leader-follower variants \citep{McKelveyPalfrey1998AQRE}, the temperature quantifies how precisely the follower conditions on the leader's action, given limited computational precision. In simultaneous-move variants \citep{McKelveyPalfrey1995QRE,GoereeHoltPalfrey2005RegularQRE,Blume1993LogitDynamics,HofbauerSandholm2002StochasticFP}, the prior distribution in thermal games serves as a regularizer on responses, modeling decision noise and uncertainty. In the ordered zero-temperature limit, thermal min-max games converge to the classical Nash value, thus providing an interpolation between bounded-rationality and perfectly rational regimes.

To characterize the typical equilibrium under payoff randomness, we develop a nested replica framework tailored to the two-level structure of thermal games. Our analysis of random payoff ensembles leads to tractable characterizations of the typical equilibrium value and mixed--strategy statistics, including concentration and sparsity, explicitly expressed in terms of the two temperatures, the strategy-count aspect ratio between players, and the underlying payoff distribution. Numerical experiments show that these asymptotic predictions closely match equilibrium in finite games of moderate size, indicating that the theory remains accurate well beyond the strict thermodynamic limit.

\paragraph{Contributions.}
We summarize our main contributions as follows:
\begin{itemize}
  \item \textbf{A unified two-temperature model of bounded rationality.}
  We propose thermal min-max games that incorporate player-specific rationality and include several standard bounded rationality models as special cases.
 Taking the ordered zero-temperature limit, the model reduces to the standard perfectly rational min-max game.
  \item \textbf{Nested replica theory for typical-case equilibrium.}
  We develop a nested replica framework that provides asymptotically exact predictions for typical-cases in the large-strategy limit.
  For Gaussian payoff ensembles, we derive explicit formulas for the typical equilibrium value and the statistics of mixed strategies.
\end{itemize}

\section{Problem Setting}
\paragraph{Strategic-Form Min-Max Game}
We start with a standard finite zero-sum matrix game.
Let $\B{C}\in\mab{R}^{N\times M}$ be an arbitrary payoff matrix.
The minimizer selects a mixed strategy $\B{p}\in\Delta_N$ with
$\Delta_N=\{\B{p}\in\mab{R}^N:\ p_i\ge 0,\ \B{p}^{\top}\1=1\}$,
while the maximizer chooses $\B{q}\in\Delta_M$ with
$\Delta_M=\{\B{q}\in\mab{R}^M:\ q_j\ge 0,\ \B{q}^{\top}\1=1\}$.
The payoff is represented by the bilinear form:
\begin{equation}
    e(\B{p},\B{q};\B{C})=\B{p}^{\top}\B{C}\B{q}.
    \label{eq:zero-sum-payoff-unscaled}
\end{equation}
The Nash value of the game is the min-max value $t(\B{C})=\min_{\B{p}}\max_{\B{q}}e(\B{p},\B{q};\B{C})$
, which is equal to the max-min value according to von Neumann's min-max theorem \citep{vonNeumann1928}.
A pair $(\B{p}^{\star},\B{q}^{\star})$ is a Nash equilibrium if it achieves this value, meaning that
$e(\B{p}^{\star},\B{q};\B{C})\le t(\B{C})\le e(\B{p},\B{q}^{\star};\B{C})$ holds for all
$\B{p}\in\Delta_N$ and $\B{q}\in\Delta_M$.

\paragraph{Typical-Case Analysis.}
Research on strategic-form min-max games has traditionally focused on three core questions: the existence of Nash equilibrium, the origin and resolution of equilibrium multiplicity, and the difficulty of computing an equilibrium in the worst case. A typical-case viewpoint treats the payoff matrix $\B{C}$ as a random draw from an instance ensemble and investigates the resulting distribution of the game value $t(\B{C})$. 
Specifically, it characterizes the mean $\mab{E}_{\B{C}\sim p(\B{C})}[t(\B{C})]$ and the fluctuations between instances.
Statistical mechanics provides a framework for typical-case analysis in large degree-of-freedom limits  \citep{mezard1986replica,fontanari1995statistical,IchikawaHukushima2025TMLR}. Motivated by this perspective, we examine strategic-form games in an asymptotic regime where the number of available strategies increases accordingly.

\begin{definition}[Thermodynamic Limit]
    \label{def:thermodynamic-limit}
    The thermodynamic limit for an $N\times M$ instance is the asymptotic regime
    \begin{equation}
    N,M\to\infty,~~
    \nicefrac{N}{M}=\gamma\in(0,\infty).
    \label{eq:thermodynamic-limit}
    \end{equation}
\end{definition}

\paragraph{Scaled Min-Max Game.}
To take the thermodynamic limit in Definition~\ref{def:thermodynamic-limit}, we utilize rescaled simplices
\begin{equation}
    \mac{X}_N=\ab\{\B{x}\in\mab{R}^N: x_i\ge 0,~ \B{x}^{\top} \1 = N\},~~
    \mac{Y}_M=\ab\{\B{y}\in\mab{R}^M: y_j\ge 0,~ \B{y}^{\top}\1=M \},
    \label{eq:scaled-simplices}
\end{equation}
so that typical coordinates are $\mac{O}(1)$. These variables correspond to standard mixed strategies through $p_i=\nicefrac{x_i}{N}$ and $q_j=\nicefrac{y_j}{M}$.
Given $\B{C} \sim p(\B{C})$, we define the bilinear payoff
\begin{equation}
    E(\B{x},\B{y};\B{C})
    =
    \kappa \ab(\B{x}^{\top} \B{C} \B{y}),
    \label{eq:general-scaled-payoff}
\end{equation}
where the normalization $\kappa$ is chosen such that the equilibrium value is extensive in the thermodynamic limit. Under mild moment assumptions on $p(\B{C})$, 
this scaling yields a min-max value that is typically $\Theta(L)$ with $L=\sqrt{NM}$. The intensive quantity $\nicefrac{E(\B{x}, \B{y}; \B{C})}{L}$ can have a nontrivial limit as $N, M \to \infty$.

\section{Thermal Min-Max Games}
\label{sec:two-temp-br-and-typical}

We introduce \emph{thermal min-max games} as a model of bounded rationality, achieved by replacing hard best responses with Boltzmann sampling at finite inverse temperatures $(\beta_{\max},\beta_{\min})\in(0,\infty)^2$.

\begin{definition}[Thermal Min-Max Games]
\label{def:ttbs}
Fix $(\beta_{\max},\beta_{\min})\in(0,\infty)^2$ and an instance $\B{C}$.
A Two-Temperature Boltzmann Strategy (TTBS) consists of a random pair of strategies $(\B{X},\B{Y})$ generated by
\begin{equation}
    \B{X}\sim p_{\beta_{\min}}(\cdot;\B{C}),~~~
    \B{Y}|  \B{X}\sim p_{\beta_{\max}}(\cdot|\B{X};\B{C}),
    \label{eq:ttbs-sampling}
\end{equation}
where
\begin{align}
    p_{\beta_{\max}}(\B{y}| \B{x};\B{C})
    &=
    \frac{\exp\ab(\beta_{\max}E(\B{x},\B{y};\B{C}))}{Z_y(\B{x};\B{C})}q_{0}(\B{y}),
    Z_y(\B{x};\B{C})
    =
    \int_{\mac{Y}_M}q_{0}(\B{y})\exp\ab(\beta_{\max}E(\B{x},\B{y};\B{C}))d\B{y},
    \label{eq:ttbs-inner-def}
    \\
    p_{\beta_{\min}}(\B{x};\B{C})
    &=
    \frac{\exp\ab(-\beta_{\min}\phi_{\beta_{\max}}(\B{x}))}{Z(\B{C})}p_{0}(\B{x}),~
    Z(\B{C})
    =
    \int_{\mac{X}_N} p_{0}(\B{x})
    \exp\ab(-\beta_{\min}\phi_{\beta_{\max}}(\B{x};\B{C}))d\B{x}.
    \label{eq:ttbs-outer-def}
\end{align}
with $\phi_{\beta_{\max}}(\B{x})\coloneqq (\nicefrac{1}{\beta_{\max}})\log Z_y(\B{x};\B{C})$.
Here $q_{0}(\B{y})$ and $p_{0}(\B{x})$ denote the prior distributions over $\mac{Y}_{M}$ and $\mac{X}_{N}$, respectively.
We define the averaged expected payoff and the corresponding two-temperature free energy as
\begin{equation}
    \ab\langle E(\B{X},\B{Y};\B{C}) \rangle_{\beta_{\max},\beta_{\min}} \coloneqq \mab{E}_{\B{X}, \B{Y}}\ab[E(\B{X},\B{Y};\B{C})],~~~F(\beta_{\max},\beta_{\min};\B{C})
    \coloneqq
    -\frac{1}{\beta_{\min}}\log Z(\B{C}).
\end{equation}
The free-energy value $F$ serves as a log-partition normalization and incorporates the entropic contributions induced by TTBS.
\end{definition}

In statistical physics, $\beta=\nicefrac{1}{T}$ controls concentration: high temperatures produce diffuse sampling, whereas low temperatures focus on near-optimal configurations.
In our setting, $\beta_{\max}$ and $\beta_{\min}$ control the strength of rationality: smaller values correspond to weaker rationality, whereas larger values correspond to stronger rationality.
In particular, the ordered zero-temperature limit first $\beta_{\max}\to\infty$ and then $\beta_{\min}\to\infty$ recovers the classical Nash value $E_{0}(\B{C})$.

\paragraph{A Unified View of Bounded Rationality via Thermal Min-Max Games.}
These thermal min-max game offers a unified perspective that integrates several established min-max formulations. By selecting appropriate priors, our framework captures bounded-rational relaxations of simultaneous-move play. In contrast, the two temperatures $(\beta_{\max},\beta_{\min})$ parameterize an inherently asymmetric leader-follower relaxation induced by the nested construction.
Specifically, under the ordered zero-temperature limit, with an entropic specification
$p_{0}(\B{x}) \propto \exp(-\beta_{\min}\lambda_{\min} H(\B{x}))$ and
$q_{0}(\B{y}) \propto \exp(\beta_{\max}\lambda_{\max} H(\B{y}))$,
where $H$ denotes an entropy function, the induced equilibrium coincides with the quantal response equilibrium (QRE) \citep{McKelveyPalfrey1995QRE,GoereeHoltPalfrey2005RegularQRE,Blume1993LogitDynamics,HofbauerSandholm2002StochasticFP}.
Our formulation also recovers the logit agent-quantal response equilibrium (AQRE) \citep{McKelveyPalfrey1998AQRE} as a special case.
We refer to Appendix~\ref{app:bounded_rationality} for detailed derivations of these reductions and a broader discussion of connections to other bounded rationality models.

\section{Nested Replica Framework for Thermal Min-Max Games}
\label{sec:replica-route-extended}

Typical-case analysis addresses the average behavior across an ensemble of random instances.
For a statistic $\Gamma:\mac{X}_N\times\mac{Y}_M\to\mab{R}$, we examine its instance-averaged value under TTBS sampling, $
\mab{E}_{\B{C}}[\langle \Gamma(\B{x},\B{y})\rangle_{\beta_{\max},\beta_{\min}}]$.
Representative observables include the payoff $\Gamma(\B{x},\B{y})=E(\B{x},\B{y};\B{C})$, coordinatewise moments of mixed strategies, and summary statistics that describe the geometry of equilibrium.
Importantly, such TTBS observables can be derived from a generating function: the typical free-energy density, defined as follows:
\begin{equation}
    \nu(\beta_{\max},\beta_{\min})
    =
    \lim_{L\to\infty}\frac{1}{L}\mab{E}_{\B{C}}\ab[F(\beta_{\max},\beta_{\min};\B{C})].
    \label{eq:typical-v-general}
\end{equation}
In practice, derivatives of $\nu$ with respect to model parameters such as $\gamma$ and $k$ generate a broad class of typical double-expectation statistics.
In general, the instance average $\mab{E}_{\B{C}}[\log Z(\B{C})]$ is not analytically tractable. However, in the thermodynamic limit of Definition~\ref{def:thermodynamic-limit}, it can be evaluated using the replica method \citep{edwards1975theory, mezard1986replica, mezard2009information}. The central idea is to express $\nu(\beta_{\max},\beta_{\min})$ in terms of moments using the replica identity
\begin{equation}
    \nu(\beta_{\max},\beta_{\min})
    =
    -\frac{1}{\beta_{\min}}
    \lim_{n\to 0}\frac{1}{n}\lim_{L\to\infty}\frac{1}{L}\log \mab{E}_{\B{C}}\ab[Z(\B{C})^n],
    \label{eq:v-from-replica}
\end{equation}
The replica method computes $\mab{E}_{\B{C}}[Z(\B{C})^{n}]$ for integers $n$ and $k$, and then analytically continues the resulting expression to the limits $n\to 0$ and $k\to -\nicefrac{\beta_{\min}}{\beta_{\max}}$. For integers $n$ and $k$,
\begin{equation}
    Z(\B{C})^n =
    \int \prod_{a=1}^n d\B{x}^a
    \prod_{a=1}^n\prod_{l=1}^k d\B{y}^{al}
    \exp\ab(\beta_{\max}\sum_{a=1}^n\sum_{l=1}^k E(\B{x}^a,\B{y}^{al};\B{C})),
    \label{eq:replicated-Z}
\end{equation}
Eq.~\eqref{eq:replicated-Z} reveals a nested replica structure: $a\in\{1,\dots,n\}$ is the standard replica index, while $l\in\{1,\dots,k\}$ denotes a temperature replica induced by the power $[Z_y(\B{x};\B{C})]^k$.
Establishing these analytic continuations rigorously can be challenging.
All results in this paper are derived using the nested replica method and are interpreted as conditional on the validity of the analytic continuation steps.
A growing body of literature has validated replica predictions for several high-dimensional inference models \citep{barbier2019optimal,aubin2020generalization}; however, a formal treatment of the current two-temperature min-max construction is beyond the scope of this work and will be addressed in future research.

For the Gaussian bilinear ensemble, averaging Eq.~\eqref{eq:replicated-Z} over $\B{C}$ induces quadratic couplings among replicas.
Importantly, these couplings depend on the replicated configurations through overlap matrices such as
$Q^x_{ab}=\nicefrac{\B{x}^{a \top} \B{x}^{b}}{N}$ and $Q^y_{abls}=\nicefrac{\B{y}^{al \top} \B{y}^{bs}}{M}$.
To perform the analytic continuation and solve the resulting saddle-point equations, we impose a structured \emph{ansatz} on these overlaps:
we assume replica symmetry (RS) for the $n$ outer replicas of $\B{x}$ and a one-step replica symmetry-breaking (1RSB) structure for the inner replicas of $\B{y}$, as illustrated in Figure~\ref{fig:fig_rs_1rsb_overlap_only}.

\begin{figure}
    \centering
    \includegraphics[width=0.95\linewidth]{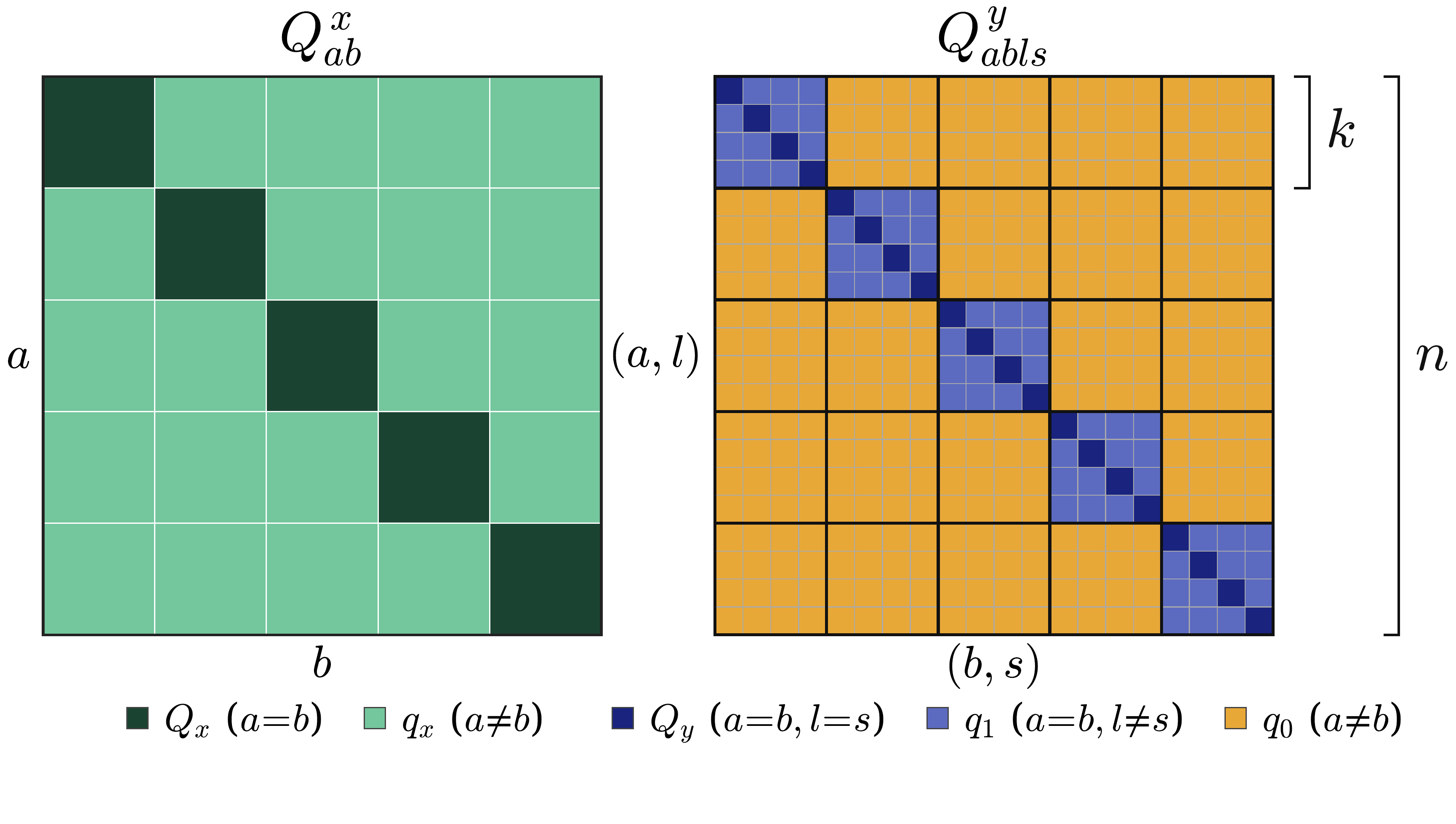}
    \caption{Replica-overlap structure under the RS and 1RSB ansatz. Left: RS overlap matrix $Q^x_{ab}$ for the $n$ outer replicas of $\B{x}$, with diagonal entries $Q_x$ ($a=b$) and off-diagonal entries $q_x$ ($a\neq b$). Right: 1RSB overlap matrix $Q^y_{abls}$ for the $\B{y}$ replicas indexed by $(a,l)$, showing $k$ inner replicas within each outer replica $a$: $Q_y$ on $(a=b,l=s)$, $q_1$ on $(a=b,l\neq s)$, and $q_0$ on $(a\neq b)$.}
    \label{fig:fig_rs_1rsb_overlap_only}
\end{figure}

\section{Typical Behavior in Gaussian Thermal Games}
\label{sec:main-results}

We present results for the typical value density $\nu(\beta_{\max},\beta_{\min})$ in a Gaussian setting. We specifically consider payoff matrices with independent entries $C_{ij}\sim_{\mathrm{i.i.d.}}\mac{N}(0,1)$ and adopt the scaling $\kappa=\sqrt{\nicefrac{\sigma}{L}}$, under which the min-max value satisfies $E_0(\B{C})=\Theta(L)$ in the thermodynamic regime.
We further specialize in uniform priors $p_{0}(\B{x})$ and $q_{0}(\B{y})$.
Although we focus on i.i.d. Gaussian payoffs for clarity, the two-temperature framework is not limited to this assumption.
Extending the typical-case theory to structured ensembles, including low-rank-plus-noise models, sparse interactions, and payoffs with correlated rows or columns, is a natural and promising direction for future work.
All technical derivations are provided in Appendix~\ref{app-sec:derivation}.

\begin{claim}
    \label{thm:finiteT}
    Fix $\sigma>0$, the aspect ratio $\gamma>0$, and the inverse temperatures $\beta_{\max}>0$ and $\beta_{\min}>0$.
    Let $k=-\nicefrac{\beta_{\min}}{\beta_{\max}}<0$, and consider the thermodynamic limit of Definition~\ref{def:thermodynamic-limit}.
    Assume the RS and 1RSB ansatz as described in Appendix~\ref{app-subsec:rs-rsb}.
    Then, the typical value density has a saddle-point representation
    \begin{equation}
        \nu(\beta_{\max},\beta_{\min})
        =
        -\frac{1}{\beta_{\min}} \mathrm{extr}_{\B{\Theta}}g(\Theta; \B{\Lambda}),
    \label{eq:v-extr-g}
    \end{equation}
    where $\mathrm{extr}_{\B{\Theta}}$ denotes a stationary point over the finite-dimensional collection of order parameters and conjugates
    \begin{equation}
        \B{\Theta} \coloneqq
        \ab(
        Q_x,q_x,Q_y,q_0,q_1,
        m_x,m_y,
        \hat{Q}_x,\hat{\chi}_x,\hat{Q}_y,\hat{\chi}_0,\hat{\chi}_1),
        \label{eq:Theta-def}
    \end{equation}
    and $\B{\Lambda}$ denotes the collection of scalar model parameters such as $\sigma$, $\gamma$, $\beta_{\max}$, and $\beta_{\min}$.
    The function $g(\B{\Theta}; \B{\Lambda})$ is given by
    \begin{multline}
        g(\B{\Theta}; \B{\Lambda})
        =
        \frac{\sigma\beta_{\max}^2}{2}\ab(
        kQ_xQ_y+k(k-1)Q_x q_1-k^2 q_x q_0)
        +
        \frac{\gamma^{\nicefrac{1}{2}}}{2}\ab(\hat{Q}_x Q_x-\hat{\chi}_x(Q_x-q_x))\\
        +
        \frac{\gamma^{-\nicefrac{1}{2}}}{2}\ab(
        k\hat{Q}_y Q_y
        -k(\hat{\chi}_1+\hat{\chi}_0)\bigl(Q_y+(k-1)q_1\bigr)
        +k^2\hat{\chi}_0 q_0)
        -\gamma^{\nicefrac{1}{2}}m_x-k\gamma^{-\nicefrac{1}{2}}m_y\\
        +\gamma^{\nicefrac{1}{2}}\int Dz \log Z_x(z)
        + \gamma^{-\nicefrac{1}{2}} \int Dz \log\int D\eta [Z_y(z,\eta)]^k,
        \label{eq:g-functional}
    \end{multline}
    where $Dz=(2\pi)^{-\nicefrac{1}{2}}e^{-\nicefrac{z^2}{2}}dz$ and $D\eta=(2\pi)^{-\nicefrac{1}{2}}e^{-\nicefrac{\eta^2}{2}}d\eta$ denote standard Gaussian measures, and
    \begin{multline}
        Z_x(z)=\int_{0}^{N}d x  \exp\ab(-\frac{\hat{Q}_x}{2}x^2+\ab(m_x+\sqrt{\hat{\chi}_x}z)x),\\
        Z_y(z,\eta)=\int_{0}^{M}d y  \exp\ab(-\frac{\hat{Q}_y}{2}y^2+\ab(m_y+\sqrt{\hat{\chi}_0} z+\sqrt{\hat{\chi}_1}\eta )y).
    \label{eq:onesite-Zxy}
    \end{multline}
    Since the upper boundaries are asymptotically inactive, i.e., the saddle-point values satisfy $x^\star<N$ and $y^\star<M$, we replace the one-site domains $[0,N]$ and $[0,M]$ by $[0,\infty)$ in the thermodynamic limit.
\end{claim}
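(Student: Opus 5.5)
The derivation starts from the replicated partition function in Eq.~\eqref{eq:replicated-Z}, with integer $n$ and integer $k$. Here $Z(\B{C})^n$ has already been written using $[Z_y(\B{x};\B{C})]^{-\beta_{\min}/\beta_{\max}}=[Z_y]^k$, so the outer Boltzmann factor becomes $k$ inner replicas $\B{y}^{al}$ for each outer replica $\B{x}^a$.

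\textbf{Gaussian average.} With $C_{ij}\sim\mac{N}(0,1)$ and $\kappa=\sqrt{\sigma/L}$, the exponent $\beta_{\max}\kappa\sum_{a,l}\B{x}^{a\top}\B{C}\B{y}^{al}$ is linear in $\B{C}$. Averaging therefore gives
\begin{equation*}
\exp\ab(\tfrac{\sigma\beta_{\max}^2}{2L}\sum_{a,b}\sum_{l,s}(\B{x}^{a\top}\B{x}^b)(\B{y}^{al\top}\B{y}^{bs})).
\end{equation*}
Since $NM/L=L$, this equals $\exp(L\frac{\sigma\beta_{\max}^2}{2}\sum Q^x_{ab}Q^y_{abls})$. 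It depends on the configurations only through the overlaps $Q^x_{ab}=\B{x}^{a\top}\B{x}^b/N$ and $Q^y_{abls}=\B{y}^{al\top}\B{y}^{bs}/M$.

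\textbf{Decoupling.} I insert delta functions to define these overlaps and use Fourier representations with conjugates $\hat Q$. The simplex constraints $\B{x}^\top\1=N$ and $\B{y}^\top\1=M$ are imposed with Lagrange multipliers $m_x,m_y$. The integrand then factorizes over sites $i$ and $j$, leaving $N$ copies of an $n$-replica one-site $x$-integral and $M$ copies of an $nk$-replica one-site $y$-integral. Writing $N=\gamma^{1/2}L$ and $M=\gamma^{-1/2}L$ makes everything of order $L$, and Laplace's method gives $\frac1L\log\mab{E}Z^n\to\mathrm{extr}$ of a functional.

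\textbf{Ansatz and $n\to0$.} Next I impose the ansatz of Figure~\ref{fig:fig_rs_1rsb_overlap_only}: RS $(Q_x,q_x)$ for $\B{x}$ and 1RSB $(Q_y,q_1,q_0)$ for $\B{y}$, with the analogous parametrization of the conjugates.

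\emph{Energy term.} Counting pairs gives $nk Q_xQ_y + nk(k-1)Q_xq_1 + n(n-1)k^2q_xq_0$. At order $n$ this becomes $kQ_xQ_y+k(k-1)Q_xq_1-k^2q_xq_0$, which is the first line of Eq.~\eqref{eq:g-functional}.

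\emph{Quadratic replica couplings.} The off-diagonal terms are $\frac{\hat\chi_x}{2}(\sum_a x^a)^2$ for $x$, and $\frac{\hat\chi_0}{2}(\sum_{a,l}y^{al})^2+\frac{\hat\chi_1}{2}\sum_a(\sum_l y^{al})^2$ for $y$. I linearize them with Hubbard--Stratonovich transformations: one Gaussian field $z$ for $x$, and a hierarchical pair $z$ (shared across $a$) and $\eta$ (shared within a block $a$) for $y$. This yields $\int Dz\,[Z_x(z)]^n$ and $\int Dz\,[\int D\eta\,Z_y(z,\eta)^k]^n$. The $n\to0$ limit then gives $\int Dz\log Z_x$ and $\int Dz\log\int D\eta\,Z_y^k$.

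\emph{Trace terms.} The trace terms $\hat Q Q$ and $\hat\chi\,(\text{overlap})$ come out in the form displayed once the conjugates are reparametrized as differences: for example, the diagonal coefficient becomes $\hat Q_x-\hat\chi_x$, which produces the $-\hat\chi_x(Q_x-q_x)$ term, and similarly $\hat\chi_0+\hat\chi_1$ multiplies $Q_y+(k-1)q_1$. The multiplier terms $-\gamma^{1/2}m_x-k\gamma^{-1/2}m_y$ come from the normalizations.

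\textbf{Continuation and domains.} Finally, I continue to $k=-\beta_{\min}/\beta_{\max}$ and multiply by $-1/\beta_{\min}$, using Eq.~\eqref{eq:v-from-replica}. I replace the one-site domains $[0,N]$ and $[0,M]$ by $[0,\infty)$, arguing that the saddle point has $\hat Q>0$, so the effective Gaussian weights decay on the $\mac{O}(1)$ scale.

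\textbf{Main obstacle.} I expect the hardest part to be keeping the 1RSB bookkeeping consistent. Specifically, the conjugate reparametrization, the signs, and the coefficient $k^2\hat\chi_0q_0$ must match exactly the cross terms in the Hubbard--Stratonovich decomposition. The interchange of the limits $n\to0$, $L\to\infty$, and the continuation in $k$ to negative values remains non-rigorous, consistent with the paper's stated conditional status.
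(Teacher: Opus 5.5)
Your derivation of $g$ follows the paper's route in Appendix~\ref{app-sec:derivation}: Gaussian average, Fourier-enforced overlaps and simplex multipliers, site factorization, the same RS/1RSB conjugate parametrization, a per-block field $\eta_a$ in the Hubbard--Stratonovich step, and Lemma~\ref{lem:replica-log} for $n\to0$. Your pair counting, the $\gamma^{\pm1/2}$ prefactors from $N/L$ and $M/L$, and the multiplier terms are all correct.

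The step that fails is your justification for replacing $[0,N]$ and $[0,M]$ by $[0,\infty)$ on the grounds that $\hat Q>0$. For the $y$-site this is false at the saddle you are describing. Stationarity in $q_1$ gives $\hat\chi_0+\hat\chi_1=\gamma^{1/2}\sigma\beta_{\max}^2Q_x$, and stationarity in $Q_y$ then gives $\hat Q_y=0$ (Proposition~\ref{prop:finiteT-stationarity}). The $y$ one-site weight is therefore the pure exponential $e^{hy}$ with $h=m_y+\sqrt{\hat\chi_0}z+\sqrt{\hat\chi_1}\eta$. It has no Gaussian decay, and on the positive-probability event $h>0$ it grows, so $\int_0^\infty e^{hy}dy=\infty$ there. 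Any confinement of $y$ must instead come from the $[Z_y]^k$, $k<0$, tilt over $\eta$. That tilt assigns zero weight to $h\ge0$, and it is also the source of the effective curvature $\hat Q_y+\hat\Delta$ in Appendix~\ref{app:oztl-ysite}. On $[0,\infty)$, however, the tilt keeps the second moments finite only for $|k|>1$. For $h<0$ one has $Z_y=1/|h|$ and $\langle y^2\rangle_{z,\eta}=2/h^2$, so $\langle\langle y^2\rangle_{z,\eta}\rangle_{\eta|z}\propto\int_{h<0}D\eta\,|h|^{|k|-2}$, which diverges as $h\to0^-$ when $|k|\le1$ (since $\hat\chi_1>0$). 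The paper does not close this step either: it only asserts $x^\star<N$ and $y^\star<M$, and its numerics keep a finite cutoff $[0,y_{\max}]$. Since the formula for $g$ is derived with the exact finite domains, you should keep $[0,M]$ for the $y$-site, or restrict the replacement to $|k|>1$ with an argument based on the tilt, rather than appeal to $\hat Q>0$.
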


\begin{proposition}
    \label{prop:finiteT-stationarity}
     At a saddle point $\B{\Theta}^\star$ of $g(\B{\Theta}; \B{\Lambda})$, the conjugate variables satisfy
     \begin{multline}
         \hat{Q}_x=\frac{k\sigma\beta_{\max}^{2}}{\gamma^{\nicefrac{1}{2}}}\ab(k (q_{0}-q_{1})-(Q_{y}-q_{1})),~~\hat{Q}_y=0,\\        \hat{\chi}_x=\frac{\sigma\beta_{\max}^2}{\gamma^{\nicefrac{1}{2}}}k^2 q_0,~~~
        \hat{\chi}_0=\gamma^{\nicefrac{1}{2}}\sigma\beta_{\max}^2q_x,~~~
        \hat{\chi}_1=\gamma^{\nicefrac{1}{2}}\sigma\beta_{\max}^2(Q_x-q_x),
        \label{eq:finiteT-conjugates}
     \end{multline}
    and the order parameters obey the moment self-consistency relations
    \begin{multline}
        \int Dz \langle x\rangle_{z}=1,~~
        \int Dz \ab\langle \langle y\rangle_{z,\eta} \rangle_{\eta|z}=1,~~
        Q_x=\int Dz \langle x^2\rangle_{z},~~
        q_x=\int Dz \langle x\rangle_{z}^2,\\
        Q_y=\int Dz \ab\langle \langle y^2\rangle_{z,\eta}\rangle_{\eta|z},~~
        q_1=\int Dz \ab\langle \langle y\rangle_{z,\eta}^2 \rangle_{\eta|z},~~
        q_0=\int Dz \ab(\langle \langle y\rangle_{z,\eta} \rangle_{\eta|z})^2.
    \end{multline}
    where $\langle\cdot\rangle_{z}$ and $\langle\cdot\rangle_{z,\eta}$ denote the Gibbs expectations under the one-site measures induced by $Z_x(z)$ and $Z_y(z,\eta)$, with $\langle\cdot\rangle_{\eta|z}$ defined as follows: 
    \begin{equation}
        \Psi(z) \coloneqq \int D\eta[Z_y(z,\eta)]^k,~~
        \langle B \rangle_{\eta|z}
        \coloneqq
        \frac{1}{\Psi(z)}\int D\eta[Z_y(z,\eta)]^k B(z,\eta),
        \label{eq:eta-tilt}
    \end{equation}
    for any integrable $B(z,\eta)$.
\end{proposition}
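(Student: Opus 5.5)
The plan is to differentiate the explicit functional $g(\B{\Theta};\B{\Lambda})$ of Claim~\ref{thm:finiteT} with respect to each component of $\B{\Theta}$ and set every derivative to zero. The variables split into two groups. The order parameters $(Q_x,q_x,Q_y,q_0,q_1)$ enter only polynomially, through the energetic term and the conjugate bilinear terms, so stationarity in them gives linear equations for the conjugates, namely \eqref{eq:finiteT-conjugates}. The conjugates and multipliers $(\hat{Q}_x,\hat{\chi}_x,\hat{Q}_y,\hat{\chi}_0,\hat{\chi}_1,m_x,m_y)$ also enter the one-site integrals $Z_x$ and $Z_y$, and stationarity in them gives the moment self-consistency relations.

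For the conjugates I would take each derivative in turn.
\begin{itemize}
\item From $\partial_{q_x}g=0$: the energetic contribution is $-\tfrac{\sigma\beta_{\max}^2}{2}k^2q_0$ and the conjugate contribution is $+\tfrac{\gamma^{1/2}}{2}\hat{\chi}_x$, which gives $\hat{\chi}_x=\sigma\beta_{\max}^2k^2q_0/\gamma^{1/2}$.
\item From $\partial_{Q_y}g=0$: $\tfrac{\sigma\beta_{\max}^2}{2}kQ_x+\tfrac{\gamma^{-1/2}}{2}k(\hat{Q}_y-\hat{\chi}_1-\hat{\chi}_0)=0$.
\item From $\partial_{q_1}g=0$: $\tfrac{\sigma\beta_{\max}^2}{2}k(k-1)Q_x-\tfrac{\gamma^{-1/2}}{2}k(k-1)(\hat{\chi}_1+\hat{\chi}_0)=0$, which fixes $\hat{\chi}_0+\hat{\chi}_1=\gamma^{1/2}\sigma\beta_{\max}^2Q_x$. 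Substituting into the $Q_y$ equation forces $\hat{Q}_y=0$.
\item From $\partial_{q_0}g=0$: $-\tfrac{\sigma\beta_{\max}^2}{2}k^2q_x+\tfrac{\gamma^{-1/2}}{2}k^2\hat{\chi}_0=0$, so $\hat{\chi}_0=\gamma^{1/2}\sigma\beta_{\max}^2q_x$, and therefore $\hat{\chi}_1=\gamma^{1/2}\sigma\beta_{\max}^2(Q_x-q_x)$.
\item From $\partial_{Q_x}g=0$: $\tfrac{\sigma\beta_{\max}^2}{2}(kQ_y+k(k-1)q_1)+\tfrac{\gamma^{1/2}}{2}(\hat{Q}_x-\hat{\chi}_x)=0$. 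Inserting the value of $\hat{\chi}_x$ and regrouping gives $\hat{Q}_x=\tfrac{k\sigma\beta_{\max}^2}{\gamma^{1/2}}\bigl(kq_0-Q_y-(k-1)q_1\bigr)$, which is the stated form $k(q_0-q_1)-(Q_y-q_1)$.
\end{itemize}
I would check each sign carefully, since the sign conventions in $g$ determine whether $\hat{Q}_x$ comes out with this form.

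For the moment relations, I would differentiate under the integral sign using $\partial_{m_x}\log Z_x=\langle x\rangle_z$ and $\partial_{\hat{Q}_x}\log Z_x=-\tfrac12\langle x^2\rangle_z$. The derivative $\partial_{\hat{\chi}_x}$ produces $\tfrac{z}{2\sqrt{\hat{\chi}_x}}\langle x\rangle_z$. This is the step that needs care: I would Gaussian-integrate by parts in $z$, using $\int Dz\,zf=\int Dz\,f'$ and $\partial_z\langle x\rangle_z=\sqrt{\hat{\chi}_x}(\langle x^2\rangle_z-\langle x\rangle_z^2)$. The result is $\tfrac12\int Dz(\langle x^2\rangle-\langle x\rangle^2)$, so the $\hat{\chi}_x$ equation reads $Q_x-q_x=\int Dz(\langle x^2\rangle-\langle x\rangle^2)$. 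Combined with the $\hat{Q}_x$ equation, $Q_x=\int Dz\langle x^2\rangle_z$, this yields $q_x=\int Dz\langle x\rangle_z^2$. The $m_x$ equation gives $\int Dz\langle x\rangle_z=1$.

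The $y$-side is the main obstacle because of the nested average $\log\int D\eta\,Z_y^k$. Differentiating gives $\tfrac{1}{\Psi}\int D\eta\,kZ_y^{k}\,\partial\log Z_y$, that is, $k\langle\partial\log Z_y\rangle_{\eta|z}$. The common factor $k$ cancels against the $k$ multiplying every $y$-conjugate term in $g$. The $m_y$ and $\hat{Q}_y$ derivatives are then immediate. For $\hat{\chi}_1$ I would integrate by parts in $\eta$ against the tilted weight $Z_y^k$. This produces two terms, $\langle\langle y^2\rangle-\langle y\rangle^2\rangle_{\eta|z}+k\langle\langle y\rangle^2\rangle_{\eta|z}$, and matching them against $-(Q_y+(k-1)q_1)+Q_y$ identifies $q_1$. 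For $\hat{\chi}_0$ I would integrate by parts in $z$. This requires differentiating $\log\Psi(z)$ and produces the additional term $k^2(\langle\langle y\rangle\rangle_{\eta|z})^2$, whose matching against $k^2q_0$ gives the expression for $q_0$. The $y$-side relations then follow once the constraints already obtained are substituted back.
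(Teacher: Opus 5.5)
Your proposal is correct and follows essentially the same route as the paper's derivation in Appendix~\ref{app:moment-deriv-subsec}. Stationarity in the polynomially entering overlaps $(q_x,q_0,q_1,Q_y)$ fixes $\hat{\chi}_x,\hat{\chi}_0,\hat{\chi}_1$ and $\hat{Q}_y=0$, and stationarity in $m_x,m_y,\hat{Q}_x,\hat{Q}_y,\hat{\chi}_x,\hat{\chi}_1,\hat{\chi}_0$ gives the moment relations in the same order, using Gaussian integration by parts in $z$ for $\hat{\chi}_x$ and $\hat{\chi}_0$, and in $\eta$ under the $Z_y^k$-tilted weight for $\hat{\chi}_1$. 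Your explicit $\partial_{Q_x}g=0$ computation of $\hat{Q}_x$, which the appendix does not write out, is correct and matches the stated form; the divisions by $k$ and $k-1$ are legitimate because $k<0$.
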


\begin{proposition}
\label{prop:finiteT-payoff}
    The typical TTBS payoff density is characterized by saddle parameters as
    \begin{equation}
    e(\beta_{\max},\beta_{\min})
    \coloneqq 
    \lim_{L\to\infty}\frac{1}{L}
    \mab{E}_{\B{C}}\ab[\langle E(\B{X},\B{Y};\B{C})\rangle_{\beta_{\max},\beta_{\min}}]
    =
    \sigma\beta_{\max}\Bigl(
    Q_xQ_y+(k-1)Q_xq_1-kq_xq_0
    \Bigr).
\label{eq:finiteT-payoff-density}
\end{equation}
\end{proposition}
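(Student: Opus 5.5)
The plan is to obtain the payoff density as a derivative of the generating function $\nu$ with respect to $\sigma$, and then evaluate that derivative explicitly at the saddle point of Claim~\ref{thm:finiteT}. Since $E=\kappa\,\B{x}^{\top}\B{C}\B{y}$ with $\kappa=\sqrt{\sigma/L}$, we have $\partial_\sigma E = E/(2\sigma)$.

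First, differentiate the two-level Gibbs structure. The inner free energy is $\phi_{\beta_{\max}}(\B{x})=\beta_{\max}^{-1}\log Z_y(\B{x};\B{C})$, so
\[
\partial_\sigma\phi_{\beta_{\max}}(\B{x})=\frac{1}{2\sigma}\langle E(\B{x},\B{Y};\B{C})\rangle_{\B{Y}|\B{x}} .
\]
For the outer free energy $F=-\beta_{\min}^{-1}\log Z(\B{C})$, the outer weight is $\exp(-\beta_{\min}\phi_{\beta_{\max}})$, which gives
\[
\partial_\sigma F=\mathbb{E}_{\B{X}}\bigl[\partial_\sigma\phi_{\beta_{\max}}(\B{X})\bigr]=\frac{1}{2\sigma}\langle E(\B{X},\B{Y};\B{C})\rangle_{\beta_{\max},\beta_{\min}} .
\]
This is exactly the TTBS average in Definition~\ref{def:ttbs}, because $\B{Y}|\B{X}$ is drawn from the inner Gibbs measure. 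Averaging over $\B{C}$, dividing by $L$, and exchanging the limit $L\to\infty$ with $\partial_\sigma$ (justified formally by convexity/concentration of the free energy in $\sigma$, as is standard within the replica framework) yields
\[
e(\beta_{\max},\beta_{\min})=2\sigma\,\partial_\sigma\nu(\beta_{\max},\beta_{\min}).
\]
One sign convention needs checking. Here $F$ is defined through $\exp(-\beta_{\min}\phi)$, with $\phi$ itself a log-partition of $\exp(+\beta_{\max}E)$, and $k<0$. I would track the signs carefully so that the final formula carries an overall $+$. If necessary, I would fix any residual sign by comparing with the replicated representation of Eq.~\eqref{eq:replicated-Z}: there the exponent is $\beta_{\max}\sum_{a,l}E$, and with $k=-\beta_{\min}/\beta_{\max}$ the factor $-1/\beta_{\min}$ in front cancels against $k\beta_{\max}$.

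Second, use the saddle-point representation \eqref{eq:v-extr-g}. Because $\B{\Theta}^\star$ is a stationary point of $g$, the envelope theorem gives
\[
\partial_\sigma\nu=-\frac{1}{\beta_{\min}}\,\partial_\sigma g(\B{\Theta};\B{\Lambda})\big|_{\B{\Theta}^\star}.
\]
Here only explicit $\sigma$-dependence matters, and in \eqref{eq:g-functional} $\sigma$ appears only in the energetic term $\tfrac{\sigma\beta_{\max}^2}{2}\bigl(kQ_xQ_y+k(k-1)Q_xq_1-k^2q_xq_0\bigr)$. The conjugates in Proposition~\ref{prop:finiteT-stationarity} depend on $\sigma$ only implicitly, through stationarity, so they contribute nothing.

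Multiplying out, we get
\[
e=2\sigma\cdot\Bigl(-\tfrac{1}{\beta_{\min}}\Bigr)\cdot\tfrac{\beta_{\max}^2}{2}\,k\bigl(Q_xQ_y+(k-1)Q_xq_1-kq_xq_0\bigr).
\]
Substituting $-k/\beta_{\min}=1/\beta_{\max}$ collapses the prefactor to $\sigma\beta_{\max}$, which yields \eqref{eq:finiteT-payoff-density}.

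The main obstacle is the first step rather than the algebra: establishing the sign of the derivative identity through the nested Gibbs measure, and justifying the interchange of $\partial_\sigma$ with both $L\to\infty$ and the analytic continuations $n\to0$, $k\to-\beta_{\min}/\beta_{\max}$. I would treat the latter at the same conditional level as Claim~\ref{thm:finiteT}. As a cross-check, I would also derive the result directly by inserting $E$ into the replicated moment Eq.~\eqref{eq:replicated-Z}, as $\frac{1}{nk}\sum_{a,l}E(\B{x}^a,\B{y}^{al})$. Gaussian integration by parts turns $\mathbb{E}[C_{ij}\,\cdot]$ into the overlap combination $Q_xQ_y+(k-1)Q_xq_1-kq_xq_0$. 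These three terms correspond to the same replica ($a=b,l=s$), the same outer but different inner replicas (multiplicity $k-1$), and different outer replicas (multiplicity $n-1\to-1$, times $k$).
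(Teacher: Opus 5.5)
Your proposal is correct and follows the paper's own argument. The exact finite-size identity $\partial_\sigma F=\langle E\rangle_{\beta_{\max},\beta_{\min}}/(2\sigma)$, which you derive with the right sign, gives $e=2\sigma\,\partial_\sigma\nu$; applying the envelope rule to the only explicitly $\sigma$-dependent term of $g$ and using $-k\beta_{\max}/\beta_{\min}=1$ then yields the stated formula (your Stein's-lemma cross-check on the replicated moment also reproduces the same overlap combination). One small caution: the convexity justification you cite for swapping $\partial_\sigma$ with $L\to\infty$ does not come for free here, because with $k<0$ the nested $\log Z(\B{C})$ need not be convex in $\sqrt{\sigma}$; the paper, however, also takes this interchange for granted.
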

Claim \ref{thm:finiteT} shows that, in the thermodynamic limit, the nested two-stage relaxation is characterized by a finite set of scalar order parameters $\B{\Theta}$.
These order parameters summarize the geometry of equilibrium. For instance,
$Q_x=\lim_{L\to\infty}\mab{E}_{\B{C}}[\langle \nicefrac{\|\B{X}\|^2\rangle_{\beta_{\max},\beta_{\min}}}{N}]$
and
$Q_y=\lim_{L\to\infty}\mab{E}_{\B{C}}[\langle \nicefrac{\|\B{Y}\|^2\rangle_{\beta_{\max},\beta_{\min}}}{M}]$
represent the typical self-overlaps of mixed strategies under the two-temperature Gibbs measure.
More broadly, once the saddle-point solution $\B{\Theta}^{\ast}$ is obtained, other typical observables can be computed from low-dimensional Gaussian integrals over the effective fields $(z,\eta)$, i.e., as expectations of the corresponding single-site moments.

\subsection{Perfect Rationality: Ordered Zero-Temperature Limit}
\label{sec:ordered-zeroT-main}

To study the min-max game under perfect rationality, we take the ordered zero-temperature limit of the finite-temperature saddle-point equations. 

\begin{claim}
    \label{claim:oztl}
    Consider the thermodynamic limit, followed by the ordered zero-temperature limit.
    Assume the RS and 1RSB ansatz.
    Let $\Phi(a)=\int_{-\infty}^{a}(2\pi)^{-\nicefrac{1}{2}}e^{-\nicefrac{z^{2}}{2}}dz$ denote the standard normal CDF and $\varphi(a)=(2\pi)^{-\nicefrac{1}{2}}e^{-\nicefrac{a^{2}}{2}}$ the corresponding PDF.
    Define the truncated-Gaussian functions
    \begin{equation}
    q(a)=\nicefrac{B(a)}{A(a)^2},~~A(a)=\mab{E}_{Z\sim\mac{N}(0,1)}[(Z+a)_{+}],~~
    B(a)=\mab{E}_{Z\sim\mac{N}(0,1)}[(Z+a)_{+}^2]
    \label{eq:AB-def}
    \end{equation}
    where $(u)_{+} \coloneqq \max\{u,0\}$.
    The saddle is characterized by $(\alpha_x,\alpha_y)\in\mab{R}^{2}$, satisfying
    \begin{align}
        \Phi(\alpha_y)=\gamma \Phi(\alpha_x),~~~\gamma^{\nicefrac{1}{2}}\alpha_x q^{\nicefrac{1}{2}}(\alpha_y)+\alpha_y q^{\nicefrac{1}{2}}(\alpha_x)=0.
    \end{align}
    The average Nash value density is
    \begin{equation}
        \frac{E_0}{L}= \frac{\sigma^{\nicefrac{1}{2}}}{2}\ab(\gamma^{-\nicefrac{1}{4}}\alpha_x q^{\nicefrac{1}{2}}(\alpha_y)-\gamma^{\nicefrac{1}{4}}\alpha_y q^{\nicefrac{1}{2}}(\alpha_x)).
        \label{eq:f-gamma}
    \end{equation}
    Moreover, the equilibrium strategies adhere to the one-site laws
    \begin{equation}
        Z_x,Z_y \stackrel{\text{i.i.d.}}{\sim}\mac{N}(0,1),~~~
        X=\frac{(Z_x+\alpha_x)_{+}}{A(\alpha_x)},~~~
        Y=\frac{(Z_y+\alpha_y)_{+}}{A(\alpha_y)}.
    \label{eq:onesite-law}
    \end{equation}
    In particular, for uniformly random coordinates $i\in\{1,\dots,N\}$ and $j\in\{1,\dots,M\}$, the empirical distributions of the equilibrium components $x_i^\star$ and $y_j^\star$ converge to the laws of $X$ and $Y$, respectively.
\end{claim}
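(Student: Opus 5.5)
We start from the finite-temperature saddle-point system of Claim~\ref{thm:finiteT} and Proposition~\ref{prop:finiteT-stationarity} and take first $\beta_{\max}\to\infty$ and then $\beta_{\min}\to\infty$. Since $k=-\beta_{\min}/\beta_{\max}\to 0^-$ in the first limit, we use the standard 1RSB zero-temperature scaling for the inner replicas. We set $q_1\to Q_y$ with $\beta_{\max}(Q_y-q_1)=\chi_y$ finite, and keep the Parisi-type combination $k\beta_{\max}=-\beta_{\min}$ fixed. The outer RS replicas are treated the same way in the second limit: $\beta_{\min}(Q_x-q_x)=\chi_x$ stays finite, and $q_x\to Q_x$, $q_0\to Q_y$.

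\textbf{Scaling the conjugates.} By \eqref{eq:finiteT-conjugates}, $\hat\chi_1=\gamma^{1/2}\sigma\beta_{\max}^2(Q_x-q_x)$ and $\hat\chi_0=\gamma^{1/2}\sigma\beta_{\max}^2q_x$ both scale like $\beta_{\max}^2$. Hence the $y$ one-site problem $Z_y(z,\eta)$, with $\hat Q_y=0$, is a linear exponential restricted to $y\ge0$. Its $k$-th power, integrated over $\eta$, must therefore be evaluated by a Laplace-type argument. I expect this to be the main obstacle. A uniform prior with a purely linear field has no interior maximizer, so the mass constraint $\int\langle y\rangle=1$ has to be enforced through the Lagrange parameter $m_y$. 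This is where the genuinely quadratic effective cost appears.

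\textbf{Expected route through the obstacle.} After rescaling, $\log\int D\eta\,Z_y^k$ becomes a variational problem over the effective field. Combined with the $k\hat\chi_0$ and $k\hat\chi_1$ terms, I expect it to reduce to $-\beta_{\min}\min_{y\ge0}\bigl[\tfrac{c_y}{2}y^2-(\mu_y+\sqrt{\hat\chi}z)y\bigr]$. Its minimizer is a soft-threshold, $y^\star\propto(z+\alpha_y)_+$. The $x$ side behaves in the same way once $\hat Q_x$, which is of order $\beta_{\min}$ through $k(Q_y-q_1)$, supplies the quadratic term, giving $x^\star\propto(z+\alpha_x)_+$. Here $\alpha_x,\alpha_y$ are the ratios of the Lagrange multipliers to the noise amplitudes.

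\textbf{Closing the equations.} The mass constraints $\int Dz\,\langle x\rangle=1$ and $\int Dz\,\langle y\rangle=1$ fix the normalizations as $A(\alpha_x)$ and $A(\alpha_y)$, which yields \eqref{eq:onesite-law}. The second moments then give $Q_x=q(\alpha_x)$ and $Q_y=q(\alpha_y)$. Differentiating the reduced $g$ with respect to the susceptibilities gives further relations. The $\chi$-stationarity conditions tie the threshold fractions together: the number of active $x$ coordinates, $N\Phi(\alpha_x)$, must match the number of active $y$ coordinates, $M\Phi(\alpha_y)$. This is the familiar support-size balance of Nash equilibria, and it is the source of $\Phi(\alpha_y)=\gamma\Phi(\alpha_x)$. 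Stationarity in the remaining multiplier gives the linear relation $\gamma^{1/2}\alpha_xq^{1/2}(\alpha_y)+\alpha_yq^{1/2}(\alpha_x)=0$. Finally, we evaluate $\nu$, or equivalently $e$ from Proposition~\ref{prop:finiteT-payoff}, in the limit. We substitute the saddle relations and use $\kappa=\sqrt{\sigma/L}$, so that the factors $\gamma^{\pm1/4}$ arise from $\sqrt{N/L}$ and $\sqrt{M/L}$. This gives \eqref{eq:f-gamma}. The convergence of the empirical coordinate distributions follows from the one-site interpretation of the cavity fields $z$.
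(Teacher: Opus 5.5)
\paragraph{Gap in the $y$-site Laplace step.} Your outline follows the paper's route: ordered limit of the finite-temperature saddle, a Laplace principle at each scale, soft-threshold one-site laws, then the two closing equations and the value. But the step you flag as ``the main obstacle'' is where the argument lives, and the mechanism you propose for it is wrong.

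A Lagrange multiplier only shifts the linear field. Enforcing the mass constraint through $m_y$ therefore fixes the threshold and the normalization $A(\alpha_y)$, but it cannot create curvature. The curvature comes from the Gaussian weight of the inner field $\eta$:
\begin{itemize}
\item Since $Q_x-q_x=\chi_x/\beta_{\min}$, one has $\hat\chi_1=\gamma^{1/2}\sigma\beta_{\max}^2\chi_x/\beta_{\min}$. This is \emph{not} of the same order as $\hat\chi_0$.
\item After the first Laplace step, $k\beta_{\max}=-\beta_{\min}$ turns $[Z_y]^k$ into $e^{-\beta_{\min}(\cdot)}$.
\item Substituting $\eta=\sqrt{\beta_{\min}}\,\xi$ makes $\int D\eta$ a second Laplace integral at scale $\beta_{\min}$. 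Its infimal convolution $\inf_\xi\{\xi^2/2+\cdots\}$ gives $-(a(z))_+^2/\bigl(2(\hat Q_y+\hat\Delta)\bigr)$.
\item Hence $y^\star(z)=(a(z))_+/(\hat Q_y+\hat\Delta)$, with $\hat Q_y=0$ and $\hat\Delta=\gamma^{1/2}\sigma\chi_x$.
\end{itemize}
So the maximizer's curvature is the minimizer's susceptibility. Symmetrically, $\hat Q_x=\gamma^{-1/2}\sigma(\chi-\Delta)$, where $\chi=\beta_{\max}(Q_y-q_1)$ (your $\chi_y$) and $\Delta=\beta_{\min}(q_1-q_0)$. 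The one-site problems give $\chi_x=\rho_x/\hat Q_x$ and $\chi-\Delta=\rho_y/(\hat Q_y+\hat\Delta)$. Dividing the two products yields $\rho_y=\gamma\rho_x$.

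Your scaling has no intermediate scale $q_1-q_0=\Delta/\beta_{\min}$. Your $\hat Q_x$ ``through $k(Q_y-q_1)$'' drops the equally large term $\gamma^{-1/2}\sigma\beta_{\max}^2k^2(q_0-q_1)=-\gamma^{-1/2}\sigma\beta_{\min}\Delta$. As a result, the $y$-susceptibility that must pair with $\hat Q_x$, namely $\chi-\Delta$, never appears. As written, the support law is imported from the Nash support-size equality rather than derived from the saddle.

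\paragraph{The balance equation.} There is no ``remaining multiplier'' whose stationarity produces it. It follows from Gaussian integration by parts plus the normalizations, which give, in the paper's rescaled notation,
\[
\hat m_x=\hat Q_x q_x-\hat\chi_x^{(0)}\chi_x,\qquad \hat m_y=(\hat Q_y+\hat\Delta)q_y-\hat\chi(\chi-\Delta).
\]
Inserting $\hat\chi_x^{(0)}=\gamma^{-1/2}\sigma q_y$, $\hat\chi=\gamma^{1/2}\sigma q_x$ and the two curvature relations gives $\gamma^{1/2}\hat m_x=-\gamma^{-1/2}\hat m_y$. This is the relation $\gamma\hat m_x+\hat m_y=0$ that the appendix states without derivation, and it yields the second equation.

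\paragraph{The value formula.} Be careful with your last step too. Evaluate $-\beta_{\min}^{-1}g$ keeping the site weights $N/L=\gamma^{1/2}$ and $M/L=\gamma^{-1/2}$. The same identities then give $E_0/L=\gamma^{1/2}\hat m_x=-\gamma^{-1/2}\hat m_y$. This agrees with complementary slackness, since $\hat m_x$ is the equalized minimizer field $E_0/N$. Explicitly,
\[
\frac{E_0}{L}=\tfrac{\sigma^{1/2}}{2}\bigl(\gamma^{1/4}\alpha_x q^{1/2}(\alpha_y)-\gamma^{-1/4}\alpha_y q^{1/2}(\alpha_x)\bigr),
\]
with the $\gamma$-exponents opposite to those in \eqref{eq:f-gamma}. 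The paper reaches \eqref{eq:f-gamma} from $E_0/L=\tfrac12(\hat m_x-\hat m_y)$, which omits these weights. The two expressions differ by the factor $\tfrac12(\gamma^{1/2}+\gamma^{-1/2})$. They coincide only at $\gamma=1$, and through order $\varepsilon^2$ around it. So ``this gives \eqref{eq:f-gamma}'' is not something a careful execution of your plan will produce.
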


\begin{proposition}
\label{prop:oztl-stats}
The asymptotic support fractions and second moments are provided by
\begin{multline}
    \rho_x \coloneqq \lim_{N\to\infty}\frac{1}{N} \mab{E}_{\B{C}}\ab[\#\{i:\ x_i^\star>0\}]=\Phi(\alpha_{x}),~~~
    \rho_y \coloneqq \lim_{M\to\infty}\frac{1}{M}\mab{E}_{\B{C}}\ab[\#\{j:\ y_j^\star>0\}]=\Phi(\alpha_{y}),\\
    q_x \coloneqq \lim_{N\to\infty}\frac{1}{N}\mab{E}_{\B{C}}\ab[\|\B{x}^{\star}\|^{2}]=q(\alpha_{x}),~~~
    q_y \coloneqq \lim_{M\to\infty}\frac{1}{M}\mab{E}_{\B{C}}[\|\B{y}^\star\|^2]=q(\alpha_{y}).
    \label{eq:rho-def}
\end{multline}
\end{proposition}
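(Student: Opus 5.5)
The plan is to read the four quantities directly off the one-site laws of Claim~\ref{claim:oztl}, together with the ordered zero-temperature limit of the moment self-consistency relations in Proposition~\ref{prop:finiteT-stationarity}. That claim asserts that the empirical distribution of the equilibrium coordinates $x_i^\star$ (resp.\ $y_j^\star$) converges to the law of $X=(Z_x+\alpha_x)_+/A(\alpha_x)$ (resp.\ $Y=(Z_y+\alpha_y)_+/A(\alpha_y)$). Each quantity in the proposition is an empirical average of a single-site function over coordinates: the indicator $\mathbf{1}\{x>0\}$ for the support fractions, and $x^2$ for the second moments. The replica-symmetric computation identifies the instance average $\mathbb{E}_{\B{C}}[\frac1N\sum_i f(x_i^\star)]$ with $\mathbb{E}[f(X)]$, and likewise for $Y$. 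Within the paper's framework this identification is exactly what ``the one-site law'' means: the self-consistency relations such as $Q_x=\int Dz\langle x^2\rangle_z$ express coordinate averages as one-site Gibbs averages. As the temperatures go to zero, the one-site Gibbs measures collapse onto the argmax of the one-site exponent, which is the rectified Gaussian.

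The support fractions come first. Since $A(\alpha_x)>0$ for every finite $\alpha_x$, we have $X>0$ if and only if $Z_x>-\alpha_x$. Hence $\rho_x=\Pr(Z_x>-\alpha_x)=\Phi(\alpha_x)$ by symmetry of the standard normal, and the same argument gives $\rho_y=\Phi(\alpha_y)$. This also serves as a consistency check, because the first saddle equation $\Phi(\alpha_y)=\gamma\Phi(\alpha_x)$ then reads $M\rho_y=N\rho_x$: the supports have equal size, as expected for generic Nash equilibria of nondegenerate bimatrix zero-sum games.

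For the second moments, $\mathbb{E}[X^2]=\mathbb{E}[(Z+\alpha_x)_+^2]/A(\alpha_x)^2=B(\alpha_x)/A(\alpha_x)^2=q(\alpha_x)$, and similarly $\mathbb{E}[Y^2]=q(\alpha_y)$. The normalization is consistent with the scaled simplices: $\mathbb{E}[X]=A(\alpha_x)/A(\alpha_x)=1$ matches $\B{x}^\top\1=N$ and the relation $\int Dz\langle x\rangle_z=1$. Closed forms are also available if needed: $A(a)=a\Phi(a)+\varphi(a)$ and $B(a)=(1+a^2)\Phi(a)+a\varphi(a)$, obtained by direct integration against $\varphi$. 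In the zero-temperature limit the overlaps degenerate, with $Q_x-q_x\to0$ after rescaling (and similarly for $y$), so $q_x$ here coincides with the self-overlap $Q_x$ of Proposition~\ref{prop:finiteT-stationarity}.

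The main obstacle is passing from convergence of the empirical law to convergence of expectations. For the indicator, the test function is discontinuous at $0$, and the limit law of $X$ has an atom there. Convergence still holds because the discontinuity set $\{0\}$ can be handled directly: the events $\{x_i^\star>0\}$ and $\{x_i^\star\ge\varepsilon\}$ differ by a set whose limiting mass $\Pr(0<X<\varepsilon)$ vanishes as $\varepsilon\to0$. A sandwich between continuous approximations of the indicator therefore gives the limit. For $x^2$, which is unbounded, I would rely on uniform integrability supplied by the Gaussian tails of the one-site law, i.e., convergence of the fourth moment within the same replica computation. Alternatively, I would take the second-moment identity directly as the zero-temperature limit of the self-consistency equation for $Q_x$. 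Either way, the statement is conditional on the replica and ansatz assumptions already invoked in Claim~\ref{claim:oztl}.
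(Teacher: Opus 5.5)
Your computation matches the paper's. In the ordered-limit derivation (Appendix~\ref{app:oztl-xsite} and Appendix~\ref{app:oztl-ysite}), the one-site measure is collapsed by the Laplace principle onto $x^\star(z)=(h_x(z)/\hat{Q}_x)_+$. The simplex constraint normalizes this to $(z+\alpha_x)_+/A(\alpha_x)$, and then $\rho_x=\Pr(x^\star(Z)>0)=\Phi(\alpha_x)$ and $q_x=\int Dz\,(x^\star(z))^2=B(\alpha_x)/A(\alpha_x)^2$. The same holds for $y$ once the $\eta$-integral is resolved. Your second-moment step is fine, whether via uniform integrability or via the zero-temperature limit of the self-consistency relation, which is what the paper does. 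Your remark that $\Phi(\alpha_y)=\gamma\Phi(\alpha_x)$ means $M\rho_y=N\rho_x$, i.e.\ equal support sizes as in nondegenerate games, is a correct sanity check.

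The step that fails is the one you single out as the main obstacle: obtaining $\rho_x$ from weak convergence of the empirical law of the $x_i^\star$. The limit law has an atom of mass $1-\Phi(\alpha_x)$ at $0$, and $0$ is a boundary point of $(0,\varepsilon)$. So the empirical fraction of $\{i:0<x_i^\star<\varepsilon\}$ is not controlled by $\Pr(0<X<\varepsilon)$; it can absorb part or all of the atom. Your sandwich has only a lower half. The bound $\min(x/\varepsilon,1)\le\mathbf{1}\{x>0\}$ gives $\liminf_N\rho_{x,N}\ge\Phi(\alpha_x)$ for the finite-$N$ support fraction $\rho_{x,N}$. But every continuous $g\ge\mathbf{1}\{x>0\}$ on $[0,\infty)$ has $g(0)\ge 1$, so it yields only the trivial upper bound $1$. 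Concretely, replace every zero coordinate of $\B{x}^\star$ by $N^{-2}$ and renormalize. The weak limit of the empirical law is unchanged, but the support fraction becomes $1$. So the support-fraction statement does not follow from the empirical-law statement of Claim~\ref{claim:oztl}; the same issue arises for $y$.

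The remedy is to apply the indicator to the effective field rather than to $x$. In the one-site problem $\sup_{x\ge 0}(h_x x-\hat{Q}_x x^2/2)$, the maximizer vanishes exactly when $h_x(z)\le 0$. Hence $\mathbf{1}\{x^\star(Z)>0\}=\mathbf{1}\{Z>-\alpha_x\}$, and the Gaussian field has no atom at the threshold. For $y$ the analogous field is $a(z)=\hat{m}_y+\sqrt{\hat{\chi}}z$. This is how the paper obtains $\rho_x=\Phi(\alpha_x)$, and it is the same fraction that enters the susceptibility $\chi_x=\rho_x/\hat{Q}_x$ used for the support-matching law. You have two options. Either read the active fraction directly off the one-site representation as part of the replica identification, as the paper does. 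Or, for a genuine transfer argument, establish convergence of the empirical law of the cavity fields (atomless at $0$) together with the exact relation $x_i^\star>0\iff h_i>0$; convergence of the empirical law of $\B{x}^\star$ alone is not enough.
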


The support fractions $(\rho_x,\rho_y)$ quantify how many pure strategies receive positive weight at equilibrium; they are the limiting fractions of active coordinates in the scaled mixed strategies $(\B{x}^\star,\B{y}^\star)$.
The second moments $(q_x,q_y)$ capture the concentration and heterogeneity of the equilibrium weights; larger values of $q_x$ and $q_y$ indicate that the minimizer and maximizer allocate probability mass more unevenly over their active supports.

\section{Perturbative Analysis and Linear Response}
\label{sec:perturbative-summaries-main}

Building on the above claims, we conduct perturbative expansions in both the payoff randomness and the strategy-number aspect ratio $\gamma$ to characterize typical behavior.
The saddle-point representations in Claims~\ref{thm:finiteT} and~\ref{claim:oztl} provide a convenient calculus: once the macroscopic order parameters satisfy the stationarity conditions, we can systematically derive perturbation series and linear-response relationships concerning the model parameters. Specifically, at a stationary point $\B{\Theta}^\star(\B{\Lambda})$ satisfying $\partial_{\B{\Theta}} g(\B{\Theta}^\star(\B{\Lambda});\B{\Lambda})=0$, we obtain
\begin{equation}
    \partial_{\B{\Lambda}} \nu(\beta_{\max},\beta_{\min})
    =
    -\left.\beta_{\min}^{-1}\partial_{\B{\Lambda}} g(\B{\Theta};\B{\Lambda})\right|_{\B{\Theta}=\B{\Theta}^\star(\B{\Lambda})}.
    \label{eq:envelope-main}
\end{equation}
Eq.~\eqref{eq:envelope-main} does not require the differentiation of $\B{\Theta}^\star(\B{\Lambda})$.

\subsection{Perturbation in Payoff Randomness}
\label{sec:main-sigma-expansion}

We investigate how typical behavior changes as a function of the payoff-randomness strength $\sigma$. This dependence is crucial: in many applications, $\sigma$ signifies noise, heterogeneity, or uncertainty in realized payoffs. Analyzing the small-$\sigma$ regime provides a controlled baseline, allowing us to quantify when and how bounded-rational strategies diverge from nearly uniform play.

\begin{proposition}
\label{prop:sigma-expansion-main}
Fix $\gamma>0$ and $\beta_{\max},\beta_{\min}>0$.
In the small-$\sigma$ regime, the typical finite-temperature free-energy density admits the expansion
\begin{equation}
    \nu(\beta_{\max},\beta_{\min};\sigma,\gamma)
    = v_{\mathrm{ent}}(\beta_{\max},\beta_{\min};\gamma) +
    \sigma\ab(\beta_{\max}-\frac{\beta_{\min}}{2})
    + \sigma^2 v_2(\beta_{\max},\beta_{\min};\gamma)
    + \mac{O}(\sigma^3),
    \label{eq:v-sigma2-main}
\end{equation}
where $v_{\mathrm{ent}}(\beta_{\max},\beta_{\min};\gamma)
= - \gamma^{\nicefrac{1}{2}} \beta_{\min}^{-1} + \gamma^{-\nicefrac{1}{2}} \beta_{\max}^{-1}$, and the coefficient $v_2(\beta_{\max},\beta_{\min};\gamma)$ is provided in closed form in Appendix~\ref{app:sigma-expansion}.
Moreover, the corresponding typical energy density satisfies
\begin{equation}
    e(\beta_{\max},\beta_{\min};\sigma,\gamma)
    = (2\beta_{\max}-\beta_{\min})\sigma
    + 4 \sigma^2 v_2(\beta_{\max},\beta_{\min};\gamma) + \mac{O}(\sigma^3).
\label{eq:e-sigma2-main}
\end{equation}
\end{proposition}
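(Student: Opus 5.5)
We expand the saddle-point representation of Claim~\ref{thm:finiteT} around $\sigma=0$ and use the envelope relation \eqref{eq:envelope-main}. At $\sigma=0$ the payoff coupling vanishes, so by Proposition~\ref{prop:finiteT-stationarity} all conjugates $\hat{Q}_x,\hat{\chi}_x,\hat{\chi}_0,\hat{\chi}_1$ vanish. The one-site measures then reduce to exponential laws $Z_x=\int_0^\infty e^{m_x x}dx=-1/m_x$ and $Z_y=-1/m_y$. The moment conditions $\langle x\rangle=\langle y\rangle=1$ force $m_x=m_y=-1$, which gives $Q_x=Q_y=2$ and $q_x=q_0=q_1=1$. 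Plugging these into $g$, the terms $-\gamma^{1/2}m_x-k\gamma^{-1/2}m_y$ give $\gamma^{1/2}+k\gamma^{-1/2}$, and the logarithmic terms vanish. With $k=-\beta_{\min}/\beta_{\max}$ we get $-\beta_{\min}^{-1}g=-\gamma^{1/2}\beta_{\min}^{-1}+\gamma^{-1/2}\beta_{\max}^{-1}$, which is $v_{\mathrm{ent}}$. (We fix the additive constant by the convention that the log-volume of the simplex is absorbed, as in the appendix derivation.)

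For the first-order term we apply the envelope theorem with $\Lambda=\sigma$. We have $\partial_\sigma g=\frac{\beta_{\max}^2}{2}(kQ_xQ_y+k(k-1)Q_xq_1-k^2q_xq_0)$, evaluated at the $\sigma=0$ saddle. This gives $\frac{\beta_{\max}^2}{2}(4k+2k(k-1)-k^2)=\frac{\beta_{\max}^2}{2}(k^2+2k)$. Multiplying by $-\beta_{\min}^{-1}$ and substituting $k$ yields $\frac{\beta_{\max}}{2}(2-\beta_{\min}/\beta_{\max})\cdot\beta_{\max}/\beta_{\max}$, i.e.\ $\sigma(\beta_{\max}-\beta_{\min}/2)$, as claimed.

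The second-order coefficient is the main obstacle, because it requires the first-order response $\partial_\sigma\Theta^\star$. We will linearize the stationarity equations. The conjugates are $O(\sigma)$ by \eqref{eq:finiteT-conjugates}, and each is proportional to a zeroth-order order parameter. We then expand the one-site integrals to first order in $\hat{Q}_x$ and $\hat{\chi}$ (equivalently, to second order in the Gaussian fields). Truncated exponential moments give explicit polynomial corrections: for example, $\int Dz\log Z_x$ contributes $\frac{\hat{\chi}_x}{2}\mathrm{Var}(x)-\frac{\hat{Q}_x}{2}\langle x^2\rangle$, evaluated in the exponential law. The $\eta$-tilted average with power $k$ contributes $\frac{k\hat{\chi}_1}{2}\mathrm{Var}+\frac{k^2\hat{\chi}_0}{2}\mathrm{Var}$-type terms. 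Rather than solving for the shifted order parameters, we will use the fact that $g$ is stationary. This means the $O(\sigma^2)$ term equals the second-order term of $g$ evaluated with the linear-response conjugates, which reduces to a quadratic form in $(\hat{Q}_x,\hat{\chi}_x,\hat{\chi}_0,\hat{\chi}_1)$ with coefficients given by the exponential-law cumulants. Collecting these terms defines $v_2$.

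Finally, for the energy density we use $e=\partial_{\beta_{\min}}(\beta_{\min}\nu)$-type thermodynamic identities. More directly, Proposition~\ref{prop:finiteT-payoff} gives $e=\sigma\beta_{\max}(\cdots)$, and comparing with $\partial_\sigma g$ shows $e=-\frac{2\beta_{\min}}{\sigma\beta_{\max}k}\cdot\sigma\,\partial_\sigma\nu\cdot\frac{\beta_{\max}}{\beta_{\min}}\cdots$. Concretely, $\sigma\partial_\sigma\nu=-\beta_{\min}^{-1}\sigma\partial_\sigma g=-\frac{k\beta_{\max}}{2\beta_{\min}}e=\frac{e}{2}$. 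Hence $e=2\sigma\partial_\sigma\nu$. Differentiating the expansion term by term gives $e=2\sigma(\beta_{\max}-\beta_{\min}/2)+4\sigma^2v_2+O(\sigma^3)$, which matches the claimed energy expansion.
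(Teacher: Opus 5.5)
Your argument is correct, and two of its steps take a different route from the paper.

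\textbf{Entropy term.} For $v_{\mathrm{ent}}$ the paper does not use the saddle functional. At $\sigma=0$ it writes $Z=\mathrm{Vol}(\mathcal{X}_N)\,\mathrm{Vol}(\mathcal{Y}_M)^k$ exactly and applies Stirling. You instead evaluate $g$ at the $\sigma=0$ saddle ($m_x=m_y=-1$, exponential one-site laws). The two agree identically, since $\mathrm{extr}_m[-m+\log\int_0^\infty e^{mx}dx]=1=\lim N^{-1}\log\mathrm{Vol}(\mathcal{X}_N)$, so no additive constant needs to be ``absorbed''. The paper's version is exact and independent of the ansatz; yours is shorter and stays inside Claim~\ref{thm:finiteT}.

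\textbf{Linear term and energy.} The linear coefficient (envelope with $Q=2$, $q=1$) and the energy relation $e=2\sigma\partial_\sigma\nu$ are as in the paper. The identity you actually use is the right one. Your passing mention of $\partial_{\beta_{\min}}(\beta_{\min}\nu)$ would instead give the average soft-max value $\langle\phi_{\beta_{\max}}\rangle$, not the payoff.

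\textbf{Quadratic term.} For $v_2$ the paper integrates the envelope identity. That requires the first-order slopes of all five overlaps $(Q_x,q_x,Q_y,q_0,q_1)$, from which it assembles $P^{(1)}$. You instead use that, at a stationary point, the $O(\sigma^2)$ part of $g$ equals the second-order Taylor part of the one-site terms, evaluated at the first-order conjugates. Those conjugates come directly from the conjugate relations at the $\sigma=0$ overlaps: $\hat\chi_x=sk^2$ and $\hat Q_x=-sk$ with $s=\sigma\beta_{\max}^2\gamma^{-1/2}$, $\hat\chi_0=\hat\chi_1=\gamma^{1/2}\sigma\beta_{\max}^2$, and $\hat Q_y=0$. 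This bypasses the linearized moment equations entirely. The cost is that you do not get the overlap slopes, which the paper obtains as a by-product.

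Two points in your execution must be corrected, or the coefficient comes out wrong.

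First, expanding the one-site integrals only to first order in the conjugates (second order in the fields) yields nothing. Those linear terms are exactly cancelled by the bilinear terms of $g$ at the $\sigma=0$ saddle, so you would get $v_2=0$. You need second order in the conjugates, i.e.\ fourth-order cumulants in the fields, including the terms generated by the $\eta$-tilt $[Z_y]^k$. Incidentally, at linear order the $z$-field contributes $\frac{\hat\chi_0}{2}k\,\mathrm{Var}(y)$ and the $\eta$-field contributes $\frac{\hat\chi_1}{2}(k\,\mathrm{Var}(y)+k^2\langle y\rangle^2)$, not the labeling you wrote.

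Second, $m_x$ and $m_y$ shift at $O(\sigma)$: $\delta m_x=2\hat Q_x-\hat\chi_x$ and $\delta m_y=-\hat\chi_0-(1+k)\hat\chi_1$. They must be re-extremized, and this enters the quadratic form as a Schur complement. With both corrections, the constrained quadratic parts are
\begin{itemize}
\item $\frac14\hat\chi_x^2-\hat Q_x\hat\chi_x+\frac12\hat Q_x^2$, with weight $\gamma^{1/2}$;
\item $\frac{k}{4}\hat\chi_0^2+\frac{k(1+k)}{2}\hat\chi_0\hat\chi_1+\frac{k(1+k)}{4}\hat\chi_1^2$, with weight $\gamma^{-1/2}$.
\end{itemize}
Substituting the conjugates gives the $\sigma^2$ coefficient of $g$ as $\frac{\beta_{\max}^4}{4}[\gamma^{-1/2}k^2(k^2+4k+2)+\gamma^{1/2}(4k+3k^2)]=\frac{\beta_{\max}^2}{4}P^{(1)}$. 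This is exactly the paper's $v_2=-\frac{\beta_{\max}^2}{4\beta_{\min}}P^{(1)}$.
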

A detailed derivation is provided in Appendix~\ref{app:sigma-expansion}.
The observed asymmetry directly results from the nested TTBS construction: $\beta_{\max}$ sharpens the inner soft-max over $\B{Y}$, amplifying payoff peaks at fixed $\B{X}$, while $\beta_{\min}$ sharpens the outer soft-min over $\B{X}$, penalizing large continuation values. These distinct roles yield different prefactors in the expansion.
In the weak-disorder regime, the $\mac{O}(\sigma)$ term is independent of $\gamma$, as both players remain close to the uniform $\sigma=0$ strategies; dependence on $\gamma$ arises only at $\mac{O}(\sigma^2)$, when the strategies begin to deviate significantly from uniformity.
Accordingly, the linear coefficient $\beta_{\max}-(\nicefrac{\beta_{\min}}{2})$ quantifies an asymmetric rationality effect: increasing $\beta_{\max}$ raises the typical value, while increasing $\beta_{\min}$ reduces it.

\subsection{Perturbation around Equal-Strategy Regime}
\label{sec:main-gamma-expansion}

We examine how the Nash value varies with the strategy-count aspect ratio $\gamma$ near the balanced point.

\begin{proposition}
    \label{prop:gamma-expansion-main}
    In the ordered zero-temperature limit, let $\gamma = 1+\varepsilon$ equal $|\varepsilon|\ll 1$. The order parameters and Nash value density can be expressed as
    \begin{equation}
        \alpha_x \approx -\sqrt{\frac{\pi}{8}} \varepsilon + \frac{\sqrt{2\pi}}{16}(5-\pi)\varepsilon^2,~~\alpha_y \approx +\sqrt{\frac{\pi}{8}}\varepsilon + \frac{\sqrt{2\pi}}{16}(1-\pi)\varepsilon^2~~,\frac{E_0}{L}
        = -\frac{\pi}{2} \sqrt{\frac{\sigma}{2}} \varepsilon + \mac{O}(\varepsilon^2).
        \label{eq:E0-linear-main}
    \end{equation}
\end{proposition}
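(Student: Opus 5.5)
The plan is to expand the two saddle equations of Claim~\ref{claim:oztl} around the balanced point $\gamma=1$, solve order by order in $\varepsilon$, and substitute the result into the value formula \eqref{eq:f-gamma}.

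\textbf{Step 1: the base point.} At $\varepsilon=0$ the second equation reads $\alpha_x q^{1/2}(\alpha_y)+\alpha_y q^{1/2}(\alpha_x)=0$. Since $q>0$, this forces $\alpha_x$ and $\alpha_y$ to have opposite signs (or both vanish). The first equation, $\Phi(\alpha_y)=\Phi(\alpha_x)$, forces $\alpha_x=\alpha_y$ because $\Phi$ is strictly increasing. Together these give $\alpha_x=\alpha_y=0$, so the value \eqref{eq:f-gamma} vanishes at $\varepsilon=0$.

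To justify a regular expansion, I will check nondegeneracy via the implicit function theorem: the Jacobian of the two equations in $(\alpha_x,\alpha_y)$ at $(0,0)$ is
\begin{equation*}
\begin{pmatrix} -\varphi(0) & \varphi(0)\\ q^{1/2}(0) & q^{1/2}(0)\end{pmatrix},
\end{equation*}
which is invertible. Hence $\alpha_x(\varepsilon)$ and $\alpha_y(\varepsilon)$ are smooth and can be expanded as $\alpha_{x,y}=a_{x,y}\varepsilon+b_{x,y}\varepsilon^2+\mathcal{O}(\varepsilon^3)$.

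\textbf{Step 2: Taylor data at $a=0$.} I need the following values, using $\Phi(a)=\tfrac12+a\varphi(0)+\mathcal{O}(a^3)$ with $\varphi(0)=(2\pi)^{-1/2}$:
\begin{itemize}
\item $A(0)=(2\pi)^{-1/2}$ and $B(0)=\tfrac12$.
\item $A'(a)=\Phi(a)$ and $B'(a)=2A(a)$, which follow by differentiating under the expectation.
\item Consequently $q(0)=\pi$, and $q'(0)=B'(0)/A(0)^2-2B(0)A'(0)/A(0)^3$, which gives an explicit constant.
\item The expansion of $q^{1/2}$ to first order in $a$.
\end{itemize}

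\textbf{Step 3: first-order coefficients.}
\begin{itemize}
\item The second equation at $\mathcal{O}(\varepsilon)$ gives $\sqrt{\pi}\,(a_x+a_y)=0$, hence $a_y=-a_x$.
\item The first equation at $\mathcal{O}(\varepsilon)$ gives $\varphi(0)\,a_y=\tfrac12+\varphi(0)\,a_x$.
\end{itemize}
Solving these yields $a_y=-a_x=\sqrt{2\pi}/4=\sqrt{\pi/8}$, which are the stated linear coefficients.

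\textbf{Step 4: second-order coefficients.} Collecting $\mathcal{O}(\varepsilon^2)$ terms produces a linear system for $(b_x,b_y)$ with the same Jacobian as in Step 1. Its inhomogeneous terms come from three sources:
\begin{itemize}
\item the factor $\gamma^{1/2}\approx 1+\varepsilon/2$ multiplying $\alpha_x q^{1/2}(\alpha_y)$;
\item the cross terms $\alpha_x\alpha_y\,(q^{1/2})'(0)$ in the second equation;
\item the term $\varepsilon\,\varphi(0)\,a_x\varepsilon$ coming from $\gamma\Phi(\alpha_x)$ in the first equation (the quadratic term of $\Phi$ vanishes, since $\varphi'(0)=0$).
\end{itemize}
This bookkeeping, in particular getting $q'(0)$ and the $\gamma^{1/2}$ correction right, is the only error-prone step. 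I expect it to reproduce the stated coefficients $\tfrac{\sqrt{2\pi}}{16}(5-\pi)$ and $\tfrac{\sqrt{2\pi}}{16}(1-\pi)$, and I will check them by confirming that both equations hold to $\mathcal{O}(\varepsilon^2)$.

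\textbf{Step 5: the value.} Substituting into \eqref{eq:f-gamma} with $q^{1/2}=\sqrt{\pi}+\mathcal{O}(\varepsilon)$ and $\gamma^{\pm1/4}=1+\mathcal{O}(\varepsilon)$, all corrections enter only at $\mathcal{O}(\varepsilon^2)$ because $\alpha_{x,y}=\mathcal{O}(\varepsilon)$. The leading term is therefore
\begin{equation*}
\frac{E_0}{L}=\frac{\sqrt{\sigma}}{2}\sqrt{\pi}\,(a_x-a_y)\,\varepsilon+\mathcal{O}(\varepsilon^2)=-\sqrt{\sigma}\,\sqrt{\pi}\,\sqrt{\pi/8}\,\varepsilon+\mathcal{O}(\varepsilon^2)=-\frac{\pi}{2}\sqrt{\frac{\sigma}{2}}\,\varepsilon+\mathcal{O}(\varepsilon^2),
\end{equation*}
as claimed.
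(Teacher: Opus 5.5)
Your proposal is correct and follows essentially the same route as the paper's proof: expand both saddle equations of Claim~\ref{claim:oztl} about $(\gamma,\alpha_x,\alpha_y)=(1,0,0)$ using $q(0)=\pi$ and $q'(0)=-\sqrt{2\pi}(\pi-2)$, match orders in $\varepsilon$, and substitute into \eqref{eq:f-gamma}; your base-point uniqueness argument and implicit-function-theorem check add a bit of rigor beyond the paper's bare expansion ansatz. You left the second-order bookkeeping uncomputed, but it does close: the balance equation gives $b_x+b_y=\frac{\sqrt{2\pi}}{8}(3-\pi)$, and together with $b_y-b_x=a_x$ from the support equation this yields exactly $b_x=\frac{\sqrt{2\pi}}{16}(5-\pi)$ and $b_y=\frac{\sqrt{2\pi}}{16}(1-\pi)$.
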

The derivation is provided in Appendix~\ref{app:gamma-expansion}.
At the balanced point $\gamma=1$, the two players exhibit comparable degrees of freedom, resulting in instances that exhibit no systematic advantage, with the value converging near zero.
A small perturbation away from $\gamma=1$ shifts the equilibrium support: the player with the higher-dimensional strategy space gains additional flexibility, yielding an advantage that increases linearly with the dimensional imbalance.

\section{Numerical Comparison with Finite Strategy Games}
\label{sec:numerical-experiments}

We compare the ordered zero-temperature RS predictions in Claim \ref{claim:oztl} for the standard Nash equilibrium with those for finite games.
On the theoretical side, we numerically solve the two scalar RS equations in Claim \ref{claim:oztl} and utilize the resulting parameters to evaluate the thermodynamic prediction Eq.~\eqref{eq:f-gamma} for the normalized game value.
On the empirical side, for each sampled payoff matrix, we compute the exact min-max value and the corresponding equilibrium strategies by solving the primal-dual linear program detailed in Appendix~\ref{app:numerics}.
We also provide a finite-temperature comparison in Appendix~\ref{sec:additional-comparison}, where we find close agreement with the replica prediction across temperatures and aspect ratios.
We fix $M=200$ and $\sigma=1$ while varying the aspect ratio $\gamma$ by setting $N=\gamma M$. We report the normalized value $v=\nicefrac{E_{0}}{L}$, the support fractions $\rho_x$ and $\rho_y$, and the second moments
$q_x=\mab{E}_{\B{C}}[\nicefrac{\|\B{x}\|^{2}}{N}]$ and $q_y=\mab{E}_{\B{C}}[\nicefrac{\|\B{y}\|^{2}}{M}]$ of the rescaled equilibrium strategies.
Figure~\ref{fig:zeroT_three_panels} demonstrates that the thermodynamic RS curves closely track the disorder-averaged LP measurements for all three
observables.
Although derived under the $N,M\to\infty$ limit, the theory yields quantitatively accurate predictions at $M=200$.
The value $\nu(\gamma)$ approaches zero as $\gamma=1$ and changes sign as the relative dimensionality flips, consistent with the intuition that a player with more degrees of freedom can typically secure a more favorable outcome.
The predicted support-size matching relation $\rho_y \approx \gamma \rho_x$ is also evident in the finite-size solutions, indicating that typical equilibrium are sparse and that the effective number of active pure strategies systematically adapts to the aspect ratio.

\begin{figure}
    \centering
    \includegraphics[width=\linewidth]{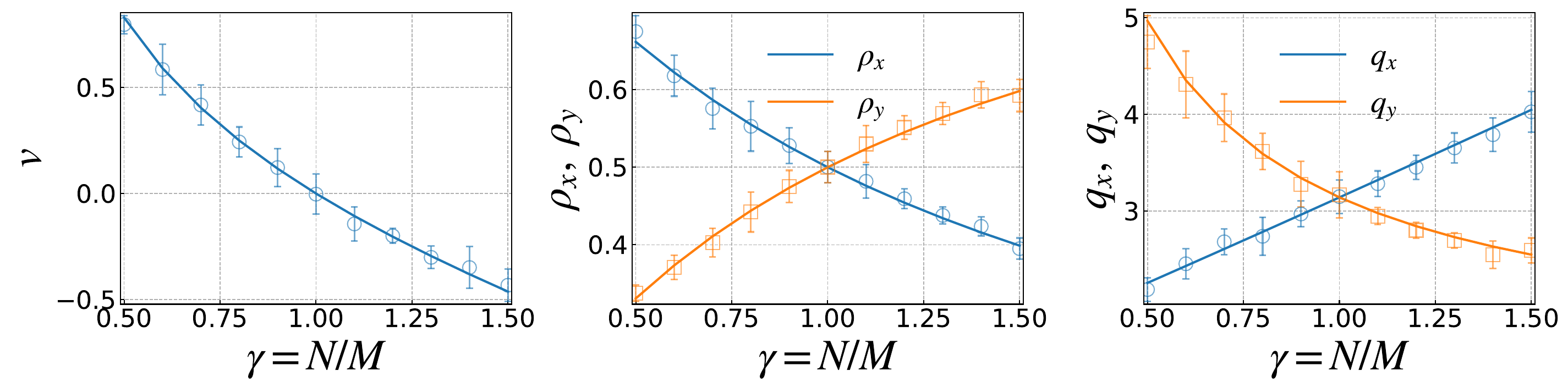}
    \caption{Comparison of theoretical predictions with finite-size LP equilibrium at $M=200$.
    Left: normalized Nash value $\nu=\nicefrac{E_0}{L}$ as a function of the aspect ratio $\gamma=\nicefrac{N}{M}$.
    Middle: support fractions $\rho_x$ and $\rho_y$.
    Right: second moments $q_x=\mab{E}_{\B{C}}[\nicefrac{\|\B{x}\|^{2}}{N}]$ and $q_y=\mab{E}_{\B{C}}[\nicefrac{\|\B{y}\|^{2}}{M}]$ of the rescaled equilibrium strategies.
    Markers denote disorder-averaged measurements obtained from the LP solver, with error bars showing standard errors over 10 random seeds; solid curves show the corresponding RS predictions from Claim~\ref{claim:oztl}.}
    \label{fig:zeroT_three_panels}
\end{figure}

\section{Related Work}
\label{sec:related-work}

\paragraph{Relaxing Perfect Rationality in Min-Max Games.}
A broad class of bounded rationality models replaces hard best responses with soft response rules. In finite games with discrete action sets, this perspective yields log-sum-exp smoothings and logit choice, closely related to quantal response equilibrium and logit-based evolutionary or learning dynamics \citep{McKelveyPalfrey1995QRE,GoereeHoltPalfrey2005RegularQRE,Blume1993LogitDynamics}. A principled foundation arises from the Gibbs variational principle: soft responses emerge when an agent maximizes expected payoff while incurring an explicit information-processing cost, with the inverse temperature parameter governing decision precision and resource expenditure \citep{OrtegaBraun2013ThermodynamicsDecision}.
This temperature-based interpretation aligns with stochastic-choice and rational-inattention formulations \citep{Sims2003RationalInattention,MatejkaMcKay2015RI}, underpinning smooth learning dynamics based on logit-type updates \citep{HofbauerSandholm2002StochasticFP}. Similarly, other significant deviations from perfect rationality emphasize distinct mechanisms, such as limited strategic depth, systematic belief misspecification, and explicit computational constraints; these models are not always representable by temperature parameters \citep{StahlWilson1995LevelK,CamererHoChong2004CH,EysterRabin2005Cursed,Rubinstein1986FiniteAutomata,HalpernPass2015AlgorithmicRationality}.
Our two-temperature formulation fits within the temperature-based framework, allowing for typical-case equilibrium predictions in random games. A detailed relationship to existing relaxed-game frameworks is provided in Appendix \ref{app:bounded_rationality}.

\paragraph{Replica Methods and Chain-of-Replicas.}
The replica method is a central heuristic for typical-case analysis in disordered systems \citep{edwards1975theory,mezard1987spin,mezard2009information} and has increasingly influenced modern learning theory through high-dimensional asymptotics.
In statistical physics, the chain-of-replicas construction provides an equilibrium route to sequential structure; it was originally introduced to represent slow relaxational dynamics and capture long-time limits \citep{FranzParisi2013JSTAT,ZdeborovaKrzakala2010PRB,SunCrisantiKrzakalaLeuzziZdeborova2012JSTAT,KrzakalaKurchan2007PRE}.
Recently, replica-chain concepts have been adapted for multi-stage and iterative learning procedures, where each stage addresses an optimization problem based on earlier stages, as traditional dynamical formalisms are not directly applicable. Representative examples include replica-chain analyses of alternating minimization and other staged pipelines, such as distillation, transfer learning, self-training, and generative adversarial networks \citep{OkajimaTakahashi2025JSTAT,pmlr-v145-saglietti22a,OkajimaObuchi2025TMLR,Takahashi2022ArXiv,takanami2025the, IchikawaHukushima2025TMLR}.
In our min-max setting, the analysis yields a nested replica structure that can be interpreted as a short replica chain connecting two thermodynamic layers. 
We leverage this structure to characterize typical equilibrium rather than derive an explicit time-evolution description.

\section{Conclusion}
\label{sec:conclusion}
This work examines strategic-form zero-sum min-max games from a typical-case perspective under bounded rationality. We introduce thermal min-max games, a two-temperature thermodynamic relaxation where each player is assigned an inverse temperature controlling the sharpness of their response. This leads to an asymmetric yet interpretable model that smoothly interpolates between diffuse play and near-best responses, illustrating how heterogeneous computational resources shape equilibrium behavior.
Building on this formulation, we develop a two-temperature nested replica framework that renders typical-case analysis tractable in the large-strategy regime. For Gaussian payoff matrices, the resulting saddle-point characterization provides explicit predictions for the typical equilibrium value and mixed-strategy statistics, including concentration and sparsity, as functions of the two rationality parameters, the strategy-count aspect ratio, and payoff randomness. Taking the ordered zero-temperature limit recovers the classical Nash min-max value and provides a low-dimensional description of typical equilibria.
Numerical experiments demonstrate that these asymptotic predictions closely match equilibria in finite games of moderate size, highlighting the practical utility of the thermodynamic approximation. More broadly, this framework offers a principled toolkit for examining \emph{universality in games}, enabling systematic comparisons across payoff ensembles and model variants, and identifying robust equilibrium features as distributional details change.

\section*{Acknowledgements}
We thank Koji Hukushima, Kai Nakaishi, and Yasushi Nagano for their valuable comments and discussions.

\bibliographystyle{unsrtnat}
\bibliography{ref}

\appendix
\newpage

\section{Derivation}
\label{app-sec:derivation}

\subsection{Replicated partition function and Gaussian average}

Define
\begin{equation}
Z(\B{C})=\int_{\mac{X}_N}d \B{x}[Z_y(\B{x};\B{C})]^{k},
~~
Z_y(\B{x};\B{C})=\int_{\mac{Y}_M}d\B{y}\exp\ab(\beta_{\max}E(\B{x},\B{y};\B{C})).
\end{equation}
For the replica calculation, temporarily assume $k\in\mab{N}$. Replicating the $\B{x}$-integral $n$ times and the $\B{y}$-integral $k$ times within each replica gives
\begin{multline}
\mab{E}[Z^n]
=
\int\ab(\prod_{a=1}^n d\B{x}^a\1\{\B{x}^a\in\mac{X}_N\})
\ab(\prod_{a=1}^{n}\prod_{l=1}^{k} d\B{y}^{al}\1\{\B{y}^{al}\in\mac{Y}_M\})\\
\times \mab{E}\ab[\exp\ab(\beta_{\max}\sum_{a=1}^n\sum_{l=1}^k E(\B{x}^a,\B{y}^{al};\B{C}))].    
\end{multline}

For the Gaussian bilinear game,
\begin{equation}
\sum_{a=1}^n\sum_{l=1}^k E(\B{x}^a,\B{y}^{al};\B{C})
=
\sqrt{\frac{\sigma}{L}}\sum_{i=1}^N\sum_{j=1}^M C_{ij}
\ab(\sum_{a=1}^n\sum_{l=1}^k x_i^a y_j^{al}).
\end{equation}
Evaluating Gaussian integral with
$A_{ij}=\beta_{\max}\sqrt{\sigma/L}\sum_{a,l}x_i^a y_j^{al}$ yields
\begin{equation}
\mab{E}_{\B{C}}\ab[\exp\ab(\beta_{\max}\sqrt{\frac{\sigma}{L}}
\sum_{i=1}^N\sum_{j=1}^M C_{ij}\sum_{a=1}^n\sum_{l=1}^k x_i^a y_j^{al})]
=
\exp\ab(\frac{\sigma\beta_{\max}^2}{2L}\sum_{i=1}^N\sum_{j=1}^M
\ab(\sum_{a=1}^n\sum_{l=1}^k x_i^a y_j^{al})^2).
\end{equation}
Expanding the square gives
\begin{equation}
\sum_{i=1}^N\sum_{j=1}^M
\ab(\sum_{a=1}^n\sum_{l=1}^k x_i^a y_j^{al})^2
=
\sum_{a=1}^n\sum_{b=1}^n\sum_{l=1}^k\sum_{s=1}^k
\ab(\sum_{i=1}^N x_i^a x_i^b)
\ab(\sum_{j=1}^M y_j^{al}y_j^{bs}).
\end{equation}

\subsection{Overlap order parameters and site factorization}

Introduce the overlaps
\begin{equation}
Q^x_{ab}=\frac{1}{N}\sum_{i=1}^N x_i^a x_i^b,
~~
Q^y_{abls}=\frac{1}{M}\sum_{j=1}^M y_j^{al}y_j^{bs}.
\end{equation}
Then
\begin{equation}
\sum_{i=1}^N\sum_{j=1}^M
\ab(\sum_{a=1}^n\sum_{l=1}^k x_i^a y_j^{al})^2
=
NM\sum_{a=1}^n\sum_{b=1}^n\sum_{l=1}^k\sum_{s=1}^k
Q^x_{ab}Q^y_{abls}.
\end{equation}

We enforce the overlap definitions using Fourier representations of delta constraints (up to subexponential normalization factors):
\begin{equation}
1\asymp
\int d Q^xd\tilde Q^x
\exp\ab(\frac{N}{2}\sum_{a=1}^n\sum_{b=1}^n \tilde Q^x_{ab}
\ab(Q^x_{ab}-\frac{1}{N}\sum_{i=1}^N x_i^a x_i^b)),
\end{equation}
\begin{equation}
1\asymp
\int d Q^yd\tilde Q^y
\exp\ab(\frac{M}{2}\sum_{a=1}^n\sum_{b=1}^n\sum_{l=1}^k\sum_{s=1}^k \tilde Q^y_{abls}
\ab(Q^y_{abls}-\frac{1}{M}\sum_{j=1}^M y_j^{al}y_j^{bs})).
\end{equation}

For the simplex constraints we use Lemma~\ref{lem:simplex-delta-trunc}. For each outer replica $a$,
\begin{equation}
\1\{\B{x}^a\in\mac{X}_N\}
=
\ab(\prod_{i=1}^N\1\{0\le x_i^a\le N\})\delta\ab(N-\sum_{i=1}^N x_i^a),
\end{equation}
and for each $(a,l)$,
\begin{equation}
\1\{\B{y}^{al}\in\mac{Y}_M\}
=
\ab(\prod_{j=1}^M\1\{0\le y_j^{al}\le M\})\delta\ab(M-\sum_{j=1}^M y_j^{al}).
\end{equation}
We represent each delta by Lemma~\ref{lem:delta-fourier}:
\begin{equation}
\delta\ab(N-\sum_{i=1}^N x_i^a)
=
\int_{i\mab{R}}\frac{d m_a}{2\pi i}
\exp\ab(-m_a\ab(N-\sum_{i=1}^N x_i^a)),
\end{equation}
\begin{equation}
\delta\ab(M-\sum_{j=1}^M y_j^{al})
=
\int_{i\mab{R}}\frac{d \mu_{al}}{2\pi i}
\exp\ab(-\mu_{al}\ab(M-\sum_{j=1}^M y_j^{al})).
\end{equation}

Collecting terms, the replicated integral factorizes into one-site contributions:
\begin{multline}
    \mab{E}[Z^n]\asymp
\int d \Theta 
\ab(\prod_{a=1}^n\int_{i\mab{R}}d m_a)
\ab(\prod_{a=1}^n\prod_{l=1}^k\int_{i\mab{R}}d \mu_{al}) 
\exp\ab(\frac{\sigma\beta_{\max}^2 NM}{2L}\sum_{a,b,l,s}Q^x_{ab}Q^y_{abls}) \\
\exp\ab(\frac{N}{2}\sum_{a,b}\tilde Q^x_{ab}Q^x_{ab}+\frac{M}{2}\sum_{a,b,l,s}\tilde Q^y_{abls}Q^y_{abls}-N\sum_a m_a-M\sum_{a,l}\mu_{al})
[I_x]^N[I_y]^M,
\end{multline}
where $d\Theta \coloneqq d Q^xd\tilde Q^xd Q^yd\tilde Q^y$
\begin{equation}
I_x=\int_{[0,N]^n}\ab(\prod_{a=1}^n d x^a)
\exp\ab(-\frac{1}{2}\sum_{a=1}^n\sum_{b=1}^n \tilde Q^x_{ab}x^a x^b+\sum_{a=1}^n m_a x^a),
\end{equation}
\begin{equation}
I_y=\int_{[0,M]^{nk}}\ab(\prod_{a=1}^n\prod_{l=1}^k d y^{al})
\exp\ab(-\frac{1}{2}\sum_{a=1}^n\sum_{b=1}^n\sum_{l=1}^k\sum_{s=1}^k \tilde Q^y_{abls}y^{al}y^{bs}+\sum_{a=1}^n\sum_{l=1}^k \mu_{al}y^{al}).
\end{equation}

\subsection{RS/1RSB ansatz and Hubbard--Stratonovich decoupling}
\label{app-subsec:rs-rsb}
To compute the typical free energy using the replica method, we require a tractable parametrization of the overlap matrices that respects the nested replica structure induced by the two-temperature construction. We therefore impose replica symmetry (RS) across the \emph{outer} replicas associated with the minimizer variables $x$. The maximizer variables $y$ carry a nested index $(a,l)$, where $a$ denotes the outer replica and $l$ the inner one. Under RS at the outer level, the overlaps among $y$-replicas naturally organize into a \emph{one-step replica-symmetry-breaking} (1RSB) structure within each fixed $a$: replicas sharing the same outer index form a cluster, while the inner replicas fluctuate within that cluster. Figure~\ref{fig_rs_1rsb_ansatz} illustrates this RS/1RSB hierarchy.

\begin{assumption}
\label{assump:rs-1rsb}
We assume replica symmetry (RS) for the $x$-overlaps across outer replicas $a,b\in\{1,\dots,n\}$:
\begin{equation}
Q^x_{ab}= \begin{cases}
Q_x & (a=b),\\
q_x & (a\neq b),
\end{cases}
~~
\tilde{Q}^x_{ab}= \begin{cases}
\hat{Q}_x-\hat{\chi}_x & (a=b),\\
-\hat{\chi}_x & (a\neq b),
\end{cases}
~~
m_a=m_x.
\end{equation}
For the $y$-overlaps, indexed by $(a,l)$ with $l,s\in\{1,\dots,k\}$, we assume the corresponding induced one-step replica-symmetry-breaking (1RSB) form:
\begin{equation}
Q^y_{abls}=
\begin{cases}
Q_y & (a=b,\ l=s),\\
q_1 & (a=b,\ l\neq s),\\
q_0 & (a\neq b),
\end{cases}
~~
\tilde{Q}^y_{abls}= \begin{cases}
\hat{Q}_y-\hat{\chi}_1-\hat{\chi}_0 & (a=b,\ l=s),\\
-\hat{\chi}_1-\hat{\chi}_0 & (a=b,\ l\neq s),\\
-\hat{\chi}_0 & (a\neq b),
\end{cases}
~~
\mu_{al}=m_y.
\end{equation}
\end{assumption}

\begin{figure}[tb] 
    \centering \includegraphics[width=\linewidth]{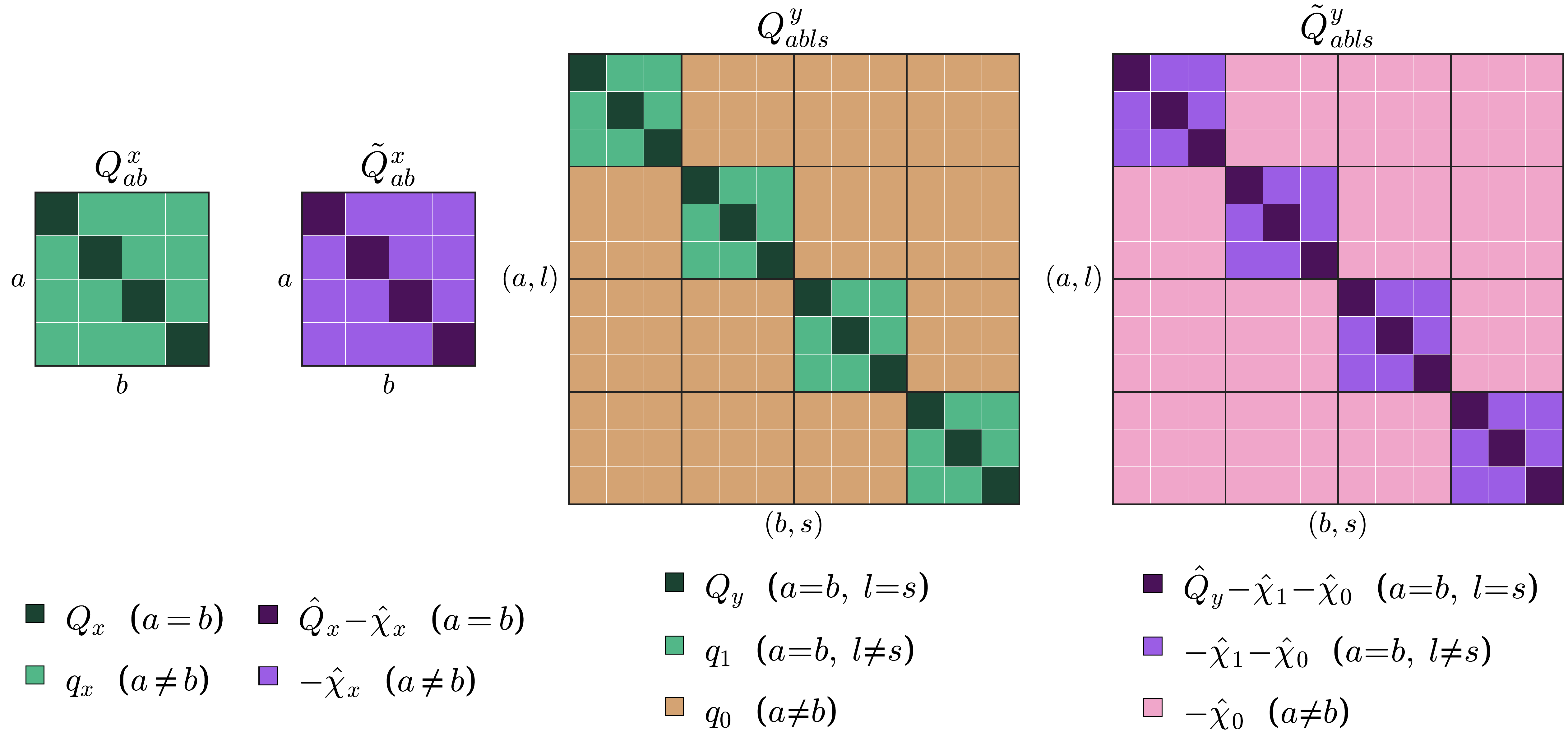} \caption{Schematic of the RS/1RSB hierarchy induced by the nested replica indices: RS across outer replicas $a$ for $x$, and a 1RSB block structure in the inner replicas $l$ for each fixed $a$ for $y$.} \label{fig_rs_1rsb_ansatz} \end{figure}

With this ansatz, the one-site quadratic forms become rank-one perturbations and can be decoupled by Lemma~\ref{lem:hs}.

\paragraph{$x$-site.}
Under RS, $\tilde Q^x=\hat{Q}_x I-\hat\chi_x \1\1^\top$, so
\begin{equation}
-\frac{1}{2}\sum_{a=1}^n\sum_{b=1}^n \tilde Q^x_{ab}x_a x_b + m_x\sum_{a=1}^n x_a
=
-\frac{\hat{Q}_x}{2}\sum_{a=1}^n x_a^2
+\frac{\hat\chi_x}{2}\ab(\sum_{a=1}^n x_a)^2
+m_x\sum_{a=1}^n x_a.
\end{equation}
By Lemma~\ref{lem:hs},
\begin{equation}
\exp\ab(\frac{\hat\chi_x}{2}\ab(\sum_{a=1}^n x_a)^2)
=
\int Dz\exp\ab(\sqrt{\hat\chi_x}z\sum_{a=1}^n x_a),
\end{equation}
hence the $x$-site integral factorizes:
\begin{equation}
\int_{[0,N]^n}\ab(\prod_{a=1}^n d x_a)
\exp\ab(
-\frac{\hat{Q}_x}{2}\sum_{a=1}^n x_a^2+\ab(m_x+\sqrt{\hat\chi_x}z)\sum_{a=1}^n x_a
)
=
[Z_x(z)]^n,
\end{equation}
where
\begin{equation}
Z_x(z)=\int_{0}^{N}d x
\exp\ab(-\frac{\hat{Q}_x}{2}x^2+\ab(m_x+\sqrt{\hat\chi_x}z)x).
\end{equation}

\paragraph{$y$-site.}
Under 1RSB, $\tilde Q^y_{abls}=\hat{Q}_y\delta_{ab}\delta_{ls}-\hat\chi_1\delta_{ab}-\hat\chi_0$, and
\begin{equation}
-\frac{1}{2}\sum_{a,b,l,s}\tilde Q^y_{abls}y_{al}y_{bs}
=
-\frac{\hat{Q}_y}{2}\sum_{a=1}^n\sum_{l=1}^k y_{al}^2
+\frac{\hat\chi_1}{2}\sum_{a=1}^n\ab(\sum_{l=1}^k y_{al})^2
+\frac{\hat\chi_0}{2}\ab(\sum_{a=1}^n\sum_{l=1}^k y_{al})^2.
\end{equation}
We decouple the $\hat\chi_0$-term by Lemma~\ref{lem:hs} with a single $z\sim Dz$:
\begin{equation}
\exp\ab(\frac{\hat\chi_0}{2}\ab(\sum_{a,l}y_{al})^2)
=
\int Dz\exp\ab(\sqrt{\hat\chi_0}z\sum_{a,l}y_{al}).
\end{equation}
Importantly, the $\hat\chi_1$-term must be decoupled separately for each outer replica $a$:
\begin{equation}
\exp\ab(\frac{\hat\chi_1}{2}\ab(\sum_{l=1}^k y_{al})^2)
=
\int D\eta\exp\ab(\sqrt{\hat\chi_1}\eta\sum_{l=1}^k y_{al}).
\end{equation}
With $\mu_{al}=m_y$, the $y$-site integral factorizes over $l$ inside each $a$:
\begin{equation}
\int\ab(\prod_{a=1}^n\prod_{l=1}^k d y_{al})
\exp\ab(
-\frac{\hat{Q}_y}{2}\sum_{a,l} y_{al}^2
+\sum_{a,l}\ab(m_y+\sqrt{\hat\chi_0}z+\sqrt{\hat\chi_1}\eta_a)y_{al}
)
=
\ab[\int D\eta Z_y(z,\eta)^k]^n,
\end{equation}
where
\begin{equation}
Z_y(z,\eta)=\int_{0}^{M}d y
\exp\ab(-\frac{\hat{Q}_y}{2}y^2+\ab(m_y+\sqrt{\hat\chi_0}z+\sqrt{\hat\chi_1}\eta)y).
\end{equation}

\subsection{The $n\to 0$ limit}

Applying Lemma~\ref{lem:replica-log} with $A(z)=Z_x(z)$ and $A(z)=\int D\eta Z_y(z,\eta)^k$ gives
\begin{equation}
\lim_{n\to 0}\frac{1}{n}\log\int Dz[Z_x(z)]^n=\int Dz\log Z_x(z),
\end{equation}
\begin{equation}
\lim_{n\to 0}\frac{1}{n}\log\int Dz\ab[\int D\eta Z_y(z,\eta)^k]^n
=
\int Dz\log\int D\eta Z_y(z,\eta)^k.
\end{equation}

\subsection{Moment Equations}
\label{app:moment-deriv-subsec}

We derive the conjugate relations and moment equations used in Claim~\ref{thm:finiteT}.
The key technical points are the chain rule through $h=m+\sqrt{\hat\chi}(\cdot)$ and Gaussian integration by parts (Lemma~\ref{lem:gauss-ibp}).

\paragraph{One-site Measures.}
Define
\begin{equation}
h_x(z)=m_x+\sqrt{\hat\chi_x}z,
~~
h_y(z,\eta)=m_y+\sqrt{\hat\chi_0}z+\sqrt{\hat\chi_1}\eta.
\end{equation}
Introduce Gibbs averages on the bounded domains:
\begin{equation}
\langle A\rangle_z=\frac{1}{Z_x(z)}\int_{0}^{N}d xA(x)\exp\ab(-\frac{\hat{Q}_x}{2}x^2+h_x(z)x),
\end{equation}
\begin{equation}
\langle A\rangle_{z,\eta}=\frac{1}{Z_y(z,\eta)}\int_{0}^{M}d yA(y)\exp\ab(-\frac{\hat{Q}_y}{2}y^2+h_y(z,\eta)y).
\end{equation}
For the 1RSB reweighting, set
\begin{equation}
\Psi(z)=\int D\eta Z_y(z,\eta)^k,
~~
\langle A\rangle_{\eta|z}=\frac{1}{\Psi(z)}\int D\eta Z_y(z,\eta)^kA(z,\eta).
\end{equation}

\paragraph{Conjugate Relations from Polynomial Terms.}
Differentiating the RS/1RSB saddle functional $g$ with respect to $q_x,q_0,q_1,Q_y$ yields
\begin{equation}
\hat\chi_x=\frac{\sigma\beta_{\max}^2}{\gamma^{\nicefrac{1}{2}}}k^2 q_0,
~~
\hat\chi_0=\gamma^{\nicefrac{1}{2}}\sigma\beta_{\max}^2 q_x,
~~
\hat\chi_1=\gamma^{\nicefrac{1}{2}}\sigma\beta_{\max}^2(Q_x-q_x),
~~
\hat{Q}_y=0.
\end{equation}

\paragraph{Constraints from $m_x$ and $m_y$.}
Using $\partial_{m_x}\log Z_x(z)=\langle x\rangle_z$ and $\partial_{m_y}\log Z_y(z,\eta)=\langle y\rangle_{z,\eta}$,
stationarity gives the simplex constraints in the one-site representation:
\begin{equation}
\int Dz\langle x\rangle_z=1,
~~
\int Dz\langle\langle y\rangle_{z,\eta}\rangle_{\eta|z}=1.
\end{equation}

\paragraph{Second Moments from $\hat{Q}_x$ and $\hat{Q}_y$.}
Since $\partial_{\hat{Q}_x}\log Z_x(z)=-\frac{1}{2}\langle x^2\rangle_z$,
\begin{equation}
Q_x=\int Dz\langle x^2\rangle_z.
\end{equation}
Similarly, $\partial_{\hat{Q}_y}\log Z_y(z,\eta)=-\frac{1}{2}\langle y^2\rangle_{z,\eta}$ implies
\begin{equation}
\frac{\partial}{\partial \hat{Q}_y}\log\Psi(z)=-\frac{k}{2}\langle\langle y^2\rangle_{z,\eta}\rangle_{\eta|z},
\end{equation}
Stationarity yields
\begin{equation}
Q_y=\int Dz\langle\langle y^2\rangle_{z,\eta}\rangle_{\eta|z}.
\end{equation}

\paragraph{Moment equation for $q_x$.}
By the chain rule,
\begin{equation}
\frac{\partial}{\partial \hat\chi_x}\log Z_x(z)=\frac{z}{2\sqrt{\hat\chi_x}}\langle x\rangle_z.
\end{equation}
Stationarity in $\hat\chi_x$ gives
\begin{equation}
Q_x-q_x=\frac{1}{\sqrt{\hat\chi_x}}\int Dzz\langle x\rangle_z.
\end{equation}
Applying Lemma~\ref{lem:gauss-ibp} to $f(z)=\langle x\rangle_z$ and using
$\frac{d}{d z}\langle x\rangle_z=\sqrt{\hat\chi_x}(\langle x^2\rangle_z-\langle x\rangle_z^2)$,
we obtain
\begin{equation}
q_x=\int Dz\langle x\rangle_z^2.
\end{equation}

\paragraph{Moment equation for $q_1$.}
Similarly,
\begin{equation}
\frac{\partial}{\partial \hat\chi_1}\log Z_y(z,\eta)=\frac{\eta}{2\sqrt{\hat\chi_1}}\langle y\rangle_{z,\eta},
~~
\frac{\partial}{\partial \hat\chi_1}\log\Psi(z)=\frac{k}{2\sqrt{\hat\chi_1}}\langle \eta\langle y\rangle_{z,\eta}\rangle_{\eta|z}.
\end{equation}
The stationarity equation in $\hat\chi_1$ can be written as
\begin{equation}
Q_y+(k-1)q_1=\frac{1}{\sqrt{\hat\chi_1}}\int Dz\langle \eta\langle y\rangle_{z,\eta}\rangle_{\eta|z}.
\end{equation}
Applying Lemma~\ref{lem:gauss-ibp} in the $\eta$-variable under $D\eta$ and using
$\partial_\eta \log Z_y(z,\eta)=\sqrt{\hat\chi_1}\langle y\rangle_{z,\eta}$,
one obtains
\begin{equation}
\frac{1}{\sqrt{\hat\chi_1}}\langle \eta\langle y\rangle_{z,\eta}\rangle_{\eta|z}
=
\langle\langle y^2\rangle_{z,\eta}\rangle_{\eta|z}
+(k-1)\langle\langle y\rangle_{z,\eta}^2\rangle_{\eta|z}.
\end{equation}
Integrating over $z$ and using the definition of $Q_y$ gives
\begin{equation}
q_1=\int Dz\langle\langle y\rangle_{z,\eta}^2\rangle_{\eta|z}.
\end{equation}

\paragraph{Moment equation for $q_0$.}
For $\hat\chi_0$,
\begin{equation}
\frac{\partial}{\partial \hat\chi_0}\log Z_y(z,\eta)=\frac{z}{2\sqrt{\hat\chi_0}}\langle y\rangle_{z,\eta},
~~
\frac{\partial}{\partial \hat\chi_0}\log\Psi(z)=\frac{kz}{2\sqrt{\hat\chi_0}}\langle\langle y\rangle_{z,\eta}\rangle_{\eta|z}.
\end{equation}
The stationarity equation in $\hat\chi_0$ can be written as
\begin{equation}
Q_y+(k-1)q_1-kq_0=\frac{1}{\sqrt{\hat\chi_0}}\int Dzz\bar y(z),
~~
\bar y(z)=\langle\langle y\rangle_{z,\eta}\rangle_{\eta|z}.
\end{equation}
Applying Lemma~\ref{lem:gauss-ibp} to $f(z)=\bar y(z)$ and differentiating $\bar y(z)$ under the integral sign yields
\begin{equation}
\frac{1}{\sqrt{\hat\chi_0}}\frac{d}{d z}\bar y(z)
=
\langle\langle y^2\rangle_{z,\eta}\rangle_{\eta|z}
+(k-1)\langle\langle y\rangle_{z,\eta}^2\rangle_{\eta|z}
-k\bar y(z)^2.
\end{equation}
Integrating over $z$ and using $Q_y=\int Dz\langle\langle y^2\rangle_{z,\eta}\rangle_{\eta|z}$ and the definition of $q_1$ gives
\begin{equation}
\frac{1}{\sqrt{\hat\chi_0}}\int Dzz\bar y(z)=Q_y+(k-1)q_1-k\int Dz\bar y(z)^2.
\end{equation}
Comparing with the stationarity equation yields
\begin{equation}
q_0=\int Dz\bar y(z)^2
=
\int Dz\ab(\langle\langle y\rangle_{z,\eta}\rangle_{\eta|z})^2.
\end{equation}

These identities close the moment system for the $x$- and $y$-marginals in Claim~\ref{thm:finiteT}.

\subsection{Typical Payoff Density}
\label{app:payoff-from-op}

This subsection shows how the disorder-averaged payoff density can be expressed in closed form using the RS/1RSB saddle-point order parameters.
Throughout we use the Gaussian scaling
$E(\B{x},\B{y};\B{C})=\sqrt{\sigma/L}\sum_{i,j}C_{ij}x_i y_j$ and the ratio $k=-\nicefrac{\beta_{\min}}{\beta_{\max}}$.
Define the (typical) payoff density under the two-temperature Boltzmann strategies (TTBS) by
\begin{equation}
e(\beta_{\max},\beta_{\min})
:=
\lim_{L\to\infty}\frac{1}{L}
\mab E_{\B{C}}\ab[\Big\langle E(\B{X},\B{Y};\B{C})\Big\rangle_{\beta_{\max},\beta_{\min}}],
\label{eq:payoff-density-def}
\end{equation}
where $\langle\cdot\rangle_{\beta_{\max},\beta_{\min};\B{C}}$ denotes the expectation with respect to the TTBS sampling.

For a fixed instance $\B{C}$, differentiating the partition function
$Z(\B{C})=\int_{\mac{X}_N}d\B{X}[Z_y(\B{X};\B{C})]^{k}$
(with $Z_y(\B{X};\B{C})=\int_{\mac{Y}_M}d\B{Y}\exp(\beta_{\max}E(\B{X},\B{Y};\B{C}))$)
and using $\partial_\sigma F=E/(2\sigma)$ yields the exact identity
\begin{equation}
\frac{\partial}{\partial\sigma}\log Z(\B{C})
=
\frac{k\beta_{\max}}{2\sigma}
\Big\langle E(\B{X},\B{Y};\B{C})\Big\rangle_{\beta_{\max},\beta_{\min};\B{C}}.
\label{eq:dlogZ-dsigma-payoff}
\end{equation}
Recalling $\Phi(\beta_{\max},\beta_{\min};\B{C})=-(1/\beta_{\min})\log Z(\B{C})$ and
$\nu(\beta_{\max},\beta_{\min})=\lim_{L\to\infty}\frac{1}{L}\mab E_{\B{C}}[\Phi(\beta_{\max},\beta_{\min};\B{C})]$,
and substituting $k=-\nicefrac{\beta_{\min}}{\beta_{\max}}$, we derive the compact relation
\begin{equation}
e(\beta_{\max},\beta_{\min})
=
2\sigma\frac{\partial}{\partial\sigma}\nu(\beta_{\max},\beta_{\min}).
\label{eq:payoff-from-v}
\end{equation}

Finally, at finite temperature, Claim~\ref{thm:finiteT} gives
$\nu(\beta_{\max},\beta_{\min})=-(1/\beta_{\min})\mathrm{extr}_{\Theta}g(\Theta;\sigma)$.
Since $\sigma$ enters $g$ only through the explicit overlap-polynomial term, the envelope rule at the stationary point implies that
$\partial_\sigma v$ is obtained by differentiating only this explicit term and evaluating it at the RS/1RSB saddle.
This yields the closed-form payoff density
\begin{equation}
e(\beta_{\max},\beta_{\min})
=
\sigma\beta_{\max}\Big(
Q_xQ_y+(k-1)Q_xq_1-kq_xq_0
\Big),
\label{eq:payoff-from-ops}
\end{equation}
where $(Q_x,q_x)$ are the RS overlaps for the $\B{X}$ replicas and $(Q_y,q_1,q_0)$ are the induced 1RSB overlaps for the $\B{Y}$ replicas, all evaluated at the saddle point.

\section{Ordered Zero-Temperature Limit}
\label{app:oztl-detailed}

We derive the ordered zero-temperature limit of the saddle-point equations from the finite-temperature free energy in Claim~\ref{thm:finiteT}.
Recall $k=-\nicefrac{\beta_{\min}}{\beta_{\max}}<0$ and consider the ordered limit, $\beta_{\max}\to\infty$ first, then $\beta_{\min}\to\infty$, $k\to 0^-$, Although $k\to 0^-$, the $y$-site contribution has $\log Z_y(z,\eta)=\Theta(\beta_{\max})$, so that $k\log Z_y=\Theta(\beta_{\min})$ survives and induces a second Laplace principle at scale $\beta_{\min}$.

\subsection{Ordered Scaling}
\label{app:oztl-scaling}
To take the ordered zero temperature limit, we scale the order parameters as follows:
\begin{equation}
    Q_x-q_x=\frac{\chi_x}{\beta_{\min}}+o(\beta_{\min}^{-1}),
    ~~
    q_1-q_0=\frac{\Delta}{\beta_{\min}}+o(\beta_{\min}^{-1}),
    ~~
    Q_y-q_1=\frac{\chi}{\beta_{\max}}+o(\beta_{\max}^{-1}).
    \label{eq:collision-ansatz}
\end{equation}
Equivalently,
\begin{equation}
Q_x=q_x+\frac{\chi_x}{\beta_{\min}},
~~
q_0=q,
~~
q_1=q+\frac{\Delta}{\beta_{\min}},
~~
Q_y=q+\frac{\Delta}{\beta_{\min}}+\frac{\chi}{\beta_{\max}}.
\label{eq:collision-expanded}
\end{equation}
We scale conjugates so that the $x$-site exponent is $\Theta(\beta_{\min})$ while the inner $y$-site exponent is $\Theta(\beta_{\max})$.
With a slight abuse of notation, we reuse the same symbols for the $\mac{O}(1)$ rescaled quantities:
\begin{align}
    &\hat{Q}_x=\beta_{\min}\hat{Q}_x,
    ~~
    \hat{\chi}_x=\beta_{\min}^2\hat{\chi}_x^{(0)},
    ~~
    m_x=\beta_{\min}\hat{m}_x,
    \label{eq:scaling-x-conj} \\
    &\hat{Q}_y=\beta_{\max}\hat{Q}_y,
    ~~
    \hat{\chi}_0=\beta_{\max}^2\hat{\chi},
    ~~
    \hat{\chi}_1=\frac{\beta_{\max}^2}{\beta_{\min}}\hat{\Delta},
    ~~
    m_y=\beta_{\max}\hat{m}_y.
\label{eq:scaling-y-conj}
\end{align}
We assume a priori that $\hat{Q}_y=\Theta(\beta_{\max})$, since the finite-$k$ identity $\hat{Q}_y=0$ is obtained by dividing by $k$, which is not valid as $k\to 0^-$. As shown in Appendix~\ref{app:oztl-conjugate}, the ordered saddle forces $\hat{Q}_y=0$.

\subsection{$\beta_{\min}$-Laplace Principle}
\label{app:oztl-xsite}

Under Eq.~\eqref{eq:scaling-x-conj}, the one-site partition function is
\begin{equation}
    Z_x(z)=\int_{0}^{N}dx
    \exp\ab[
    -\beta_{\min}\ab(
    \frac{\hat{Q}_x}{2}x^2-\ab(\hat{m}_x+\sqrt{\hat{\chi}_x^{(0)}}z)x)].
    \label{eq:Zx-ordered}
\end{equation}
Since the domain $[0,N]$ is compact, Lemma~\ref{lem:laplace-compact} yields
\begin{equation}
    \frac{1}{\beta_{\min}}\log Z_x(z)\to
    \sup_{x\in[0,N]}\ab(h_x(z)x-\frac{\hat{Q}_x}{2}x^2),
    ~~~h_x(z)=\hat{m}_x+\sqrt{\hat{\chi}_x^{(0)}}z.
\label{eq:laplace-x-clean}
\end{equation}
At the ordered saddle, the maximizer is $\mac{O}(1)$; hence, the boundary $x=N$ is asymptotically inactive, and the constrained maximizer is $x^\star(z)=\pos{\nicefrac{h_x(z)}{\hat{Q}_x}}$.
Define
\begin{equation}
    \alpha_x=\hat{m}_x(\hat{\chi}_x^{(0)})^{-\nicefrac{1}{2}},
    ~~
    A(\alpha)=\int Dz \pos{z+\alpha},
    ~~
    B(\alpha)=\int Dz \pos{z+\alpha}^2,
    ~~
    q(\alpha)=\frac{B(\alpha)}{A(\alpha)^2}.
    \label{eq:ABq-def-clean}
\end{equation}
The simplex constraint $\int Dz x^\star(z)=1$ gives $\hat{Q}_x=(\hat{\chi}_x^{(0)})^{\nicefrac{1}{2}}A(\alpha_x)$,
and therefore the canonical form
\begin{equation}
x^\star(z)=\frac{\pos{z+\alpha_x}}{A(\alpha_x)}.
\label{eq:x-canonical-clean}
\end{equation}
Consequently,
\begin{equation}
q_x=\int Dz(x^\star(z))^2=q(\alpha_x),
~~
\rho_x=\Pr(x^\star(Z)>0)=\Phi(\alpha_x).
\label{eq:x-moments-support-clean}
\end{equation}

\subsection{$\beta_{\max}$-Laplace Principle}
\label{app:oztl-ysite}

The inner one-site partition function is
\begin{equation}
    Z_y(z,\eta)=\int_{0}^{M}dy 
    \exp\ab(
    -\frac{\hat{Q}_y}{2}y^2+\ab(m_y+\sqrt{\hat{\chi}_0}z+\sqrt{\hat{\chi}_1}\eta)y),
    \label{eq:Zy-start-clean}
\end{equation}
and it enters only through
\begin{equation}
    \Psi(z)=\int D\eta[Z_y(z,\eta)]^k,
    ~~
    k=-\frac{\beta_{\min}}{\beta_{\max}},~~ h(z,\eta)=\hat{m}_y+\sqrt{\hat{\chi}}z+\sqrt{\frac{\hat{\Delta}}{\beta_{\min}}}\eta.
    \label{eq:Psi-def-clean}
\end{equation}
Applying Lemma~\ref{lem:laplace-compact} to the compact interval $[0,M]$ and noting that the maximizer is $\mac{O}(1)$ at the ordered saddle,
\begin{equation}
    \log Z_y(z,\eta)=\beta_{\max}\sup_{y\ge 0}\ab(h(z,\eta)y-\frac{\hat{Q}_y}{2}y^2)+o(\beta_{\max})
    =\frac{\beta_{\max}}{2\hat{Q}_y}\pos{h(z,\eta)}^2+o(\beta_{\max}).
    \label{eq:logZy-laplace-clean}
\end{equation}
Using $k\beta_{\max}=-\beta_{\min}$,
\begin{equation}
    [Z_y(z,\eta)]^k
    =
    \exp\ab(
    -\frac{\beta_{\min}}{2\hat{Q}_y}\pos{h(z,\eta)}^2+o(\beta_{\min})).
    \label{eq:Zy-power-k-clean}
\end{equation}
Set $\eta=\sqrt{\beta_{\min}}\xi$. Then
\begin{equation}
    \Psi(z)
    =
    \sqrt{\beta_{\min}}
    \int_{\mab{R}}\frac{d\xi}{\sqrt{2\pi}}
    \exp\ab\{
    -\beta_{\min}\ab[
    \frac{\xi^2}{2}
    +
    \frac{1}{2\hat{Q}_y}\pos{a(z)+\sqrt{\hat{\Delta}}\xi}^2]
    +o(\beta_{\min})\},
    \label{eq:Psi-xi-clean}
\end{equation}
where $a(z)=\hat{m}_y+\sqrt{\hat{\chi}}z$. The prefactor $\sqrt{\beta_{\min}}$ is negligible at scale $\beta_{\min}$, and Lemma~\ref{lem:laplace-compact} applies.
\begin{equation}
    \frac{1}{\beta_{\min}}\log\Psi(z)\to -\inf_{\xi\in\mab{R}}
    \ab[
    \frac{\xi^2}{2}
    +
    \frac{1}{2\hat{Q}_y}\pos{a(z)+\sqrt{\hat{\Delta}}\xi}^2].
    \label{eq:laplace-Psi-clean}
\end{equation}
The infimum represents a one-dimensional convex problem, and the minimizer resides in the active region when $a(z)>0$, providing
\begin{equation}
    \inf_{\xi\in\mab{R}}
    \ab[
    \frac{\xi^2}{2}
    +
    \frac{1}{2\hat{Q}_y}\pos{a(z)+\sqrt{\hat{\Delta}}\xi}^2
    ]
    =
    \frac{1}{2(\hat{Q}_y+\hat{\Delta})}\pos{a(z)}^2.
    \label{eq:inf-Fz-final-clean}
\end{equation}
Therefore,
\begin{equation}
    \frac{1}{\beta_{\min}}\log\Psi(z)
    \to
    -\frac{1}{2(\hat{Q}_y+\hat{\Delta})}\pos{\hat{m}_y+\sqrt{\hat{\chi}}z}^2.
    \label{eq:logPsi-limit-clean}
\end{equation}
The right-hand side of Eq.~\eqref{eq:logPsi-limit-clean} equals $-\sup_{y\ge 0}(a(z)y-(\hat{Q}_y+\hat{\Delta})y^2/2)$; therefore,
\begin{equation}
    y^\star(z)=\frac{\pos{\hat{m}_y+\sqrt{\hat{\chi}}z}}{\hat{Q}_y+\hat{\Delta}}.
    \label{eq:ystar-clean}
\end{equation}
Define $\alpha_y=\nicefrac{\hat{m}_y}{\sqrt{\hat{\chi}}}$. The simplex constraint $\int Dz y^\star(z)=1$ yields $\hat{Q}_y+\hat{\Delta}=\sqrt{\hat{\chi}}A(\alpha_y)$ and, therefore,
\begin{equation}
    y^\star(z)=\frac{\pos{z+\alpha_y}}{A(\alpha_y)}.
    \label{eq:y-canonical-clean}
\end{equation}
In particular,
\begin{equation}
    q_y=\int Dz(y^\star(z))^2=q(\alpha_y),
    ~~
    \rho_y=\Pr(y^\star(Z)>0)=\Phi(\alpha_y).
    \label{eq:y-moments-support-clean}
\end{equation}

\subsection{Quadratic Fluctuations and Susceptibilities}
\label{app:oztl-suscept}

On the active set, both $x$ and $y$ are interior maximizers with curvatures $\hat{Q}_x$ and $\hat{Q}_y+\hat{\Delta}$, respectively.
A second-order Laplace expansion therefore gives
\begin{equation}
    \chi_x=\beta_{\min}(Q_x-q_x)=\frac{\rho_x}{\hat{Q}_x},
\label{eq:chi-x-clean}
\end{equation}
and, for the outer-scale combination surviving the ordered limit,
\begin{equation}
    \chi-\Delta=\frac{\rho_y}{\hat{Q}_y+\hat{\Delta}}.
    \label{eq:chi-minus-Delta-clean}
\end{equation}

\subsection{Ordered-limit Conjugate Relations}
\label{app:oztl-conjugate}

We take the ordered limit of the finite-temperature stationarity conditions of $g$ from Claim~\ref{thm:finiteT}.

At finite temperature,
\begin{equation}
    \hat{\chi}_x= \gamma^{-\nicefrac{1}{2}}\sigma\beta_{\max}^2 k^2 q_0,
    ~~
    \hat{\chi}_0=\gamma^{\nicefrac{1}{2}}\sigma\beta_{\max}^2q_x,
    ~~\hat{\chi}_1=\gamma^{\nicefrac{1}{2}}\sigma\beta_{\max}^2(Q_x-q_x).
\end{equation}
Using $k^2=\nicefrac{\beta_{\min}^2}{\beta_{\max}^2}$, $q_0=q$, and Eqs.~\eqref{eq:scaling-x-conj}--\eqref{eq:scaling-y-conj},
\begin{equation}
    \hat{\chi}_x^{(0)}= \gamma^{-\nicefrac{1}{2}} \sigma q,
    ~~
    \hat{\chi}=\gamma^{\nicefrac{1}{2}} \sigma q_x,
    ~~
    \hat{\Delta}=\gamma^{\nicefrac{1}{2}} \sigma \chi_x.
    \label{eq:var-rel-ordered}
\end{equation}
At the ordered zero temperature limit, $q=q_y$ since $Q_y \to q$. Expanding $\nicefrac{\partial g}{\partial Q_x}=0$ and $\nicefrac{\partial g}{\partial Q_y}=0$ under the collision scaling Eq.~\eqref{eq:collision-expanded}
and matching orders in $\beta_{\min}$ and $\beta_{\max}$ gives
\begin{equation}
    \hat{Q}_x= \gamma^{-\nicefrac{1}{2}} \sigma(\chi-\Delta),~~\hat{Q}_y=0,
    ~~\hat{Q}_y+\hat{\Delta}=\sigma \gamma^{\nicefrac{1}{2}} \chi_x.
    \label{eq:hatqx-clean}
\end{equation}

\subsection{Support-Size Matching Law}
\label{app:oztl-support-law}

Eliminating curvatures using Eqs.~\eqref{eq:chi-x-clean}--\eqref{eq:chi-minus-Delta-clean} and Eq.~\eqref{eq:hatqx-clean} yields
\begin{equation}
    \rho_y=\gamma\rho_x.
    \label{eq:support-law-clean}
\end{equation}
Using Eq.~\eqref{eq:x-moments-support-clean} and Eq.~\eqref{eq:y-moments-support-clean}, this becomes
\begin{equation}
\Phi(\alpha_y)=\gamma\Phi(\alpha_x).
\end{equation}
All overlaps and variances are expressed in terms of $(\alpha_x,\alpha_y)$, with one scalar condition fixing the relative shift of the simplex multipliers:
\begin{equation}
    \gamma\hat{m}_x+\hat{m}_y=0.
    \label{eq:chem-balance-clean}
\end{equation}
Using
\begin{equation}
    \hat{m}_x=\alpha_x (\hat{\chi}_x^{(0)})^{\nicefrac{1}{2}},
    ~~
    \hat{m}_y=\alpha_y (\hat{\chi})^{\nicefrac{1}{2}},
    \label{eq:m-from-alpha-clean}
\end{equation}
together with Eq.~\eqref{eq:var-rel-ordered} and $q=q_y$,
\begin{equation}
    (\hat{\chi}_x^{(0)})^{\nicefrac{1}{2}}=\sigma^{\nicefrac{1}{2}}\gamma^{-1/4}q(\alpha_y)^{\nicefrac{1}{2}},
    ~~
    \hat{\chi}^{\nicefrac{1}{2}}=\sigma^{\nicefrac{1}{2}}\gamma^{1/4}q(\alpha_x)^{\nicefrac{1}{2}}.
    \label{eq:variance-sqrt-clean}
\end{equation}
Substituting into Eq.~\eqref{eq:chem-balance-clean} gives
\begin{equation}
    \gamma^{\nicefrac{1}{2}}\alpha_x (q(\alpha_y))^{\nicefrac{1}{2}}+\alpha_y (q(\alpha_x))^{\nicefrac{1}{2}}=0,
\end{equation}

\subsection{Value Density}
\label{app:oztl-value}

We derive the value density and make explicit the cancellation of the overlap-polynomial remnants.

After inserting Eq.~\eqref{eq:collision-ansatz} and Eqs.~\eqref{eq:scaling-x-conj}--\eqref{eq:scaling-y-conj} into the finite-temperature RS/1RSB saddle functional of Claim~\ref{thm:finiteT},
the $\Theta(\beta_{\min})$ contribution can be written as
\begin{equation}
    \frac{E_0}{L}
    =
    -\frac{1}{2}\hat{\chi}_x^{(0)}\chi_x
    +\frac{1}{2}\hat{\chi}(\chi-\Delta)
    +\int Dz V_x(z)
    -\int Dz V_y(z)
    +\mac{R},
    \label{eq:E0-decomp}
\end{equation}
where $V_x$ and $V_y$ are defined as
\begin{equation}
    V_x(z)=\sup_{x\ge 0}\ab(h_x(z)x-\frac{\hat{Q}_x}{2}x^2),
    ~~
    V_y(z)=\sup_{y\ge 0}\ab(a(z)y-\frac{\hat{Q}_y+\hat{\Delta}}{2}y^2),
    \label{eq:VxVy-def}
\end{equation}
and $\mac{R}$ collects the $\mac{O}(\beta_{\min})$ remnants of the overlap polynomials:
\begin{equation}
    \mac{R}
    =
    \frac{1}{2}\ab(\hat{Q}_x q_x-(\hat{Q}_y+\hat{\Delta})q)
    +\frac{\sigma}{2}\ab(\gamma^{\nicefrac{1}{2}}\chi_x q+\gamma^{-\nicefrac{1}{2}}q_x(\Delta-\chi)).
\label{eq:R-remainder}
\end{equation}
Since $V_x$ and $V_y$ are concave quadratic maximizations, $V_x(z)=\nicefrac{h_x(z)x^\star(z)}{2}$ and $V_y(z)=\nicefrac{a(z)y^\star(z)}{2}$.
Using $\int Dz x^\star(z)=1$, $\int Dz y^\star(z)=1$, and the truncated-moment identities from Lemma~\ref{lem:trunc-gauss-AB}, together with Eq.~\eqref{eq:chi-x-clean} and Eq.~\eqref{eq:chi-minus-Delta-clean}, we obtain
\begin{equation}
    \int Dz V_x(z)=\frac{1}{2}\ab(\hat{m}_x+\hat{\chi}_x^{(0)}\chi_x),
    ~~
    \int Dz V_y(z)=\frac{1}{2}\ab(\hat{m}_y+\hat{\chi}(\chi-\Delta)).
    \label{eq:site-values}
\end{equation}
Substituting Eq.~\eqref{eq:site-values} into Eq.~\eqref{eq:E0-decomp} yields
\begin{equation}
    \frac{E_0}{L}=\frac{1}{2}\ab(\hat{m}_x-\hat{m}_y )+\mac{R}.
\end{equation}
Finally, $\mac{R}=0$ at the ordered saddle. Indeed, using $\hat{Q}_y=0$, $\hat{\Delta}=\sigma\gamma^{\nicefrac{1}{2}}\chi_x$, and
$\hat{Q}_x=\nicefrac{\sigma(\chi-\Delta)}{\gamma^{\nicefrac{1}{2}}}$ alongside $\Delta-\chi=-(\chi-\Delta)$,
the two brackets in Eq.~\eqref{eq:R-remainder} cancel exactly; hence, $\mac{R}=0$ and
\begin{equation}
    \frac{E_0}{L}=\frac{1}{2}\ab(\hat{m}_x-\hat{m}_y).
    \label{eq:value-half-diff-clean}
\end{equation}
Using Eq.~\eqref{eq:m-from-alpha-clean} and Eq.~\eqref{eq:variance-sqrt-clean},
\begin{equation}
    \hat{m}_x=\sigma^{\nicefrac{1}{2}}\gamma^{-1/4}\alpha_x(q(\alpha_y))^{\nicefrac{1}{2}},
    ~~
    \hat{m}_y=\sigma^{\nicefrac{1}{2}}\gamma^{1/4}\alpha_y(q(\alpha_x))^{\nicefrac{1}{2}}.
    \label{eq:m-explicit-clean}
\end{equation}
Substituting into Eq.~\eqref{eq:value-half-diff-clean} yields
\begin{equation}
    \frac{E_0}{L}
    =
    \frac{\sigma^{\nicefrac{1}{2}}}{2}\ab(
    \gamma^{-1/4}\alpha_x(q(\alpha_y))^{\nicefrac{1}{2}}
    - \gamma^{1/4}\alpha_y(q(\alpha_x))^{\nicefrac{1}{2}}),
    \label{eq:f-gamma-clean}
\end{equation}
which is Eq.~\eqref{eq:f-gamma} in the main text.

\section{Perturbation Derivation}
\label{app:perturbation-proofs}

This appendix proves the two perturbative statements used in Section~\ref{sec:perturbative-summaries-main}:
(i) the weak-disorder expansion in $\sigma$ at finite temperature and (ii) the small-imbalance expansion in $\gamma$ around $\gamma=1$
in the ordered $T=0$ theory. We quote standard facts about scaled simplices as presented in Lemmas~\ref{lem:simplex-delta-trunc} and \ref{lem:scaled-simplex-volume},
Gaussian identities as in Lemmas~\ref{lem:gauss-ibp} and \ref{lem:hs}, and the truncated Gaussian functions $A(\alpha),B(\alpha)$, as stated in Lemma~\ref{lem:trunc-gauss-AB}.

\subsection{Small Randomness Limit}
\label{app:sigma-expansion}

\begin{proof}
Fix $\gamma>0$ and $\beta_{\max},\beta_{\min}>0$, set $\beta=\beta_{\max}$, and write $k=-\nicefrac{\beta_{\min}}{\beta_{\max}}<0$.

\paragraph{Entropy Baseline at $\sigma=0$.}
At $\sigma=0$ the payoff vanishes, hence $Z_y(\B{x};\B{C})=\int_{\mac{Y}_M}d\B{y}1=\mathrm{Vol}(\mac{Y}_M)$ is constant in $\B{x}$ and $\B{C}$.
Therefore
\begin{equation}
    Z(\B{C})=\int_{\mac{X}_N}d\B{x}\mathrm{Vol}(\mac{Y}_M)^k
    =
    \mathrm{Vol}(\mac{X}_N)\mathrm{Vol}(\mac{Y}_M)^k,
\end{equation}
and
\begin{equation}
    \Phi(\beta_{\max},\beta_{\min};\B{C})
    =
    -\frac{1}{\beta_{\min}}\ab(\log\mathrm{Vol}(\mac{X}_N)+k\log\mathrm{Vol}(\mac{Y}_M)).
\end{equation}
Dividing by $L=\sqrt{NM}$ and using Lemma~\ref{lem:simplex-volume-asympt} gives the entropy-only limit
\begin{equation}
    \nu(0)
    =
    -\frac{\gamma^{\nicefrac{1}{2}}}{\beta_{\min}}+\frac{1}{\beta_{\max}\gamma^{\nicefrac{1}{2}}},
\end{equation}
which is $v_{\mathrm{ent}}$ in the main text.

\paragraph{Envelope Identity for $\partial_\sigma v$.}
From Claim~\ref{thm:finiteT}, $\nu(\sigma)=-(1/\beta_{\min})\mathrm{extr}_{\theta}g(\theta;\sigma)$.
Lemma~\ref{lem:envelope} yields
\begin{equation}
    \frac{\partial v}{\partial\sigma}
    =
    -\frac{1}{\beta_{\min}}\frac{\partial g}{\partial\sigma}\Big|_{\theta=\theta^\star(\sigma)}.
\end{equation}
In $g$, $\sigma$ appears only in the explicit overlap polynomial, hence at the saddle
\begin{equation}
    \frac{\partial v}{\partial\sigma}
    =
    -\frac{\beta^2}{2\beta_{\min}}
    \ab(kQ_xQ_y+k(k-1)Q_xq_1-k^2q_xq_0).
    \label{eq:dvdsigma-app}
\end{equation}

\paragraph{The $\mac{O}(\sigma)$ Coefficient.}
At $\sigma=0$, the one-site measures are uniform on $\mac{X}_N$ and $\mac{Y}_M$, so by Lemma~\ref{lem:uniform-simplex-overlaps},
\begin{equation}
Q_x=2+o(1),
~~
Q_y=2+o(1),
~~
q_x=q_0=q_1=1+o(1),
~~
L\to\infty.
\end{equation}
Substituting into the bracket of Eq.~\eqref{eq:dvdsigma-app} gives
\begin{equation}
kQ_xQ_y+k(k-1)Q_xq_1-k^2q_xq_0
=
k(k+2)+o(1),
\end{equation}
and therefore
\begin{equation}
\left.\frac{\partial v}{\partial\sigma}\right|_{\sigma=0}
=
-\frac{\beta^2}{2\beta_{\min}}k(k+2)
=
\beta_{\max}-\frac{\beta_{\min}}{2}.
\end{equation}
This is the claimed $\mac{O}(\sigma)$ coefficient and is $\gamma$-independent.

\paragraph{The $\mac{O}(\sigma^{2})$ Coefficient.}
Write
\begin{equation}
P(\sigma)=kQ_xQ_y+k(k-1)Q_xq_1-k^2q_xq_0,
~~
P(\sigma)=P(0)+\sigma P^{(1)}+\mac{O}(\sigma^{2}),
\label{eq:P-exp-app}
\end{equation}
so integrating Eq.~\eqref{eq:dvdsigma-app} gives
\begin{equation}
\nu(\sigma)
=
\nu(0)
-\frac{\beta^2}{2\beta_{\min}}
\ab(\sigma P(0)+\frac{\sigma^2}{2}P^{(1)})
+\mac{O}(\sigma^3).
\label{eq:v-int-app}
\end{equation}
Thus it remains to compute $P^{(1)}$.

The first-order responses of overlaps follow from linearizing the saddle equations at $\sigma=0$:
the conjugate relations fix the leading scaling of $(\hat\chi_x,\hat\chi_0,\hat\chi_1,\hat{Q}_x)$ as $\mac{O}(\sigma)$, and a Taylor expansion of the one-site moments
(with $m_x,m_y$ adjusted to preserve $\int Dz\langle x\rangle_z=1$ and $\int Dz\langle\langle y\rangle\rangle_{\eta|z}=1$)
yields linear relations between $(Q_x,q_x,Q_y,q_0,q_1)$ and the conjugates.
Carrying out this standard linear response calculation gives the slopes
\begin{equation}
Q_x^{(1)}=\frac{2\beta^2}{\gamma^{\nicefrac{1}{2}}}k(k+1),
~~
q_x^{(1)}=\frac{\beta^2}{\gamma^{\nicefrac{1}{2}}}k^2,
~~
Q_y^{(1)}=4\gamma^{\nicefrac{1}{2}}\beta^2,
~~
q_0^{(1)}=\gamma^{\nicefrac{1}{2}}\beta^2,
~~
q_1^{(1)}=2\gamma^{\nicefrac{1}{2}}\beta^2,
\label{eq:overlap-slopes-app}
\end{equation}
where $Q_x^{(1)}=\left. d Q_x/d\sigma \right|_{\sigma=0}$, etc.

Differentiating $P(\sigma)$ at $\sigma=0$ and using $(Q_x,Q_y,q_x,q_0,q_1)=(2,2,1,1,1)$ gives
\begin{equation}
P^{(1)}
=
k(k+1)Q_x^{(1)}+2kQ_y^{(1)}+2k(k-1)q_1^{(1)}-k^2\ab(q_x^{(1)}+q_0^{(1)}).
\label{eq:P1-app}
\end{equation}
Substituting Eq.~\eqref{eq:overlap-slopes-app} into Eq.~\eqref{eq:P1-app} yields
\begin{equation}
P^{(1)}
=
\beta^2\ab[
\frac{k^2(k^2+4k+2)}{\gamma^{\nicefrac{1}{2}}}
+
\gamma^{\nicefrac{1}{2}}(4k+3k^2)
].
\label{eq:P1-final-app}
\end{equation}
Combining Eq.~\eqref{eq:v-int-app}, $P(0)=k(k+2)$, and Eq.~\eqref{eq:P1-final-app} gives the expansion
$\nu(\sigma)=\nu(0)+v_1\sigma+v_2\sigma^2+\mac{O}(\sigma^3)$ stated in Proposition~\ref{prop:sigma-expansion-main}
(after eliminating $k=-\nicefrac{\beta_{\min}}{\beta_{\max}}$).
\end{proof}

\subsection{Imbalance Ratio Expansion}
\label{app:gamma-expansion}

\begin{proof}
Let $\gamma=1+\varepsilon$ with $|\varepsilon|\ll 1$.
The ordered $T=0$ RS saddle is characterized by the two equations
\begin{equation}
\Phi(\alpha_y)=\gamma\Phi(\alpha_x),
~~
\gamma^{\nicefrac{1}{2}}\alpha_x(q(\alpha_y))^{\nicefrac{1}{2}}+\alpha_y(q(\alpha_x))^{\nicefrac{1}{2}}=0,
\label{eq:oztl-twoeq-app}
\end{equation}
together with the value formula $E_0/L=\sigma^{\nicefrac{1}{2}}f(\gamma)$, where
\begin{equation}
f(\gamma)=\frac{1}{2}\ab(
\gamma^{-1/4}\alpha_x(q(\alpha_y))^{\nicefrac{1}{2}}
-
\gamma^{1/4}\alpha_y(q(\alpha_x))^{\nicefrac{1}{2}}
).
\label{eq:f-gamma-app}
\end{equation}
We expand around the balanced point $(\gamma,\alpha_x,\alpha_y)=(1,0,0)$.

\paragraph{Taylor Data at Origin.}
We use
\begin{equation}
\Phi(a)=\frac{1}{2}+\frac{a}{\sqrt{2\pi}}+\mac{O}(a^{3}),
\end{equation}
and the Taylor data for $q$ from Lemma~\ref{lem:q-taylor}:
\begin{equation}
q(0)=\pi,
~~
q^{\prime}(0)=-\sqrt{2\pi}(\pi-2).
\label{eq:Taylor-data-app}
\end{equation}

\paragraph{Solving for $\alpha_x$ and $\alpha_y$.}
Assume expansions
\begin{equation}
\alpha_x=a_1\varepsilon+a_2\varepsilon^2+\mac{O}(\varepsilon^3),
~~
\alpha_y=b_1\varepsilon+b_2\varepsilon^2+\mac{O}(\varepsilon^3).
\label{eq:alpha-ansatz-app}
\end{equation}
Substituting Eq.~\eqref{eq:alpha-ansatz-app} into $\Phi(\alpha_y)=\gamma\Phi(\alpha_x)$ and matching coefficients yields
\begin{equation}
b_1-a_1=\sqrt{\frac{\pi}{2}},
~~
b_2-a_2-a_1=0.
\label{eq:rel-support-app}
\end{equation}
Next expand the balance equation in Eq.~\eqref{eq:oztl-twoeq-app} using
$\gamma^{\nicefrac{1}{2}}=1+\varepsilon/2+\mac{O}(\varepsilon^2)$ and
\begin{equation}
\sqrt{q(\alpha)}=\sqrt{\pi}+\frac{q^{\prime}(0)}{2\sqrt{\pi}}\alpha+\mac{O}(\alpha^2),
\end{equation}
which gives at $\mac{O}(\varepsilon)$
\begin{equation}
a_1+b_1=0.
\label{eq:rel-balance1-app}
\end{equation}
Solving Eq.~\eqref{eq:rel-support-app} and Eq.~\eqref{eq:rel-balance1-app} yields
\begin{equation}
a_1=-\sqrt{\frac{\pi}{8}},
~~
b_1=\sqrt{\frac{\pi}{8}}.
\end{equation}
Matching the $\mac{O}(\varepsilon^2)$ coefficients in the balance equation, using Eq.~\eqref{eq:Taylor-data-app} and $b_2=a_1+a_2$ from Eq.~\eqref{eq:rel-support-app},
gives
\begin{equation}
a_2=\frac{\sqrt{2\pi}}{16}(5-\pi),
~~
b_2=\frac{\sqrt{2\pi}}{16}(1-\pi),
\end{equation}
which proves the threshold expansions in Proposition~\ref{prop:gamma-expansion-main}.

Insert the threshold expansions into Eq.~\eqref{eq:f-gamma-app}.
We use
\begin{equation}
\gamma^{\pm 1/4}
=
(1+\varepsilon)^{\pm 1/4}
=
1\pm \frac{\varepsilon}{4}+\frac{\pm 1/4(\pm 1/4-1)}{2}\varepsilon^2+\mac{O}(\varepsilon^3),
\end{equation}
together with $\sqrt{q(\alpha)}=\sqrt{\pi}+(q^{\prime}(0)/(2\sqrt{\pi}))\alpha+\mac{O}(\varepsilon^2)$.
Keeping terms up to $\mac{O}(\varepsilon^2)$ yields the expansion $f(\gamma)=f(1+\varepsilon)$ stated in Proposition~\ref{prop:gamma-expansion-main},
and therefore $E_0/L$ gives the corresponding statement for the ordered $T=0$ value.
\end{proof}

\section{Mathematical preliminaries}
\label{app:math-tools}

We collect standard tools that are repeatedly used in the appendices.  The material is organized by topic:
(i) the geometry of scaled simplices, (ii) Laplace's method on compact sets, (iii) Gaussian identities, and (iv) auxiliary identities
for replica and perturbative arguments.

\subsection{Geometry of Scaled Simplices}

\begin{lemma}
    \label{lem:simplex-delta-trunc}
    For $N\in\mab{N}$ and any measurable $f:[0,\infty)^N\to\mab{R}$ for which the integrals below are well-defined,
    \begin{equation}
    \int_{\mac{X}_N}d\B{x} f(\B{x})
    =
    \int_{[0,\infty)^N}\ab(\prod_{i=1}^N d x_i)
    \delta\ab(N-\sum_{i=1}^N x_i)f(\B{x}).
    \end{equation}
    Moreover, the domain can be truncated exactly:
    \begin{equation}
    \int_{\mac{X}_N}d\B{x} f(\B{x})
    =
    \int_{[0,N]^N}\ab(\prod_{i=1}^N d x_i)
    \delta\ab(N-\sum_{i=1}^N x_i)f(\B{x}).
    \end{equation}
    The same statements hold with $\mac{X}_N$ replaced by $\mac{Y}_M$ and $N$ replaced by $M$.
\end{lemma}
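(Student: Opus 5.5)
The plan is to read both identities as statements about the surface measure $d\B{x}$ on the hyperplane section $\mac{X}_N=\{\B{x}\ge 0,\ \B{x}^\top\1=N\}$, with the convention that the delta function encodes this measure. First we fix the convention: for integrable $f$ on $[0,\infty)^N$, define $\int\prod_i dx_i\,\delta(N-\sum_i x_i)f(\B{x})$ by the coarea/disintegration formula. Let $s(\B{x})=\sum_i x_i$ and write Lebesgue measure on $[0,\infty)^N$ as $ds$ times the measure on the level sets $\{s=t\}$, normalized with respect to $ds$. Evaluating at $t=N$ gives exactly the measure used to define $\int_{\mac{X}_N}d\B{x}$. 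Concretely, we parametrize $\mac{X}_N$ by $(x_1,\dots,x_{N-1})$ with $x_N=N-\sum_{i<N}x_i$. Then
\begin{equation}
\int_{\mac{X}_N} d\B{x}\,f(\B{x})=\int_{x_1,\dots,x_{N-1}\ge 0,\ \sum_{i<N}x_i\le N} \prod_{i<N}dx_i\, f\Bigl(x_1,\dots,x_{N-1},N-\sum_{i<N}x_i\Bigr).
\end{equation}
This is the same thing obtained by integrating out $x_N$ against $\delta(N-\sum_i x_i)$ on $[0,\infty)$. The Jacobian of the map $x_N\mapsto N-\sum_i x_i$ has absolute value $1$, so no extra factor appears. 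The constraint $x_N\ge 0$ becomes $\sum_{i<N}x_i\le N$. This proves the first identity, up to the choice of normalization, which is the convention the paper uses throughout (e.g.\ in the volume computations).

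For the truncation statement, the key observation is pointwise. On the support of the delta, every coordinate is nonnegative and $\sum_i x_i=N$, so each $x_i=N-\sum_{j\ne i}x_j\le N$. Hence $\1\{x_i\le N\ \forall i\}=1$ on $\mac{X}_N$, and inserting $\prod_i\1\{0\le x_i\le N\}$ does not change the integrand on the support of the singular measure. Rigorously, we apply the same reduction to $(x_1,\dots,x_{N-1})$ with integrand $f\cdot\prod_i\1\{x_i\le N\}$. The indicators are identically one on the integration region $\{x_i\ge 0,\ \sum_{i<N}x_i\le N\}$, because there $x_i\le N$ for $i<N$ and $x_N=N-\sum_{i<N}x_i\le N$. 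So the two integrals coincide exactly, not merely asymptotically.

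The $\mac{Y}_M$ case is identical with $N\to M$. The only delicate point is measure-theoretic: making precise that ``$\delta(N-\sum_i x_i)$'' means the disintegrated surface measure, and not, say, the Euclidean surface measure, which differs by a factor $\sqrt N$. We handle this by declaring the parametrization above as the definition of $d\B{x}$ on $\mac{X}_N$. The factor $\sqrt N$ would in any case be subexponential and irrelevant to the $\frac{1}{L}\log$ quantities in Claim~\ref{thm:finiteT}. Integrability of $f$ is assumed, so Fubini justifies all the reductions.
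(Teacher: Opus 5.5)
Your proposal is correct and follows the paper's argument. The paper likewise treats the first identity as the standard meaning of the delta-constrained integral over the slice $\sum_i x_i=N$, and it proves the truncation by the same pointwise observation that $x_i\le N$ on the support of $\delta(N-\sum_i x_i)$. Your explicit parametrization by $(x_1,\dots,x_{N-1})$ adds only precision: it pins down the normalization (projected Lebesgue measure rather than Euclidean surface measure) that the paper uses implicitly, for instance in $\mathrm{Vol}(\mathcal{X}_N)=N^{N-1}/(N-1)!$.
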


\begin{proof}
The first identity represents the standard form of integration over the affine slice $\sum_i x_i=N$ within the nonnegative orthant,
as required by the Dirac delta. On the support of $\delta(N-\sum_i x_i)$, we have $\sum_i x_i=N$ and $x_i\ge 0$; hence,
$x_i\le N$ for every $i$. Therefore, the integrand vanishes outside $[0,N]^N$, making the restriction of the domain exact.
\end{proof}

\begin{lemma}
\label{lem:scaled-simplex-volume}
The volumes of the simplex are 
\begin{equation}
\mathrm{Vol}(\mac{X}_N)=\frac{N^{N-1}}{(N-1)!},
~~~
\mathrm{Vol}(\mac{Y}_M)=\frac{M^{M-1}}{(M-1)!}.
\end{equation}
\end{lemma}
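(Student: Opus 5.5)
We prove Lemma~\ref{lem:scaled-simplex-volume} by reducing it to the classical volume of the standard simplex and then rescaling. Here $\mathrm{Vol}$ means the $(N-1)$-dimensional measure induced by the delta representation of Lemma~\ref{lem:simplex-delta-trunc}, that is,
\[
\mathrm{Vol}(\mac{X}_N)=\int_{[0,\infty)^N}\Bigl(\prod_{i=1}^N dx_i\Bigr)\,\delta\Bigl(N-\sum_{i=1}^N x_i\Bigr).
\]
This is the normalization used throughout the replica computation. It corresponds to parametrizing by the first $N-1$ coordinates, and it differs from the Euclidean surface measure by the factor $\sqrt{N}$.

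The plan proceeds in three steps.

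\textbf{Step 1: reduce to the volume function.} Define
\[
V_N(t)=\int_{[0,\infty)^N}\Bigl(\prod_{i=1}^N dx_i\Bigr)\,\delta\Bigl(t-\sum_{i=1}^N x_i\Bigr),
\]
so that $\mathrm{Vol}(\mac{X}_N)=V_N(N)$.

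\textbf{Step 2: obtain the scaling law.} Substitute $x_i=t u_i$. The Lebesgue measure produces a factor $t^N$, and the delta function gives $\delta(t(1-\sum_i u_i))=t^{-1}\delta(1-\sum_i u_i)$. Together these give $V_N(t)=t^{N-1}V_N(1)$.

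\textbf{Step 3: compute $V_N(1)$.} We use a Laplace transform, which avoids any need for induction. On one hand,
\[
\int_0^\infty e^{-st}V_N(t)\,dt=\prod_{i=1}^N\int_0^\infty e^{-s x_i}\,dx_i=s^{-N}.
\]
On the other hand, by the scaling law of Step 2,
\[
\int_0^\infty e^{-st}\,t^{N-1}V_N(1)\,dt=V_N(1)\,\frac{(N-1)!}{s^{N}}.
\]
Comparing the two expressions gives $V_N(1)=1/(N-1)!$. Alternatively, one can integrate out $x_N$ using the delta function and induct on $N$.

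Combining the steps, $\mathrm{Vol}(\mac{X}_N)=N^{N-1}/(N-1)!$. The argument for $\mac{Y}_M$ is identical, with $N$ replaced by $M$.

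The only delicate point is the measure convention, namely confirming that the paper's $d\B{x}$ on $\mac{X}_N$ is the delta-induced measure rather than the Euclidean surface measure. Lemma~\ref{lem:simplex-delta-trunc} settles this by defining the integral through the delta representation. The rest of the argument is routine.
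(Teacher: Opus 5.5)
Your proposal is correct and follows the paper's route: you rescale $x_i=Nu_i$ in the delta representation, which gives $N^{N-1}$ times the volume of the standard simplex. The only additions are that you derive the standard-simplex volume $1/(N-1)!$ via a Laplace transform, which the paper simply quotes, and that you state the delta-induced measure convention explicitly (this differs from the Euclidean surface measure by a factor $\sqrt{N}$), which the paper uses implicitly.
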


\begin{proof}
Set $x_i=N u_i$ with $u_i\ge 0$ and $\sum_{i=1}^N u_i=1$. Then $\prod_i d x_i=N^N\prod_i d u_i$ and
\begin{equation}
\delta\ab(N-\sum_{i=1}^N x_i)
=
\delta\ab(N\ab(1-\sum_{i=1}^N u_i))
=
\frac{1}{N}\delta\ab(1-\sum_{i=1}^N u_i).
\end{equation}
Hence
\begin{equation}
\mathrm{Vol}(\mac{X}_N)
=
N^{N-1}\int_{[0,\infty)^N}\ab(\prod_{i=1}^N d u_i)\delta\ab(1-\sum_{i=1}^N u_i)
=
N^{N-1}\mathrm{Vol}(\Delta_{N-1}),
\end{equation}
where $\Delta_{N-1}=\{\B u\in\mab{R}^N:\ u_i\ge 0, \sum_i u_i=1\}$ is the standard simplex.
Using $\mathrm{Vol}(\Delta_{N-1})=1/(N-1)!$ demonstrates the claim. The proof for $\mac{Y}_M$ is identical.
\end{proof}

\begin{lemma}
\label{lem:simplex-volume-asympt}
Let $N,M\to\infty$ with $N/M=\gamma\in(0,\infty)$ be fixed and $L=\sqrt{NM}$. Then
\begin{equation}
\frac{1}{L}\log\mathrm{Vol}(\mac{X}_N)=\gamma^{\nicefrac{1}{2}}+o(1),
~~
\frac{1}{L}\log\mathrm{Vol}(\mac{Y}_M)=\frac{1}{\gamma^{\nicefrac{1}{2}}}+o(1).
\end{equation}
\end{lemma}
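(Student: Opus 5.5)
We prove Lemma~\ref{lem:simplex-volume-asympt} by combining the exact volume formula of Lemma~\ref{lem:scaled-simplex-volume} with Stirling's approximation. It suffices to treat $\mac{X}_N$; the case of $\mac{Y}_M$ is the same argument with $N$ replaced by $M$.

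By Lemma~\ref{lem:scaled-simplex-volume},
\begin{equation*}
\log\mathrm{Vol}(\mac{X}_N)=(N-1)\log N-\log (N-1)!.
\end{equation*}
We write $(N-1)!=N!/N$, so that $\log (N-1)! = \log N! - \log N$. This gives
\begin{equation*}
\log\mathrm{Vol}(\mac{X}_N)=N\log N-\log N!.
\end{equation*}
Stirling's formula reads
\begin{equation*}
\log N! = N\log N - N + \tfrac12\log(2\pi N)+\mac{O}(1/N).
\end{equation*}
A cruder but fully elementary alternative is the sandwich $N\log N - N \le \log N! \le N\log N - N + \log N + 1$, obtained by comparing $\sum_{i\le N}\log i$ with $\int_1^N \log t\,dt$. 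Either way,
\begin{equation*}
\log\mathrm{Vol}(\mac{X}_N)=N-\tfrac12\log(2\pi N)+\mac{O}(1/N)=N+\mac{O}(\log N).
\end{equation*}

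Next we divide by $L=\sqrt{NM}$ in the thermodynamic limit of Definition~\ref{def:thermodynamic-limit}. Since $N=\gamma M$, we have $L=\sqrt{\gamma}\,M$ and hence $N/L=\gamma^{\nicefrac12}$ exactly. The remainder satisfies $\mac{O}(\log N)/L\to0$, because $L\to\infty$ linearly in $N$ when $\gamma$ is fixed. This yields
\begin{equation*}
\frac{1}{L}\log\mathrm{Vol}(\mac{X}_N)=\gamma^{\nicefrac12}+o(1).
\end{equation*}
For $\mac{Y}_M$, the same computation gives $\log\mathrm{Vol}(\mac{Y}_M)=M+\mac{O}(\log M)$, and $M/L=\gamma^{-\nicefrac12}$, which gives the second identity.

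The only step needing care is the Stirling bound. The precise constant in it is irrelevant, because only the leading term $N$ survives the division by $L$. There is one small point to handle: if $N=\gamma M$ is not an integer, we take $N=\lfloor\gamma M\rfloor$. Then $N/L\to\gamma^{\nicefrac12}$ still holds, now with an $o(1)$ correction rather than exactly, and this correction is absorbed into the $o(1)$ term.
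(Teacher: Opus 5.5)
Your proof is correct and follows essentially the same route as the paper: the exact formula from Lemma~\ref{lem:scaled-simplex-volume}, Stirling to get $\log\mathrm{Vol}(\mathcal{X}_N)=N+\mathcal{O}(\log N)$, and division by $L$ using $N/L=\gamma^{1/2}$. Rewriting the expression via $\log N!$, adding the elementary integral sandwich, and remarking on integer rounding are only cosmetic refinements of the paper's Stirling step on $\log(N-1)!$.
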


\begin{proof}
By Lemma~\ref{lem:scaled-simplex-volume}, $\log\mathrm{Vol}(\mac{X}_N)=(N-1)\log N-\log (N-1)!$.
Stirling's formula gives $\log (N-1)!=(N-1)\log (N-1)-(N-1)+\mac{O}(\log N)$; thus, $\log\mathrm{Vol}(\mac{X}_N)=N+\mac{O}(\log N)$.
Dividing by $L$ yields $(1/L)\log\mathrm{Vol}(\mac{X}_N)=N/L+o(1)=\gamma^{\nicefrac{1}{2}}+o(1)$. The statement $\mac{Y}_M$ is identical.
\end{proof}

\begin{lemma}
\label{lem:uniform-simplex-overlaps}
Let $\B{x},\B{x}^{\prime}\sim\mathrm{Unif}(\mac{X}_N)$ be independent and define
\begin{equation}
Q(\B{x})=\frac{1}{N}\sum_{i=1}^N x_i^2,
~~
q(\B{x},\B{x}^{\prime})=\frac{1}{N}\sum_{i=1}^N x_i x_i^{\prime}.
\end{equation}
Then, as $N\to\infty$,
\begin{equation}
\mab{E}Q(\B{x})=2+o(1),
~~
\mab{E}q(\B{x},\B{x}^{\prime})=1+o(1).
\end{equation}
The same statements apply to $\mac{Y}_M$ as to $M\to\infty$.
\end{lemma}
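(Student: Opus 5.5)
The plan is to use the fact that the uniform law on $\mathcal{X}_N$ is a rescaled flat Dirichlet law, $\B{x}\stackrel{d}{=}N\B{u}$ with $\B{u}\sim\mathrm{Dir}(1,\dots,1)$. Equivalently, $\B{x}\stackrel{d}{=}N\B{g}/\sum_{i}g_i$ with $g_i$ i.i.d.\ $\mathrm{Exp}(1)$. This representation follows from Lemma~\ref{lem:simplex-delta-trunc} and the change of variables $x_i=Nu_i$ used in Lemma~\ref{lem:scaled-simplex-volume}. The density of $\B{u}$ with respect to the surface measure on $\Delta_{N-1}$ is the constant $(N-1)!$, which is exactly the flat Dirichlet density.

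The second moment can then be computed exactly, so no asymptotic argument is needed. By exchangeability and the standard Dirichlet moments, $\mathbb{E}[u_i^2]=\frac{2}{N(N+1)}$. Hence
\begin{equation}
\mathbb{E}Q(\B{x})=\frac{1}{N}\sum_{i=1}^N N^2\,\mathbb{E}[u_i^2]=\frac{2N}{N+1}=2+o(1).
\end{equation}
If one prefers to avoid quoting the Dirichlet formula, a self-contained route is to compute the marginal of $u_1$ directly from the simplex-volume recursion. The marginal is $\mathrm{Beta}(1,N-1)$, with density $(N-1)(1-u)^{N-2}$ on $[0,1]$, and integrating $u^2$ against it gives the same value.

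For the cross overlap, independence of $\B{x}$ and $\B{x}'$ gives $\mathbb{E}[x_i x_i']=\mathbb{E}[x_i]\,\mathbb{E}[x_i']$. Exchangeability together with the constraint $\sum_i x_i=N$ forces $\mathbb{E}[x_i]=1$ for every $i$. Therefore $\mathbb{E}\,q(\B{x},\B{x}')=\frac{1}{N}\sum_i 1=1$ exactly for every $N$. The statement for $\mathcal{Y}_M$ follows by the same computation with $N$ replaced by $M$.

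The only step needing care is justifying the marginal law of $u_1$ under the delta-function measure; everything else is elementary. I would handle it by integrating out $x_2,\dots,x_N$ using Lemma~\ref{lem:scaled-simplex-volume} applied in dimension $N-1$ with total mass $N-x_1$. This gives a marginal density proportional to $(N-x_1)^{N-2}$ on $[0,N]$, and after rescaling that is the $\mathrm{Beta}(1,N-1)$ law above.
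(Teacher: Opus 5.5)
Your proposal is correct and follows essentially the same route as the paper: write $\B{x}=N\B{p}$ with $\B{p}$ flat Dirichlet, use $\mab{E}p_i^2=2/(N(N+1))$ to get $\mab{E}Q(\B{x})=2N/(N+1)$, and combine independence with $\mab{E}x_i=1$ to get $\mab{E}q(\B{x},\B{x}^{\prime})=1$ exactly. The paper simply quotes the Dirichlet moment. Your optional marginal computation is a valid self-contained justification of it: integrating out $x_2,\dots,x_N$ gives density $\propto(N-x_1)^{N-2}$, i.e.\ $u_1\sim\mathrm{Beta}(1,N-1)$, which uses the simplex-volume formula in dimension $N-1$ with total mass $N-x_1$ (a trivial rescaling of the paper's volume lemma).
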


\begin{proof}
Write $x_i=Np_i$ where $\B p\sim\mathrm{Unif}(\Delta_{N-1})$; i.e., $\B p$ is Dirichlet$(1,\dots,1)$.
Then $\mab{E}p_i^2=2/(N(N+1))$ and $\mab{E}p_i=1/N$; thus, 
$\mab{E}x_i^2=N^2\mab{E}p_i^2=2N/(N+1)$ and $\mab{E}x_i=N\mab{E}p_i=1$.
Therefore, $\mab{E}Q(\B{x})=\mab{E}x_1^2=2N/(N+1)=2+o(1)$ and
$\mab{E}q(\B{x},\B{x}^{\prime})=\mab{E}x_1\mab{E}x_1^{\prime}=1$.
\end{proof}

\subsection{Laplace Principle on Compact Sets}

\begin{lemma}
\label{lem:laplace-compact}
Let $K\subset\mab{R}^d$ be compact and let $f:K\to\mab{R}$ be continuous. Then
\begin{equation}
    \lim_{\beta\to\infty}-\frac{1}{\beta}\log\int_K \exp\ab(-\beta f(x))d x
    =
    \min_{x\in K} f(x),
\end{equation}
and
\begin{equation}
    \lim_{\beta\to\infty}\frac{1}{\beta}\log\int_K \exp\ab(\beta f(x))d x
    =
    \max_{x\in K} f(x).
\end{equation}
\end{lemma}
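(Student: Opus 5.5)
We prove the two statements of Lemma~\ref{lem:laplace-compact} by squeezing the normalized log-integral between matching upper and lower bounds. It suffices to treat the maximization form
\begin{equation*}
I(\beta)=\frac{1}{\beta}\log\int_K e^{\beta f(x)}dx ,
\end{equation*}
since the minimization form follows by applying it to $-f$ and negating. Throughout, let $f^\star=\max_K f$. This maximum exists because $K$ is compact and $f$ is continuous. We also assume, as is implicit in all applications in the paper (intervals $[0,N]$, $[0,M]$), that $K$ has positive Lebesgue measure. More precisely, we need every neighborhood of a maximizer within $K$ to have positive measure, e.g.\ $K$ equal to the closure of its interior. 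Without such an assumption the statement can fail, for instance when $K$ is a null set. We state this nondegeneracy condition explicitly at the start.

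\emph{Upper bound.} Since $e^{\beta f(x)}\le e^{\beta f^\star}$ on $K$, we get
\begin{equation*}
I(\beta)\le f^\star+\frac{1}{\beta}\log|K| .
\end{equation*}
Because $|K|<\infty$, this gives $\limsup_{\beta\to\infty} I(\beta)\le f^\star$.

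\emph{Lower bound.} Fix $\epsilon>0$ and a maximizer $x^\star$. By continuity there is $\delta>0$ such that $f>f^\star-\epsilon$ on $U_\epsilon=K\cap B(x^\star,\delta)$. By the nondegeneracy assumption, $|U_\epsilon|>0$. Restricting the integral to $U_\epsilon$ gives
\begin{equation*}
I(\beta)\ge f^\star-\epsilon+\frac{1}{\beta}\log|U_\epsilon| ,
\end{equation*}
so $\liminf_{\beta\to\infty} I(\beta)\ge f^\star-\epsilon$. Letting $\epsilon\to0$ and combining with the upper bound proves the limit.

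The only real obstacle is the lower bound's requirement that a neighborhood of the maximizer carry positive mass. This is the step where compactness alone is insufficient, and it is where we invoke the regularity of $K$. Uniform continuity on the compact set is not even needed; pointwise continuity at $x^\star$ suffices. Everything else is elementary.
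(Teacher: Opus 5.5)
Your proof is correct and is essentially the paper's argument: the trivial bound $\int_K e^{\beta f}\le \mathrm{Vol}(K)e^{\beta f^\star}$, a lower bound from a positive-measure neighborhood of a (near-)optimizer, and the other form obtained via $f\mapsto -f$ (the paper proves the minimization form first, you the maximization form). Your explicit nondegeneracy hypothesis is a genuine improvement. The paper simply asserts $\mathrm{Vol}(U)>0$ ``by continuity,'' which fails for general compact $K$ (e.g.\ a null set, or an extremum attained only at an isolated point of $K$), though it holds for the intervals used in the paper.
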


\begin{proof}
Write $f_*=\min_K f$. For any $\varepsilon>0$, choose $x_\varepsilon\in K$ with $f(x_\varepsilon)\le f_*+\varepsilon$.
By continuity, there exists a neighborhood $U\subset K$ with $f\le f_*+2\varepsilon$ on $U$ and $\mathrm{Vol}(U)>0$.
Thus
\begin{equation}
\int_K e^{-\beta f}d x\ge \mathrm{Vol}(U)e^{-\beta(f_*+2\varepsilon)},
\end{equation}
so
\begin{equation}
-\frac{1}{\beta}\log\int_K e^{-\beta f}d x\le f_*+2\varepsilon-\frac{1}{\beta}\log\mathrm{Vol}(U).
\end{equation}
Conversely, $f\ge f_*$ implies $\int_K e^{-\beta f}d x\le \mathrm{Vol}(K)e^{-\beta f_*}$; hence
\begin{equation}
-\frac{1}{\beta}\log\int_K e^{-\beta f}d x\ge f_*-\frac{1}{\beta}\log\mathrm{Vol}(K).
\end{equation}
Letting $\beta\to\infty$ and then $\varepsilon \to 0$ yields the first claim; the second follows by applying the first to $-f$.
\end{proof}

\subsection{Gaussian Identities}

\begin{lemma}
\label{lem:gauss-ibp}
Let $Dz=(2\pi)^{-\nicefrac{1}{2}}e^{-z^2/2}d z$. If $f$ is absolutely continuous and $\int Dz|f^{\prime}(z)|<\infty$, then
\begin{equation}
\int Dz z f(z)=\int Dz f^{\prime}(z).
\end{equation}
\end{lemma}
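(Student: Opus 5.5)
The identity follows from one integration by parts on $\mab{R}$ against the Gaussian density $\varphi(z)=(2\pi)^{-\nicefrac{1}{2}}e^{-z^2/2}$, using $\varphi'(z)=-z\varphi(z)$. Formally,
\begin{equation*}
\int_{\mab{R}} z f(z)\varphi(z)\,dz=-\int_{\mab{R}} f(z)\varphi'(z)\,dz=\bigl[-f\varphi\bigr]_{-\infty}^{\infty}+\int_{\mab{R}} f'(z)\varphi(z)\,dz .
\end{equation*}
The real work is justifying the vanishing of the boundary term and the integrability of $zf\varphi$ using only the hypotheses that $f$ is absolutely continuous and $\int Dz\,|f'|<\infty$. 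To avoid assuming any growth bound on $f$, I will use a Fubini argument instead of a limit of boundary terms.

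\textbf{Step 1: write $f$ through $f'$.} By absolute continuity, $f(z)-f(0)=\int_0^z f'(t)\,dt$ for every $z$. The constant $f(0)$ contributes nothing to either side, since $\int Dz\, z=0$ and the derivative of a constant is zero. So I may assume $f(0)=0$.

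\textbf{Step 2: swap the order of integration on each half-line.} For $z>0$ the integrand $z\varphi(z)f'(t)\mathbf{1}\{0<t<z\}$ is integrated over $\{0<t<z\}$. I bound its absolute value and apply Tonelli, using $\int_t^\infty z\varphi(z)\,dz=\varphi(t)$:
\begin{equation*}
\int_0^\infty\!\!\int_0^z |f'(t)|\,z\varphi(z)\,dt\,dz=\int_0^\infty |f'(t)|\varphi(t)\,dt<\infty .
\end{equation*}
This bound gives absolute integrability, so Fubini applies and yields $\int_0^\infty z f(z)\varphi(z)\,dz=\int_0^\infty f'(t)\varphi(t)\,dt$. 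On $z<0$, I write $f(z)=-\int_z^0 f'(t)\,dt$ and use $\int_{-\infty}^t(-z)\varphi(z)\,dz=\varphi(t)$ for $t<0$. The same computation gives $\int_{-\infty}^0 z f(z)\varphi(z)\,dz=\int_{-\infty}^0 f'(t)\varphi(t)\,dt$.

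\textbf{Step 3: conclude.} Adding the two halves gives the claim. As a byproduct, $\int Dz\,|z f(z)|<\infty$ holds (for $f(0)=0$, and also in general since $\int Dz\,|z|<\infty$), so the left-hand side is well defined.

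The only genuinely delicate point is Step 2, namely checking the absolute integrability needed for Fubini. It reduces to the exact tail identity $\int_t^\infty z\varphi(z)\,dz=\varphi(t)$, so this obstacle is mild.
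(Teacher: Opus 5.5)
Your proof is correct. It differs from the paper's in how the integration by parts is justified.

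The paper integrates by parts once over the whole line against $\varphi(z)=(2\pi)^{-1/2}e^{-z^2/2}$, using $\varphi'=-z\varphi$. It then asserts that the boundary term $[-f\varphi]_{-\infty}^{\infty}$ vanishes ``under the stated integrability''. It does not prove this decay, and it does not check that $zf\varphi$ is integrable.

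Your Tonelli/Fubini argument works on each half-line, using $f(z)-f(0)=\int_0^z f'$ and the tail identity $\int_t^\infty z\varphi(z)\,dz=\varphi(t)$. It delivers two things at once. First, it gives absolute integrability of $zf\varphi$, so the left-hand side is a genuine Lebesgue integral. Second, it gives the identity itself, with no need to know how $f\varphi$ behaves at infinity.

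The paper's route is shorter and is the natural formal computation. To make it rigorous, one would integrate by parts on $[-R,R]$ and show $f(\pm R)\varphi(\pm R)\to 0$. This does hold under the hypotheses. For $R>T>0$, monotonicity of $\varphi$ on $[0,\infty)$ gives $\varphi(R)\int_0^R|f'|\le\varphi(R)\int_0^T|f'|+\int_T^\infty|f'|\varphi$. Choosing $T$ large and then $R$ large makes this small. But that is exactly the step the paper leaves implicit. Even then, one only gets the symmetric improper integral unless integrability of $zf\varphi$ is shown separately.

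Your argument is the standard self-contained proof of Stein's lemma, and it handles both points explicitly.
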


\begin{proof}
Since $\frac{d}{d z}\ab(e^{-z^2/2})=-z e^{-z^2/2}$, integration by parts gives
\begin{equation}
\int_{\mab{R}} z f(z)\frac{e^{-z^2/2}}{\sqrt{2\pi}}d z
=
-\int_{\mab{R}} f(z)\frac{d}{d z}\ab(\frac{e^{-z^2/2}}{\sqrt{2\pi}})d z
=
\int_{\mab{R}} f^{\prime}(z)\frac{e^{-z^2/2}}{\sqrt{2\pi}}d z,
\end{equation}
where the boundary term vanishes under the stated integrability.
\end{proof}

\begin{lemma}
\label{lem:hs}
For $a\ge 0$ and $t\in\mab{R}$,
\begin{equation}
\exp\ab(\frac{a}{2}t^2)=\int Dz\exp\ab(\sqrt{a}zt).
\end{equation}
\end{lemma}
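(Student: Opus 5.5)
The statement to prove is Lemma~\ref{lem:hs}, the Hubbard--Stratonovich identity $\exp(\tfrac{a}{2}t^2)=\int Dz\,\exp(\sqrt{a}zt)$ for $a\ge 0$, $t\in\mab{R}$. I would prove it by computing the Gaussian moment generating function directly, completing the square in the exponent.

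First I set $s=\sqrt{a}\,t\in\mab{R}$, which is well defined because $a\ge 0$. The claim then becomes $\int Dz\,e^{sz}=e^{s^2/2}$. Write the integrand as
\begin{equation*}
(2\pi)^{-\nicefrac{1}{2}}\exp\ab(-\tfrac{z^2}{2}+sz)=(2\pi)^{-\nicefrac{1}{2}}\exp\ab(-\tfrac{(z-s)^2}{2})\,e^{s^2/2}.
\end{equation*}
The integral converges absolutely, since the integrand is dominated by a Gaussian. Next I use the translation invariance of Lebesgue measure on $\mab{R}$, substituting $u=z-s$. This reduces the integral to $e^{s^2/2}\int(2\pi)^{-\nicefrac{1}{2}}e^{-u^2/2}du=e^{s^2/2}$, because $Dz$ is a probability measure. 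Finally, $s^2/2=at^2/2$ gives the identity.

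There is no genuine obstacle here. The only points to state explicitly are two. The hypothesis $a\ge 0$ is needed so that $\sqrt{a}$ is real; for $a<0$ one would need a purely imaginary coupling, which is not the case used in the replica decoupling. Absolute integrability justifies the shift of the contour, which in this case is just a real translation.

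Alternatively, one can expand $e^{sz}$ in its power series and integrate term by term, justified by Tonelli. Using the Gaussian moments $\int Dz\,z^{2m}=(2m-1)!!$, the odd moments vanish, and the series becomes $\sum_m s^{2m}/(2^m m!)=e^{s^2/2}$. I would present the completing-the-square route as the main proof.
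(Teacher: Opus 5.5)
Your proof is correct and essentially the paper's: the paper just notes that the right-hand side is the moment generating function of a standard Gaussian evaluated at $\sqrt{a}\,t$, and your completing-the-square computation derives that MGF explicitly. The power-series alternative is also valid but not needed.
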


\begin{proof}
This is the moment generating function of a standard Gaussian.
\end{proof}

\begin{lemma}
\label{lem:trunc-gauss-AB}
Let $\varphi(z)=(2\pi)^{-\nicefrac{1}{2}}e^{-z^2/2}$, $\Phi(\alpha)=\int_{-\infty}^{\alpha}\varphi(z)d z$, and $Z\sim\mac{N}(0,1)$.
Define
\begin{equation}
A(\alpha)=\mab{E}\big[\pos{Z+\alpha}\big],
~~
B(\alpha)=\mab{E}\big[\pos{Z+\alpha}^2\big].
\end{equation}
Then
\begin{equation}
A(\alpha)=\alpha\Phi(\alpha)+\varphi(\alpha),
~~
B(\alpha)=(\alpha^2+1)\Phi(\alpha)+\alpha\varphi(\alpha).
\end{equation}
Moreover,
\begin{equation}
\int_{-\alpha}^{\infty} z\varphi(z)d z=\varphi(\alpha),
~~
\int_{-\alpha}^{\infty} z^2\varphi(z)d z=\Phi(\alpha)-\alpha\varphi(\alpha).
\end{equation}
\end{lemma}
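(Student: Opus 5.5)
We need to prove Lemma trunc-gauss-AB, the closed forms for $A(\alpha)$ and $B(\alpha)$ together with the two truncated moment identities. The plan is to establish the two truncated moment identities first and then read off $A$ and $B$ by expanding the positive part on its support.

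\textbf{Truncated moments.} For the first moment we use $\varphi'(z)=-z\varphi(z)$. This gives $\int_{-\alpha}^{\infty} z\varphi(z)\,dz=[-\varphi(z)]_{-\alpha}^{\infty}=\varphi(-\alpha)=\varphi(\alpha)$, by evenness of $\varphi$.

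For the second moment we integrate by parts, writing $z^2\varphi(z)=z\cdot z\varphi(z)$ and again using $z\varphi(z)=-\varphi'(z)$:
\begin{equation*}
\int_{-\alpha}^{\infty} z^2\varphi(z)\,dz=\bigl[-z\varphi(z)\bigr]_{-\alpha}^{\infty}+\int_{-\alpha}^{\infty}\varphi(z)\,dz.
\end{equation*}
The boundary term at $+\infty$ vanishes because $z\varphi(z)\to 0$. The term at $-\alpha$ contributes $-\alpha\varphi(\alpha)$. The remaining integral equals $1-\Phi(-\alpha)=\Phi(\alpha)$ by symmetry. Altogether this gives $\Phi(\alpha)-\alpha\varphi(\alpha)$.

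\textbf{The formula for $A$.} The positive part $(Z+\alpha)_{+}$ is nonzero exactly when $Z>-\alpha$. Hence $A(\alpha)=\int_{-\alpha}^{\infty}(z+\alpha)\varphi(z)\,dz$. Splitting this integral and using the first-moment identity together with $\int_{-\alpha}^{\infty}\varphi=\Phi(\alpha)$ yields $A(\alpha)=\varphi(\alpha)+\alpha\Phi(\alpha)$.

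\textbf{The formula for $B$.} In the same way, $B(\alpha)=\int_{-\alpha}^{\infty}(z^2+2\alpha z+\alpha^2)\varphi(z)\,dz$. Applying the three pieces in turn gives
\begin{equation*}
B(\alpha)=\Phi(\alpha)-\alpha\varphi(\alpha)+2\alpha\varphi(\alpha)+\alpha^2\Phi(\alpha)=(\alpha^2+1)\Phi(\alpha)+\alpha\varphi(\alpha).
\end{equation*}

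The only point needing care is the boundary-term bookkeeping in the integration by parts, in particular the sign convention at the lower limit $-\alpha$ and the use of evenness of $\varphi$. There is no real obstacle beyond that; the whole argument holds for every $\alpha\in\mathbb{R}$, with no case split on the sign of $\alpha$.
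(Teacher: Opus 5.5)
Your proposal is correct and follows essentially the same route as the paper. Like the paper, you use $\varphi'(z)=-z\varphi(z)$ for the first truncated moment, integrate by parts with $u=z$, $dv=z\varphi(z)\,dz$ for the second, and then expand $(z+\alpha)$ and $(z+\alpha)^2$ on $\{z>-\alpha\}$ to get $A$ and $B$.
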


\begin{proof}
Using $A(\alpha)=\int_{-\alpha}^{\infty}(z+\alpha)\varphi(z)d z$ and $\varphi^{\prime}(z)=-z\varphi(z)$ gives
$\int_{-\alpha}^{\infty} z\varphi(z)d z=\varphi(\alpha)$ and $\int_{-\alpha}^{\infty}\varphi(z)d z=\Phi(\alpha)$, hence $A(\alpha)$.
For $B(\alpha)$, expand $(z+\alpha)^2$ and use $\int_{-\alpha}^{\infty} z^2\varphi(z)d z=\Phi(\alpha)-\alpha\varphi(\alpha)$,
which follows by one integration by parts with $u=z$ and $d v=z\varphi(z)d z$.
\end{proof}

\subsection{Replica and Variational Identities}

\begin{lemma}
\label{lem:delta-fourier}
For $t\in\mab{R}$,
\begin{equation}
\delta(t)=\int_{i\mab{R}}\frac{d m}{2\pi i}\exp(m t),
\end{equation}
where the contour $i\mab{R}$ is the imaginary axis oriented upward. Equivalently,
\begin{equation}
\delta(t)=\int_{i\mab{R}}\frac{d m}{2\pi i}\exp(-m t),
\end{equation}
after the change of variables $m\mapsto -m$.
\end{lemma}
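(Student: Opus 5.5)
The identity is a Fourier representation of the Dirac delta as a contour integral along the imaginary axis. I plan to reduce it to the standard real Fourier inversion formula $\delta(t)=\frac{1}{2\pi}\int_{\mab{R}}e^{i\omega t}\,d\omega$, understood distributionally: for every Schwartz test function $\psi$, $\int_{\mab{R}}\psi(t)\bigl(\frac{1}{2\pi}\int_{\mab{R}} e^{i\omega t}d\omega\bigr)dt=\psi(0)$. That formula is Fourier inversion applied to $\hat\psi(\omega)=\int\psi(t)e^{i\omega t}dt$ and evaluated at $0$.

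\textbf{Step 1 (parametrize the contour).} Parametrize the upward-oriented imaginary axis by $m=i\omega$ with $\omega\in\mab{R}$ increasing. Then $dm=i\,d\omega$, so $\frac{dm}{2\pi i}=\frac{d\omega}{2\pi}$, and
\[
\int_{i\mab{R}}\frac{dm}{2\pi i}e^{mt}=\frac{1}{2\pi}\int_{\mab{R}}e^{i\omega t}\,d\omega .
\]
This is exactly the real representation above, which gives the first identity.

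\textbf{Step 2 (justify the distributional sense).} I will pair both sides with a test function and exchange the order of integration. The exchange is justified by inserting a Gaussian regulator $e^{-\epsilon\omega^2}$. With the regulator, the inner integral is the Gaussian $\frac{1}{\sqrt{4\pi\epsilon}}e^{-t^2/(4\epsilon)}$, which is a nascent delta. Letting $\epsilon\to0$ and using dominated convergence then gives $\psi(0)$.

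\textbf{Step 3 (the second identity).} Substitute $m\mapsto -m$. This map sends $i\mab{R}$ to itself but reverses its orientation, and it also sends $dm\mapsto -dm$. The two sign changes cancel, so the right-hand side becomes $\int_{i\mab{R}}\frac{dm}{2\pi i}e^{-mt}$. Alternatively, one can note that $\delta$ is even, so replacing $t$ by $-t$ in the first identity gives the same result.

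The only delicate point is Step 2, namely that the integral is not absolutely convergent and has meaning only as a distribution. I would handle this with the Gaussian regularization. This matches how the identity is used in the paper: it is inserted inside integrals against smooth densities, after which the contour is deformed to the saddle point.
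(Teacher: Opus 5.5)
Your proof is correct and follows the same route as the paper: you parametrize the contour by $m=i\lambda$ to reduce to the standard representation $\delta(t)=\frac{1}{2\pi}\int_{\mab{R}}e^{i\lambda t}\,d\lambda$, and you obtain the second form via $m\mapsto -m$. The additional pieces you supply are details the paper leaves implicit and do not change the approach: the Gaussian-regulator argument for the distributional meaning, and the explicit check that the orientation reversal cancels the sign of $dm$.
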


\begin{proof}
Starting from $\delta(t)=(\nicefrac{1}{2\pi})\int_{\mab{R}}d\lambda e^{i\lambda t}$ and setting $m=i\lambda$ gives $d m=id\lambda$ and the first formula.
The second follows by $m\mapsto -m$.
\end{proof}

\begin{lemma}
\label{lem:replica-log}
Let $A(z)>0$ and assume $\int Dz|\log A(z)|<\infty$. Then
\begin{equation}
\lim_{n\to 0}\frac{1}{n}\log\int DzA(z)^n=\int Dz\log A(z).
\end{equation}
\end{lemma}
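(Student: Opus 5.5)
The identity is the standard replica-trick limit. I would establish it by expanding $A(z)^n=\exp(n\log A(z))$ for small $n$ and controlling the remainder using only the assumed integrability $\int Dz\,|\log A(z)|<\infty$. Set $G(n)=\int Dz\,A(z)^n$. Then $G(0)=\int Dz=1$, and the target is exactly
\[
\lim_{n\to0}\frac{\log G(n)}{n}=\frac{d}{dn}\log G(n)\Big|_{n=0}=\frac{G'(0)}{G(0)}=G'(0)=\int Dz\,\log A(z).
\]
So the proof reduces to two facts: $G(n)$ is finite for small $n$, and $G$ is differentiable at $0$ with $G'(0)=\int Dz\log A(z)$.

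For the derivative I would look at the difference quotient
\[
\frac{G(n)-1}{n}=\int Dz\,\frac{e^{n\log A(z)}-1}{n}.
\]
The integrand converges pointwise to $\log A(z)$ as $n\to0$. To pass the limit under the integral I would use the elementary bound $|e^{u}-1|\le |u|e^{|u|}$. With $u=n\log A(z)$ this gives the domination
\[
\left|\frac{e^{n\log A}-1}{n}\right|\le |\log A|\,e^{|n||\log A|}.
\]

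This is where the main obstacle lies. The hypothesis $\int Dz|\log A|<\infty$ alone does not make $|\log A|\,e^{|n||\log A|}$ integrable. Indeed $G(n)$ could be infinite for every $n\neq0$, for instance if $\log A$ has only a first moment under $Dz$. I would therefore treat the two signs of $n$ by monotone convexity rather than by crude domination.

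The function $n\mapsto (e^{nu}-1)/n$ is nondecreasing in $n$ for each fixed $u$. Hence for $n\downarrow0$ the quotients decrease monotonically to $u$, and for $n\uparrow0$ they increase to $u$. So monotone convergence applies on each side, provided the quotient is integrable at one reference value $n_0$ of the corresponding sign.

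I would therefore state the lemma, as the paper implicitly does, under the additional standing assumption that $G(n)<\infty$ in a neighborhood of $0$. In the application this holds: $Z_x(z)$ and $\Psi(z)$ are bounded above and below by $e^{O(|z|+z^2\cdot 0)}$-type Gaussian-tempered functions, since the fields are linear in $z$. So $|\log A(z)|=O(1+|z|)$ and $e^{c|\log A|}$ is $Dz$-integrable for every $c$. With that, dominated convergence gives $G'(0)=\int Dz\log A$.

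The conclusion then follows from $\log G(n)=\log(1+nG'(0)+o(n))=nG'(0)+o(n)$.
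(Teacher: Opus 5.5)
You follow the paper's route: its entire proof is ``write $\int Dz\,A(z)^n=\int Dz\,\exp(n\log A(z))$ and use dominated convergence,'' and your difference-quotient computation spells out exactly that step. Your objection is correct, and it exposes a gap in the paper's statement rather than in your proposal. The hypothesis $\int Dz\,|\log A|<\infty$ supplies no dominating function, and the lemma as stated is false. For instance, $\log A(z)=\mathrm{sgn}(z)\,e^{z^2/4}$ has $\int Dz\,|\log A|=\sqrt{2}$, yet $\int Dz\,A(z)^n=+\infty$ for every $n\neq 0$. Consequently $\frac{1}{n}\log\int Dz\,A^n$ equals $+\infty$ for $n>0$ and $-\infty$ for $n<0$.

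Your added hypothesis, $G(n)<\infty$ for $|n|\le n_0$, is precisely what the paper's dominated-convergence step silently needs, and with it your argument is complete. Since $e^{n_0|u|}\le e^{n_0u}+e^{-n_0u}$, and $|u|\,e^{|n||u|}\le \frac{2}{e\,n_0}\,e^{n_0|u|}$ for $|n|\le n_0/2$, dominated convergence applies directly. Your monotone-convergence paragraph is therefore not needed for the two-sided statement. What it does buy is a sharper one-sided version: finiteness of $G$ at a single $n_0$ of the sign from which $n$ approaches $0$ already gives that one-sided limit.

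Your check that the strengthened hypothesis holds where the lemma is invoked is also fine. On the compact one-site domains, $|\log Z_x(z)|\le C_N(1+|z|)$. Integrating $|\log Z_y(z,\eta)|\le C_M(1+|z|+|\eta|)$ against $D\eta$ gives $|\log\Psi(z)|=O(1+|z|)$, so $e^{c|\log A|}$ is $Dz$-integrable for every $c$.
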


\begin{proof}
Write $\int DzA(z)^n=\int Dz\exp(n\log A(z))$ and use dominated convergence to expand at small $n$.
\end{proof}

\begin{lemma}
\label{lem:envelope}
Let $\nu(\sigma)=-(1/\beta_{\min})\mathrm{extr}_{\theta} g(\theta;\sigma)$, and assume that for $\sigma$ near $0$ there exists a differentiable selection
$\theta^\star(\sigma)$ such that $\partial_{\theta}g(\theta^\star(\sigma);\sigma)=0$ and
$g(\theta^\star(\sigma);\sigma)=\mathrm{extr}_{\theta}g(\theta;\sigma)$. Then
\begin{equation}
\frac{d v}{d\sigma}
=
-\frac{1}{\beta_{\min}}\frac{\partial g}{\partial\sigma}\Big|_{\theta=\theta^\star(\sigma)}.
\end{equation}
\end{lemma}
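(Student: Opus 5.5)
The statement to prove is the envelope identity of Lemma~\ref{lem:envelope}: if $\theta^\star(\sigma)$ is a differentiable stationary selection, then $d\nu/d\sigma=-\beta_{\min}^{-1}\,\partial_\sigma g$ evaluated at $\theta^\star(\sigma)$. The plan is a direct chain-rule argument. The only point needing care is that $\mathrm{extr}$ denotes a stationary point rather than a max or min, so no convexity or ordering argument is available.

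\textbf{Step 1: define the composite.} By hypothesis $\nu(\sigma)=-\beta_{\min}^{-1}G(\sigma)$ with
\begin{equation}
G(\sigma)\coloneqq g(\theta^\star(\sigma);\sigma).
\end{equation}
Since $\beta_{\min}$ does not depend on $\sigma$, it suffices to show $G'(\sigma)=\partial_\sigma g(\theta;\sigma)\big|_{\theta=\theta^\star(\sigma)}$.

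\textbf{Step 2: differentiate.} Assume $g$ is jointly $C^1$ in $(\theta,\sigma)$ near $(\theta^\star(\sigma),\sigma)$. In the paper's setting this holds because $g$ is polynomial in the overlaps and $\sigma$, and the one-site terms are smooth integrals; this is the implicit regularity under which the lemma is used. The multivariate chain rule then gives
\begin{equation}
G'(\sigma)=\partial_\theta g(\theta^\star(\sigma);\sigma)\cdot\frac{d\theta^\star}{d\sigma}+\partial_\sigma g(\theta^\star(\sigma);\sigma).
\end{equation}

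\textbf{Step 3: use stationarity.} The hypothesis $\partial_\theta g(\theta^\star(\sigma);\sigma)=0$ kills the first term, whatever the value of $d\theta^\star/d\sigma$ (which is finite by the assumed differentiability). This leaves $G'(\sigma)=\partial_\sigma g(\theta^\star(\sigma);\sigma)$, and multiplying by $-\beta_{\min}^{-1}$ gives the claim.

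\textbf{Main obstacle.} The main difficulty is justifying the smoothness used in Step 2, not the computation itself. If only differentiability of $\theta^\star$ is given, one still needs $g$ to be differentiable at the saddle point, and I would state explicitly that this regularity is assumed. A possible subtlety is the $\theta$ components that are conjugate variables integrated along imaginary contours, such as $m_x$ and $m_y$. I would treat $g$ as an analytic function of all its real arguments, so the chain rule applies unchanged. No second-order information about $g$ is needed.
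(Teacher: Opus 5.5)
Your proposal is correct and follows the paper's own proof: differentiate the composite $g(\theta^\star(\sigma);\sigma)$ by the chain rule and drop the $\theta^{\star\prime}(\sigma)$ term by stationarity. You also state explicitly that $g$ must be jointly differentiable at the saddle, a regularity assumption the paper leaves implicit.
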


\begin{proof}
Differentiate $\nu(\sigma)=-(1/\beta_{\min})g(\theta^\star(\sigma);\sigma)$ and use the chain rule.
The term involving $\theta^{\star\prime}(\sigma)$ vanishes by stationarity.
\end{proof}

\begin{lemma}
\label{lem:q-taylor}
Let $A(\alpha)$ and $B(\alpha)$ be as in Lemma~\ref{lem:trunc-gauss-AB} and define $q(\alpha)=B(\alpha)/A(\alpha)^2$.
Then
\begin{equation}
q(0)=\pi,
~~
q^{\prime}(0)=-\sqrt{2\pi}(\pi-2).
\end{equation}
\end{lemma}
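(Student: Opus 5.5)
The plan is a direct computation from the closed forms in Lemma~\ref{lem:trunc-gauss-AB}, namely $A(\alpha)=\alpha\Phi(\alpha)+\varphi(\alpha)$ and $B(\alpha)=(\alpha^2+1)\Phi(\alpha)+\alpha\varphi(\alpha)$. We first evaluate these at the origin. Using $\Phi(0)=\tfrac12$ and $\varphi(0)=(2\pi)^{-1/2}$ gives $A(0)=(2\pi)^{-1/2}$ and $B(0)=\tfrac12$. Hence
\begin{equation*}
q(0)=\frac{B(0)}{A(0)^2}=\frac{1}{2}\cdot 2\pi=\pi .
\end{equation*}

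Next we differentiate $A$ and $B$, using $\varphi'(\alpha)=-\alpha\varphi(\alpha)$ and $\Phi'=\varphi$. For $A$, the terms $\alpha\varphi$ and $-\alpha\varphi$ cancel:
\begin{equation*}
A'(\alpha)=\Phi(\alpha)+\alpha\varphi(\alpha)-\alpha\varphi(\alpha)=\Phi(\alpha).
\end{equation*}
For $B$, the terms $\alpha^2\varphi$ and $-\alpha^2\varphi$ cancel:
\begin{equation*}
B'(\alpha)=2\alpha\Phi(\alpha)+(\alpha^2+1)\varphi(\alpha)+\varphi(\alpha)-\alpha^2\varphi(\alpha)=2A(\alpha).
\end{equation*}
These identities can be cross-checked by differentiating $\mathbb{E}[(Z+\alpha)_+]$ and $\mathbb{E}[(Z+\alpha)_+^2]$ under the expectation. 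At the origin they give $A'(0)=\tfrac12$ and $B'(0)=2(2\pi)^{-1/2}$.

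Finally we apply the quotient rule, $q'=B'/A^2-2BA'/A^3$. The first term is $B'(0)/A(0)^2=2(2\pi)^{-1/2}\cdot 2\pi=2\sqrt{2\pi}$. The second term is $2B(0)A'(0)/A(0)^3=\tfrac12(2\pi)^{3/2}=\pi\sqrt{2\pi}$. Subtracting,
\begin{equation*}
q'(0)=2\sqrt{2\pi}-\pi\sqrt{2\pi}=-\sqrt{2\pi}\,(\pi-2).
\end{equation*}

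There is no real obstacle here. The only step needing care is the derivative bookkeeping: verifying the cancellations that give $A'=\Phi$ and $B'=2A$, and simplifying the powers of $2\pi$ in $A(0)^3$ correctly.
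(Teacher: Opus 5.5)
Your proposal is correct and follows the paper's own proof essentially line for line. Like the paper, you evaluate the closed forms of Lemma~\ref{lem:trunc-gauss-AB} at $0$, use $A'=\Phi$ and $B'=2\alpha\Phi+2\varphi$ (which you identify as $2A$), and apply the quotient rule $q'=B'/A^2-2BA'/A^3$; your arithmetic, including $2B(0)A'(0)/A(0)^3=\pi\sqrt{2\pi}$, checks out.
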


\begin{proof}
From Lemma~\ref{lem:trunc-gauss-AB}, $A(0)=\varphi(0)=1/\sqrt{2\pi}$, and $B(0)=\Phi(0)=\nicefrac{1}{2}$, it follows that $q(0)=\pi$.
Differentiating the closed forms yields $A^{\prime}(\alpha)=\Phi(\alpha)$ and $B^{\prime}(\alpha)=2\alpha\Phi(\alpha)+2\varphi(\alpha)$.
Evaluating at $\alpha=0$ and applying $q^{\prime}=B^{\prime}/A^2-2BA^{\prime}/A^3$ yields the stated value.
\end{proof}

\section{Details on Numerical Verification}
\label{app:numerics}

\subsection{Finite-Temperature Saddle-Point Equations}

At finite temperature, the RS/1RSB saddle is characterized by seven primary unknowns: minimizer overlaps $(Q_x,q_x)$,
maximizer overlaps $(Q_y,q_1,q_0)$, and Lagrange multipliers $(m_x,m_y)$ that enforce the simplex constraints.
The remaining conjugate variables stem from stationarity:
\begin{align}
  \hat{\chi}_x &= \frac{\sigma\beta_{\max}^2}{\gamma^{\nicefrac{1}{2}}}k^2 q_0,
  \qquad
  \hat{\chi}_0 = \gamma^{\nicefrac{1}{2}}\sigma\beta_{\max}^2q_x,
  \qquad
  \hat{\chi}_1 = \gamma^{\nicefrac{1}{2}}\sigma\beta_{\max}^2(Q_x-q_x), \\
  \hat{Q}_x &= \hat{\chi}_x - \frac{\sigma\beta_{\max}^2}{\gamma^{\nicefrac{1}{2}}}\bigl[kQ_y+k(k-1)q_1\bigr].
\end{align}

The induced single-site measure for the minimizer is a truncated Gaussian on $[0,x_{\max}]$:
\begin{equation}
  p(x\mid z)\ \propto\ \exp\Bigl(-\tfrac{\hat{Q}_x}{2}x^2+(m_x+\sqrt{\hat{\chi}_x}z)x\Bigr),
  \qquad z\sim\mac{N}(0,1),
\end{equation}
and the maximizer has a truncated exponential family on $[0,y_{\max}]$:
\begin{equation}
  p(y\mid z,\eta)\ \propto\ \exp\bigl((m_y+\sqrt{\hat{\chi}_0}z+\sqrt{\hat{\chi}_1}\eta)y\bigr),
  \qquad z,\eta\sim\mac{N}(0,1),
\end{equation}
with the 1RSB reweighting over $\eta$ proportional to $[Z_y(z,\eta)]^k$, where
$Z_y(z,\eta)=\int_0^{y_{\max}}\exp(h(z,\eta)y)dy$ and $h(z,\eta)=m_y+\sqrt{\hat{\chi}_0}z+\sqrt{\hat{\chi}_1}\eta$.

The self-consistency conditions are
\begin{equation}
  Q_x = \langle x^2\rangle,\;
  q_x = \bigl\langle \langle x\rangle_z^2 \bigr\rangle_z,\;
  Q_y = \langle y^2\rangle,\;
  q_1 = \bigl\langle \langle y\rangle_{z,\eta}^2 \bigr\rangle_{z,\eta},\;
  q_0 = \Bigl\langle \bigl(\langle \langle y\rangle_{z,\eta}\rangle_{\eta|z}\bigr)^2 \Bigr\rangle_z,
\end{equation}
together with the constraints $\langle x\rangle=1$ and $\langle y\rangle=1$.
All one-site expectations are evaluated using Gauss--Hermite quadrature in $(z,\eta)$ and closed-form expressions for the truncated integrals whenever possible.
In writing $\B{\Theta}=(Q_x,q_x,Q_y,q_1,q_0)$ and $G(\B{\Theta})$ for the moment map, we solve $\B{\Theta}=G(\B{\Theta})$ using damped fixed-point iteration
$\B{\Theta}^{(t+1)}=(1-d)\B{\Theta}^{(t)}+d G(\B{\Theta}^{(t)})$ with $d\in(0,1)$.
At each iteration, $(m_x,m_y)$ are updated by bisection to satisfy the mean constraints.
We monitor convergence using the residual $\|G(\B{\Theta})-\B{\Theta}\|$.
When sweeping over $\gamma$, we warm-start the solver in both increasing and decreasing directions; when multiple fixed points are found, we retain the one yielding the larger saddle value.

\subsection{Finite-Size Simulation: Zero Temperature via Linear Programming}
For a given realization of $C$, the min-max value $t(C)$ is computed exactly by the primal linear program
\begin{equation}
    \min_{\B{p}\in \Delta_N, u} u,~~~\mathrm{s.t.}~~\B{C}^\top \B{p} \le u \1.    
\end{equation}
We compare the finite-size scaled estimate $f_L=(NM)^{1/4}t(C)$ to the theoretical prediction.

\subsection{Finite-Size Simulation: Finite Temperature via Annealed Importance Sampling}
When $k<0$, the importance weights $[Z_y(\B{x};\B{C})]^k$ in
Eq.~\eqref{eq:ttbs-outer-def} can exhibit heavy-tailed behavior, resulting in naive uniform Monte Carlo over $\mac{X}_N$ being high variance.
To stabilize the estimator, we also employ annealed importance sampling (AIS) \citep{neal2001annealed}.
AIS introduces a series of bridging distributions on $\mac{X}_N$:
\begin{equation}
    \pi_t(\B{x})
    \propto
    p_{0}(\B{x})\exp\ab(-\beta_t \phi_{\beta_{\max}}(\B{x};\B{C})),~~
    0=\beta_0<\beta_1<\cdots<\beta_T=\beta_{\min},
    \label{eq:ais-bridge}
\end{equation}
which interpolates between the prior $p_{0}(\B{x})$ and the target outer distribution
$p_{\beta_{\min}}(\B{x};\B{C})\propto p_{0}(\B{x})\exp(-\beta_{\min}\phi_{\beta_{\max}}(\B{x};\B{C}))$.
Using $\phi_{\beta_{\max}}(\B{x};\B{C})=(\nicefrac{1}{\beta_{\max}})\log Z_y(\B{x};\B{C})$, the incremental AIS weight can be expressed in terms of
$\log Z_y(\B{x};\B{C})$.
For a single AIS run producing a trajectory $\B{x}_0,\ldots,\B{x}_{T-1}$,
the accumulated log-weight is
\begin{equation}
    \log W
    =
    -\sum_{t=1}^{T}(\beta_t-\beta_{t-1})\phi_{\beta_{\max}}(\B{x}_{t-1};\B{C})
    =
    \frac{k}{\beta_{\min}}\sum_{t=1}^{T}(\beta_t-\beta_{t-1})\log Z_y(\B{x}_{t-1};\B{C}).
    \label{eq:ais-logweight}
\end{equation}
Each stage employs an MCMC transition kernel targeting $\pi_t$.
We implement a Metropolis–Hastings update with a symmetric pairwise-exchange proposal that preserves the simplex constraint:
two coordinates $(x_i,x_j)$ are perturbed by $\pm\delta$ with $\delta\sim\mathrm{Uniform}(-h,h)$, followed by rejection if the proposal exits $\mac{X}_N$. Given $S$ independent AIS runs with weights $\{W_s\}_{s=1}^{S}$, we estimate the outer partition function in Eq.~\eqref{eq:ttbs-outer-def} by
\begin{equation}
    Z(\B{C})
    =
    \int_{\mac{X}_N}p_0(\B{x})\exp\ab(-\beta_{\min}\phi_{\beta_{\max}}(\B{x};\B{C})) d\B{x}
     \approx
    \frac{1}{S}\sum_{s=1}^{S} W_s,
\end{equation}
and report the corresponding free-energy estimate
\begin{equation}
    \hat{F}(\beta_{\max},\beta_{\min};\B{C})
    =
    -\frac{1}{\beta_{\min}}\log \hat{Z}(\B{C}),~~
    \hat{Z}(\B{C})=\frac{1}{S}\sum_{s=1}^{S} W_s .
    \label{eq:ais-Fhat}
\end{equation}

\section{Additional Numerical Comparison}\label{sec:additional-comparison}

\subsection{Additional Verification: Finite-Temperature Free Energy at Finite Size}
\label{app:finiteT-free-energy}

In addition to the equilibrium-value comparisons mentioned in the main text, we further verify that our finite-temperature free energy predictions remain accurate for finite game sizes.
 At finite temperatures, the relevant quantity is the free energy
$\Phi(\beta_{\max},\beta_{\min};\B{C})=-(1/\beta_{\min})\log Z(\B{C})$, and our theory predicts its normalized typical value
$\nu(\beta_{\max},\beta_{\min})=\lim_{L\to\infty}(1/L)\mab E_{\B{C}}[\Phi(\beta_{\max},\beta_{\min};\B{C})]$.
To investigate this prediction in finite instances, we estimate $\Phi(\beta_{\max},\beta_{\min};\B{C})$ for each sampled payoff matrix
using the AIS procedure described in Appendix~\ref{app:numerics}, reporting the normalized estimate
$\hat{v} = \nicefrac{\hat{F}}{L}$, with $\hat{F}(\beta_{\max},\beta_{\min};\B{C})$ defined in Eq.~\eqref{eq:ais-Fhat} and $L=\sqrt{NM}$
We employ the following hyperparameters and numerical pipeline outlined in Appendix~\ref{app:numerics};
we fix the maximizer's strategy count to $M=80$ and set $N=\gamma M$ with $\gamma\in\{0.8,1.0,1.2\}$; the payoff scale is $\sigma=1.0$.
We consider $\beta_{\max}\in\{0.1,0.5,1.0\}$ and sweep the temperature ratio
$k=-\nicefrac{\beta_{\min}}{\beta_{\max}}\in\{-0.75,-0.80,\ldots,-1.25\}$.
For AIS, we conduct 10 independent trials with 3,000 parallel chains, 1,000 intermediate temperatures, and 5 MCMC steps per temperature.

Figure~\ref{fig:finiteT-free-energy-sweep-k} shows that the replica prediction for $\nu(\beta_{\max},\beta_{\min})$
closely matches the empirical finite-size estimates across all tested settings.
Importantly, this agreement persists not only in the zero-temperature regime but also at genuinely finite temperatures,
indicating that our nested replica theory captures the finite temperature free-energy landscape rather than merely its
ordered zero-temperature limit.
Overall, these results reinforce the practical message of the paper: even for moderate sizes, $M=80$, the asymptotic theory is already quantitatively predictive for both bounded rationality and near-rationality regimes.

\begin{figure}[tb]
    \centering
    \includegraphics[width=\textwidth]{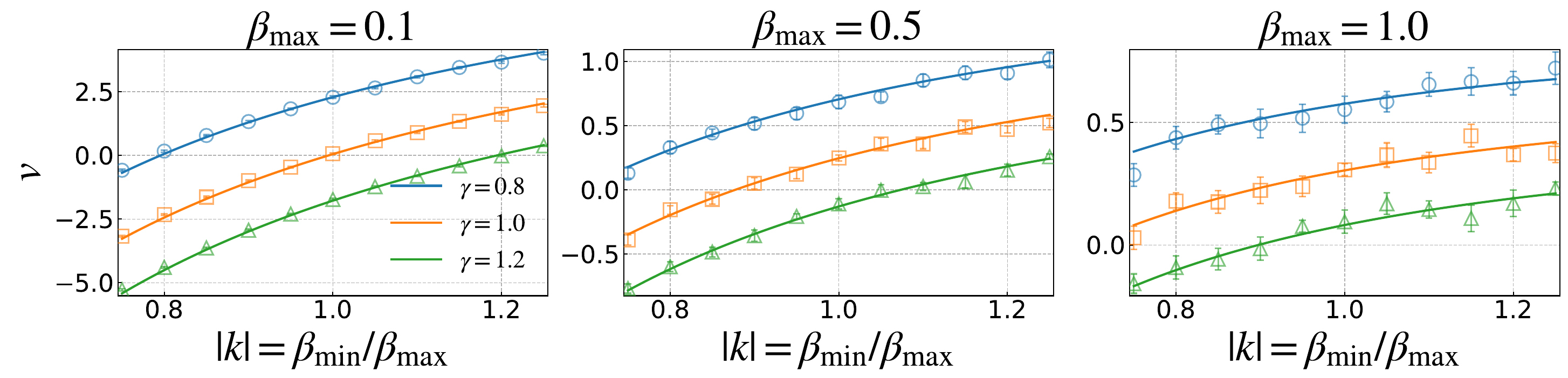}
    \caption{Finite-temperature normalized free energy $\nu$ as a function of $|k|=\nicefrac{\beta_{\min}}{\beta_{\max}}$: markers denote finite-size estimates and solid curves are the replica predictions, shown for $\beta_{\max}\in\{0.1,0.5,1.0\}$ and $\gamma\in\{0.8,1.0,1.2\}$.}
    \label{fig:finiteT-free-energy-sweep-k}
\end{figure}

\section{Relations to Existing Relaxed Game}
\label{app:bounded_rationality}

This appendix situates our two-temperature objective within the broader landscape of bounded rationality models in game theory by establishing three precise connections to standard formulations: (i) in single-agent decision making, temperature-smoothed maxima and minima are \emph{exactly} characterized by the Gibbs variational principle, which maximizes expected utility subject to an explicit KL information cost; (ii) in two-stage extensive-form settings, our nested two-temperature construction is equivalent to a KL-regularized hierarchical game, where the maximizer conditionally best responds to the minimizer’s action, and each player’s temperature independently specifies its own information/precision budget, reducing for finite action sets \emph{exactly} to node-wise logit choice, and thus to the logit agent-quantal-response equilibrium (AQRE); and (iii) by contrast, the commonly used simultaneous entropy and KL-regularized saddle formulation yields the well-known logit best-response fixed point in normal-form games, rather than the nested structure induced by our objective. We conclude by emphasizing that several significant departures from perfect rationality are not captured by temperature and KL regularization alone, including depth-limited reasoning, misspecified beliefs, and explicit computational constraints.

\subsection{Soft Max and Min as KL-regularized Optimization}
\label{app:gibbs-variational}

\begin{proposition}[Gibbs Variational Formula]
\label{prop:gibbs-dv}
Let $q_0$ be a prior density on $\mac{Y}$ and let $f:\mac{Y}\to\mab R$ satisfy
$\int_{\mac{Y}} q_0(y)e^{\beta f(y)}dy<\infty$.
Then for any $\beta>0$,
\begin{equation}
    \frac{1}{\beta}\log \int_{\mac{Y}} q_0(y)e^{\beta f(y)}dy
    =
    \sup_{q\in\Delta(\mac{Y})}
    \ab\{
    \int_{\mac{Y}} q(y) f(y)dy
    -\frac{1}{\beta}\mathrm{KL}(q\|q_0)\},
\label{eq:gibbs-var-max}
\end{equation}
where $\Delta(\mac{Y})=\{q\ge 0:\int_{\mac{Y}} q(y) dy=1\}$ and
$\mathrm{KL}(q\|q_0)=\int q(y)\log(\nicefrac{q(y)}{q_0(y)})dy$.
The supremum is attained at the Gibbs density
\begin{equation}
    q^\star_\beta(y)
    =
    \frac{q_0(y)e^{\beta f(y)}}{\int_{\mac{Y}} q_0(s)e^{\beta f(s)}ds}.
\label{eq:gibbs-opt}
\end{equation}
Similarly, for any $g:\mac{X}\to\mab R$ with $\int_{\mac{X}} p_0(x)e^{-\beta g(x)}dx<\infty$,
\begin{equation}
    -\frac{1}{\beta}\log \int_{\mac{X}} p_0(x)e^{-\beta g(x)}dx
    =
    \inf_{p\in\Delta(\mac{X})}
    \ab\{
    \int_{\mac{X}} p(x) g(x)dx
    +\frac{1}{\beta}\mathrm{KL}(p\|p_0)
    \}.
    \label{eq:gibbs-var-min}
\end{equation}
\end{proposition}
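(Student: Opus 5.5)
The plan is to prove the maximization form \eqref{eq:gibbs-var-max} by a single nonnegativity argument for the Kullback--Leibler divergence, and then obtain \eqref{eq:gibbs-var-min} by a sign change. Write $Z_\beta=\int_{\mac{Y}} q_0(y)e^{\beta f(y)}dy$. This is finite by assumption and strictly positive, since $q_0$ is a probability density and $e^{\beta f}>0$. Hence the Gibbs density $q^\star_\beta$ in \eqref{eq:gibbs-opt} is a well-defined element of $\Delta(\mac{Y})$.

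Fix any $q\in\Delta(\mac{Y})$. We may assume $\mathrm{KL}(q\|q_0)<\infty$; otherwise the objective equals $-\infty$, with the convention that the objective is $-\infty$ whenever the KL term is infinite, and $q$ cannot compete. Finite KL forces $q\ll q_0$, and on $\{q_0>0\}$ we have $\log(q/q^\star_\beta)=\log(q/q_0)-\beta f+\log Z_\beta$. Integrating against $q$ gives the key identity
\begin{equation*}
\int q f\,dy-\frac{1}{\beta}\mathrm{KL}(q\|q_0)=\frac{1}{\beta}\log Z_\beta-\frac{1}{\beta}\mathrm{KL}(q\|q^\star_\beta).
\end{equation*}
Gibbs' inequality, $\mathrm{KL}\ge 0$ with equality iff $q=q^\star_\beta$ almost everywhere, then shows that the objective is bounded above by $\beta^{-1}\log Z_\beta$. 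The bound is attained at $q=q^\star_\beta$, which proves both \eqref{eq:gibbs-var-max} and the characterization of the maximizer.

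The main obstacle is integrability in the identity above. The split $\int q\log(q/q_0)-\beta\int qf$ is legitimate only when $\int q f$ is defined, ideally with $\int q f^+<\infty$. I would handle this in two steps:
\begin{itemize}
\item If $\int q f^+=\infty$, use the Young/Fenchel inequality $uv\le e^{u}+v\log v-v$ applied to $u=\beta f$ and $v=q/q_0$, integrated against $q_0$. This yields $\beta\int qf^+\le Z_\beta+\mathrm{KL}+1$, roughly, which contradicts $\mathrm{KL}<\infty$ because $Z_\beta<\infty$. So this case cannot occur.
\item If $\int q f^+<\infty$ but $\int qf=-\infty$, the objective is $-\infty$ and the inequality holds trivially.
\end{itemize}
Once these cases are excluded, the rearrangement is justified term by term.

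For the minimization form \eqref{eq:gibbs-var-min}, apply the result just proved on $\mac{X}$ with prior $p_0$ and $f=-g$. This gives
\begin{equation*}
\frac{1}{\beta}\log\int p_0e^{-\beta g}=\sup_p\Big\{-\int pg-\frac{1}{\beta}\mathrm{KL}(p\|p_0)\Big\}.
\end{equation*}
Negating both sides turns the supremum into an infimum and yields \eqref{eq:gibbs-var-min}, with minimizer $p\propto p_0e^{-\beta g}$.
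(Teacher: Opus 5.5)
Your proof is essentially the paper's argument made explicit: the paper applies Jensen's inequality to $\log\int q\,\exp\bigl(\beta f+\log(q_0/q)\bigr)$, whose Jensen gap is exactly your $\mathrm{KL}(q\|q^\star_\beta)$. Your integrability bookkeeping, together with your treatment of densities $q$ that vanish on part of $\{q_0>0\}$ (which the paper's restriction to $q>0$ on $\{q_0>0\}$ leaves out), makes your version more complete than the paper's.

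One caveat, which the paper's proof shares: the hypotheses do not force $\int q_0e^{\beta f}f^+<\infty$. When this fails, your own Young bound gives $\mathrm{KL}(q^\star_\beta\|q_0)=\infty$, so under your convention $q^\star_\beta$ does not attain the bound (indeed nothing does). In that case you should instead obtain $\sup\ge\beta^{-1}\log Z_\beta$ from the truncations $q_n\propto q_0e^{\beta f}\mathbf{1}\{f\le n\}$, whose objective values $\beta^{-1}\log\int q_0e^{\beta f}\mathbf{1}\{f\le n\}$ increase to $\beta^{-1}\log Z_\beta$.
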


\begin{proof}
Fix any $q\in\Delta(\mac{Y})$ with $q(y)>0$ where $q_0(y)>0$ and define
\begin{equation}
    \int q_0(y)e^{\beta f(y)}dy
    =
    \int q(y)\exp \ab(\beta f(y)+\log\frac{q_0(y)}{q(y)})dy.    
\end{equation}
Applying Jensen's inequality to $\log \int q(\cdot)\exp(\cdot)$ gives
\begin{equation}
    \log \ab(\int q_0(y) e^{\beta f(y)} dy) 
    \ge
    \int q(y)\ab(\beta f(y)+\log\frac{q_0(y)}{q(y)})dy
    =
    \beta\int q(y) f(y) dy - \mathrm{KL}(q\|q_0).    
\end{equation}
Dividing by $\beta$ and taking the supremum over $q$ yields Eq.~\eqref{eq:gibbs-var-max}.
Equality holds iff the exponent is constant on the support, i.e., $q\propto q_0 e^{\beta f}$, which gives Eq.~\eqref{eq:gibbs-opt}.
The min identity follows by applying the max identity to $-g$.
\end{proof}

Proposition~\ref{prop:gibbs-dv} formalizes the standard free-energy view of bounded rationality
\citep{OrtegaBraun2013ThermodynamicsDecision}: temperature $\beta^{-1}$ controls a trade-off between expected payoff and an explicit KL information cost. As $\beta\to\infty$, the KL penalty vanishes and the soft operator approaches the hard max and min; as $\beta\to 0$, the prior dominates and the optimizer approaches the prior distribution.

We connect the nested two-temperature construction to a two-stage game as defined in Definition \ref{def:ttbs}.

\begin{proposition}[Two-temperature objective as a KL-regularized hierarchical game]
\label{prop:two-temp-kl}
With priors $(p_0,q_0)$, the two-temperature quantities in Definition \ref{def:ttbs} have exact variational forms
\begin{align}
    \phi_{\beta_{\max}}(\B{x})
    &=
    \sup_{q(\cdot|\B{x})\in\Delta(\mac{Y})}
    \ab\{
    \int_{\mac{Y}} q(\B{y}|\B{x}) E(\B{x},\B{y};\B{C}) d\B{y}
    -\frac{1}{\beta_{\max}}\mathrm{KL}\ab(q(\cdot|\B{x})\|q_0)\},
    \label{eq:inner-br}
    \\
    \Phi(\beta_{\max},\beta_{\min})
    &=
    \inf_{p\in\Delta(\mac{X})}
    \ab\{
    \int_{\mac{X}} p(\B{x}) \phi_{\beta_{\max}}(\B{x}) d\B{x}
    +\frac{1}{\beta_{\min}}\mathrm{KL}(p\|p_0)\}.
    \label{eq:outer-br}
\end{align}
For finite $\mac{X}$, substituting Eq.~\eqref{eq:inner-br} into Eq.~\eqref{eq:outer-br} yields
\begin{multline}
    \Phi(\beta_{\max},\beta_{\min})
    \\
    =
    \inf_{p}
    \sup_{q(\cdot|\B{x})}
    \ab\{
    \mab E_{\B{X}\sim p,\B{Y}\sim q(\cdot|\B{X})}[E(\B{X},\B{Y}; \B{C})]
    +\frac{1}{\beta_{\min}}\mathrm{KL}(p\|p_0)
    -\frac{1}{\beta_{\max}}\mab E_{\B{X}\sim p}\ab[\mathrm{KL}\ab(q(\cdot|\B{X})\|q_0)]\}.
    \label{eq:stackelberg-kl-game}
\end{multline}
\end{proposition}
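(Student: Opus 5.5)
The plan is to apply the Gibbs variational formula (Proposition~\ref{prop:gibbs-dv}) twice, once per level of the nested construction in Definition~\ref{def:ttbs}, and then merge the two variational problems into one.

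\textbf{Inner level.} Fix $\B{x}\in\mac{X}_N$ and take $f(\B{y})=E(\B{x},\B{y};\B{C})$ and $\beta=\beta_{\max}$ in Eq.~\eqref{eq:gibbs-var-max}. The left side is $(1/\beta_{\max})\log Z_y(\B{x};\B{C})=\phi_{\beta_{\max}}(\B{x})$, which gives Eq.~\eqref{eq:inner-br}. The maximizer is the conditional Gibbs density $p_{\beta_{\max}}(\cdot\mid\B{x};\B{C})$ of Eq.~\eqref{eq:ttbs-inner-def}. Integrability holds because $E$ is bounded on the compact set $\mac{X}_N\times\mac{Y}_M$.

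\textbf{Outer level.} Take $g=\phi_{\beta_{\max}}$ and $\beta=\beta_{\min}$ in Eq.~\eqref{eq:gibbs-var-min}. The left side is $-(1/\beta_{\min})\log Z(\B{C})$, which is the free energy $F=\Phi$, so Eq.~\eqref{eq:outer-br} follows. The minimizer is $p_{\beta_{\min}}(\cdot;\B{C})$. For the integrability hypothesis, note that $\phi_{\beta_{\max}}$ is continuous and bounded on the compact $\mac{X}_N$: it is bounded by $\max E+\beta_{\max}^{-1}\log\mathrm{Vol}$-type constants.

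\textbf{Combining the two levels.} Substitute Eq.~\eqref{eq:inner-br} into the integrand $\int p(\B{x})\phi_{\beta_{\max}}(\B{x})\,d\B{x}$. The key step is to exchange $\int p(\B{x})\,\sup_{q(\cdot|\B{x})}\{\cdots\}\,d\B{x}$ with $\sup_{q(\cdot|\cdot)}\int p(\B{x})\{\cdots\}\,d\B{x}$, where the latter supremum runs over Markov kernels. The direction ``$\ge$'' is immediate, since any kernel is pointwise suboptimal. For ``$\le$'', I would use that the pointwise optimizer $q^\star(\cdot|\B{x})=p_{\beta_{\max}}(\cdot\mid\B{x};\B{C})$ is given explicitly and is jointly measurable in $(\B{x},\B{y})$, so it is itself an admissible kernel that attains the pointwise supremum everywhere. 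For finite $\mac{X}$, which is the case stated, measurability is trivial: the kernel is just a finite family of densities, and the interchange reduces to a finite sum of independent suprema. Writing the integral over $\B{X}\sim p$, $\B{Y}\sim q(\cdot|\B{X})$ as a joint expectation, and the KL term as $\mab E_{\B{X}\sim p}[\mathrm{KL}(q(\cdot|\B{X})\|q_0)]$, yields Eq.~\eqref{eq:stackelberg-kl-game}.

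The main obstacle is this sup/integral interchange, specifically justifying a measurable selection of the inner optimizer. The explicit Gibbs form removes the difficulty. A secondary point is that when $p$ gives zero mass to some $\B{x}$, the supremum there is irrelevant, which is consistent in both directions.
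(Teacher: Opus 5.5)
Your proposal is correct and follows the same route as the paper. You apply the Gibbs variational formula (Proposition~\ref{prop:gibbs-dv}) once with $f=E(\mathbf{x},\cdot;\mathbf{C})$ at $\beta_{\max}$ and once with $g=\phi_{\beta_{\max}}$ at $\beta_{\min}$, then take the inner supremum separately for each $\mathbf{x}$ in the finite case; your added checks (integrability from boundedness of $E$, and the explicit Gibbs kernel attaining the pointwise supremum to justify the sup--integral interchange) fill in details the paper leaves implicit.
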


\begin{proof}
Eq.~\eqref{eq:inner-br} is Proposition~\ref{prop:gibbs-dv} applied pointwise to $f(\B{y})=E(\B{x},\B{y}; \B{C})$ with prior $q_0$.
Eq.~\eqref{eq:outer-br} is Proposition~\ref{prop:gibbs-dv} applied to $g(\B{x})=\phi_{\beta_{\max}}(\B{x})$ with prior $p_0$.
The combined expression follows by substitution; for finite $\mac{X}$, the pointwise supremum can be taken separately for each $\B{x}$.
\end{proof}

Proposition~\ref{prop:two-temp-kl} explicitly indicates that the two temperatures function as independent resource parameters: $\beta_{\max}$ governs the precision of the maximizer conditional response $q(\cdot|\B{x})$, while $\beta_{\min}$ regulates the precision of the minimizer marginal choice $p$. This separation is central to our model: the resulting outcome is interpreted as bounded-rational leader and follower relaxation rather than a simultaneous-move equilibrium at finite temperature.

\subsection{Relation to AQRE and QRE}
\label{app:qre-relation}

We specialize in finite action sets and demonstrate that the two-temperature objective reduces precisely to logit choice at each decision node of a two-stage extensive-form game. Let $\mac{X}=\{1,\dots,N\}$ and $\mac{Y}=\{1,\dots,M\}$ with priors $p_0(i)$ and $q_0(j)$. Then, thermal min-max games become
\begin{multline}
    \phi_{\beta_{\max}}(i)
    =
    \frac{1}{\beta_{\max}}\log\sum_{j=1}^M q_0(j)\exp\ab(\beta_{\max}V_{ij}),\\
    \Phi(\beta_{\max},\beta_{\min})
    =
    -\frac{1}{\beta_{\min}}\log\sum_{i=1}^N p_0(i)\exp\ab(-\beta_{\min}\phi_{\beta_{\max}}(i)).
\label{eq:discrete-nested-logsumexp}
\end{multline}
The optimizers provided by Proposition~\ref{prop:gibbs-dv} are the corresponding logit policies:
\begin{equation}
    q_{\beta_{\max}}(j|i)
    =
    \frac{q_0(j)\exp(\beta_{\max}V_{ij})}{\sum_{s=1}^M q_0(s)\exp(\beta_{\max}V_{is})},
    ~~~
    p_{\beta_{\min}}(i)
    =
    \frac{p_0(i)\exp(-\beta_{\min}\phi_{\beta_{\max}}(i))}{\sum_{r=1}^N p_0(r)\exp(-\beta_{\min}\phi_{\beta_{\max}}(r))}.
    \label{eq:discrete-logit-policies}
\end{equation}

\begin{proposition}
\label{prop:aqre-equivalence}
Consider the two-stage extensive-form zero-sum game in which the minimizer first chooses $i\in\mac{X}$, after which the maximizer observes $i$ and selects $j\in\mac{Y}$, yielding a payoff of $V_{ij}$ for the maximizer and $-V_{ij}$ for the minimizer. Then the logit agent-quantal-response equilibrium (AQRE) with node precisions $(\beta_{\min},\beta_{\max})$
\citep{McKelveyPalfrey1998AQRE}
precisely induces the behavioral strategies Eq.~\eqref{eq:discrete-logit-policies} and the continuation-value recursion Eq.~\eqref{eq:discrete-nested-logsumexp}.
\end{proposition}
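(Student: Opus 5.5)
The proof is a backward-induction argument on the two-stage game tree, plus a check that the logit equations at each node coincide with Eq.~\eqref{eq:discrete-logit-policies}. First I recall the definition of logit AQRE \citep{McKelveyPalfrey1998AQRE}. Each information set is played by a separate agent. That agent chooses actions with probabilities proportional to $\exp(\lambda\cdot\text{expected continuation payoff})$, possibly weighted by a prior. The continuation payoffs are evaluated under the equilibrium behavior at later nodes, and $\lambda$ is the node's precision.

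The game has perfect information: the maximizer observes $i$. So the maximizer's information sets are the singleton nodes $i\in\mac{X}$, and the minimizer has the single root node. The AQRE therefore consists of a behavioral strategy $q(\cdot|i)$ at each node $i$ and a root mixture $p$. These must satisfy the node-wise logit fixed-point conditions.

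The steps, in order:

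\textbf{Follower nodes.} At node $i$ the continuation payoff of action $j$ is the terminal payoff $V_{ij}$. It does not depend on any other agent's strategy. The logit condition with precision $\beta_{\max}$ and prior weights $q_0$ therefore determines $q(j|i)\propto q_0(j)e^{\beta_{\max}V_{ij}}$ uniquely, with no fixed point to solve. This is exactly $q_{\beta_{\max}}(j|i)$ in Eq.~\eqref{eq:discrete-logit-policies}.

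\textbf{Node value.} I need the value the leader attaches to node $i$. By Proposition~\ref{prop:gibbs-dv} (Eq.~\eqref{eq:gibbs-var-max}) applied with $f(j)=V_{ij}$ on the finite set $\mac{Y}$, the regularized value of node $i$ equals $\frac{1}{\beta_{\max}}\log\sum_j q_0(j)e^{\beta_{\max}V_{ij}}=\phi_{\beta_{\max}}(i)$. Its maximizer is the logit policy already found. This yields the first line of Eq.~\eqref{eq:discrete-nested-logsumexp}.

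\textbf{Root node.} The minimizer's payoff from action $i$ is $-\phi_{\beta_{\max}}(i)$. The logit rule with precision $\beta_{\min}$ and prior $p_0$ gives $p(i)\propto p_0(i)e^{-\beta_{\min}\phi_{\beta_{\max}}(i)}$, which is the second policy in Eq.~\eqref{eq:discrete-logit-policies}. Eq.~\eqref{eq:gibbs-var-min} then identifies the root value with $\Phi(\beta_{\max},\beta_{\min})$. Proposition~\ref{prop:two-temp-kl} packages both stages as the nested inf–sup in Eq.~\eqref{eq:stackelberg-kl-game}, which confirms the recursion.

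\textbf{Main obstacle.} The delicate step is the definition of the continuation value the leader uses. Standard AQRE evaluates $i$ by the \emph{expected} payoff $\sum_j q(j|i)V_{ij}$, whereas our recursion uses the soft value $\phi_{\beta_{\max}}(i)$. These differ by the entropic term $\frac{1}{\beta_{\max}}\mathrm{KL}(q(\cdot|i)\|q_0)$. I would resolve this by stating explicitly the regularized (free-energy) form of AQRE, in which each agent's continuation value includes its own information cost, as in the control-cost interpretation of \citet{OrtegaBraun2013ThermodynamicsDecision}. Under that convention Eq.~\eqref{eq:gibbs-var-max} shows the two notions coincide. The remaining points are that uniqueness holds at every node, since each logit map is explicit with no fixed point once later nodes are fixed, and that uniform priors recover the classical logit AQRE.
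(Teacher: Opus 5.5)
Your node-by-node backward induction follows the same route as the paper's one-line proof. The ``main obstacle'' you flag is exactly the step that proof skips when it asserts that the follower node ``produces the continuation value $\phi_{\beta_{\max}}(i)$.'' The gap is that your resolution does not prove the proposition as stated, and your closing claim that uniform priors recover the classical logit AQRE is false.

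In McKelvey--Palfrey's logit AQRE, the root agent evaluates $i$ by the expected continuation payoff $\bar V_i=\sum_j q_{\beta_{\max}}(j|i)V_{ij}$. Its policy is therefore $p(i)\propto p_0(i)\exp(-\beta_{\min}\bar V_i)$, whereas your root step uses $-\phi_{\beta_{\max}}(i)$. By Eq.~\eqref{eq:gibbs-var-max}, $\phi_{\beta_{\max}}(i)=\bar V_i-\beta_{\max}^{-1}\mathrm{KL}(q_{\beta_{\max}}(\cdot|i)\|q_0)$. So $p_{\beta_{\min}}$ in Eq.~\eqref{eq:discrete-logit-policies} is the AQRE root policy reweighted by $\exp\bigl(\tfrac{\beta_{\min}}{\beta_{\max}}\mathrm{KL}(q_{\beta_{\max}}(\cdot|i)\|q_0)\bigr)$. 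The two coincide only if this KL is constant in $i$.

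Uniform priors do not help: then $\mathrm{KL}=\log M-H(q_{\beta_{\max}}(\cdot|i))$, and the $i$-dependent entropy survives. Concretely, take $N=M=2$, uniform priors, $\beta_{\max}=\beta_{\min}=1$, $V_{1\cdot}=(0,0)$ and $V_{2\cdot}=(1,-1)$. AQRE gives $p(2)/p(1)=e^{-\tanh 1}\approx 0.47$, while Eq.~\eqref{eq:discrete-logit-policies} gives $p(2)/p(1)=1/\cosh 1\approx 0.65$.

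Hence, under the standard definition, only the follower-node policies match in general. The root policies agree only when the KL above is constant in $i$, or in the limit $\beta_{\min}/\beta_{\max}\to 0$, since the reweighting exponent is at most $\tfrac{\beta_{\min}}{\beta_{\max}}\max_j\log(1/q_0(j))$ on a finite action set. That limit covers the ordered zero-temperature regime but not genuinely finite temperatures.

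What your argument does establish, essentially by definition, is the statement for a regularized AQRE whose continuation payoffs are net of the follower's information cost. This is the zero-sum game of Eq.~\eqref{eq:stackelberg-kl-game}, in which the follower's KL cost also enters the minimizer's objective. The right move is to restate the proposition in that form and drop the claimed reduction to classical logit AQRE. The paper's own proof makes the same unstated identification, so this is a defect in the statement that no cleverer proof can repair.
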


\begin{proof}
By the definition of AQRE, each information set applies logit choice to its continuation payoffs \citep{McKelveyPalfrey1998AQRE}; the follower node produces $q_{\beta_{\max}}(\cdot|i)$ and the continuation value $\phi_{\beta_{\max}}(i)$, while the leader node applies logit to the leader payoff $\phi_{\beta_{\max}}(i)$, yielding $p_{\beta_{\min}}$.
\end{proof}

This correspondence clarifies the relation to normal-form quantal response equilibrium (QRE): in simultaneous-move normal-form games, logit QRE is defined as a fixed point of quantal response maps \citep{McKelveyPalfrey1995QRE}, while our nested objective corresponds to the sequential extensive-form counterpart of the same logit and KL principle. 

A widely used simultaneous relaxation instead solves an entropy- and KL-regularized saddle problem 
for $\B{A}\in\mab R^{N\times M}$ and priors $\B{p}_0\in\Delta_N$, $\B{q}_0\in\Delta_M$,
\begin{equation}
    \min_{\B{p}\in\Delta_N}\max_{\B{q}\in\Delta_M}
    \ab\{
    \B{p}^\top \B{A} \B{q}
    +\frac{1}{\beta_{\min}}\mathrm{KL}(p\|p_0)
    -\frac{1}{\beta_{\max}}\mathrm{KL}(q\|q_0)\}.
\label{eq:entropic-saddle}
\end{equation}
With uniform priors, Eq.~\eqref{eq:entropic-saddle} is equivalent, up to constants, to adding $-(1/\beta_{\min})H(\B{p})$ and $(1/\beta_{\max})H(\B{q})$, so both players are encouraged to randomize. The first-order optimality conditions yield smooth and logit best responses, resulting in a logit fixed-point characterization.

\begin{proposition}
\label{prop:entropic-logit-fixedpoint}
Assume $(\B{p}^\star,\B{q}^\star)$ solves Eq.~\eqref{eq:entropic-saddle} and has full support.
Then it satisfies
\begin{equation}
    p^\star_i
    \propto
    p_0(i)\exp\ab(-\beta_{\min}(\B{A}\B{q}^\star)_i),~~~
    q^\star_j
    \propto
    q_0(j)\exp\ab(\beta_{\max}(\B{A}^\top \B{p}^\star)_j),
\end{equation}
which coincide with the logit quantal-response fixed point conditions for zero-sum games
\citep{McKelveyPalfrey1995QRE,GoereeHoltPalfrey2005RegularQRE}.
\end{proposition}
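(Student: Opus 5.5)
The approach is to use the saddle-point property to split the problem into two separate single-player optimizations over a simplex, and then read off the first-order (KKT) conditions in each.

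Write $f(\B{p},\B{q}) = \B{p}^\top\B{A}\B{q} + \beta_{\min}^{-1}\mathrm{KL}(\B{p}\|\B{p}_0) - \beta_{\max}^{-1}\mathrm{KL}(\B{q}\|\B{q}_0)$. A solution $(\B{p}^\star,\B{q}^\star)$ of Eq.~\eqref{eq:entropic-saddle} is a saddle point. Hence $\B{p}^\star$ minimizes $f(\cdot,\B{q}^\star)$ over $\Delta_N$, and $\B{q}^\star$ maximizes $f(\B{p}^\star,\cdot)$ over $\Delta_M$. If one prefers to start from the min--max value, existence of such a saddle pair follows from Sion's theorem, since $f$ is convex in $\B{p}$ (linear term plus convex KL) and concave in $\B{q}$ on compact convex sets. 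I will treat the two players symmetrically and carry out only the $\B{p}$ side in detail.

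For the $\B{p}$ side, the full-support assumption puts $\B{p}^\star$ in the relative interior of $\Delta_N$. Implicitly this requires $p_0(i)>0$ for all $i$, since otherwise the KL term is infinite and no full-support minimizer exists. On this relative interior the map $\B{p}\mapsto f(\B{p},\B{q}^\star)$ is differentiable, with
\[
\partial_{p_i} f = (\B{A}\B{q}^\star)_i + \beta_{\min}^{-1}\bigl(\log(p_i/p_0(i)) + 1\bigr).
\]
The nonnegativity constraints are inactive, so complementary slackness makes their multipliers vanish. The Lagrange condition for the single equality constraint $\B{p}^\top\1 = 1$ therefore reads $\partial_{p_i} f(\B{p}^\star,\B{q}^\star) = \lambda$ for all $i$. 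Solving for $p^\star_i$ gives
\[
p^\star_i = p_0(i)\exp\bigl(-\beta_{\min}(\B{A}\B{q}^\star)_i\bigr)\, e^{\beta_{\min}\lambda - 1}.
\]
The constant $e^{\beta_{\min}\lambda - 1}$ does not depend on $i$ and is fixed by normalization, which is exactly the claimed proportionality. Applying the same argument to $-f(\B{p}^\star,\cdot)$ on $\Delta_M$ yields $q^\star_j \propto q_0(j)\exp\bigl(\beta_{\max}(\B{A}^\top\B{p}^\star)_j\bigr)$.

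An alternative for this step, which avoids Lagrange multipliers entirely, is to note that the $\B{p}$-subproblem is precisely the Gibbs variational formula of Proposition~\ref{prop:gibbs-dv} with $g(i) = (\B{A}\B{q}^\star)_i$ on the finite set $\mac{X}$. That proposition already identifies the unique minimizer as the Gibbs density $p_0(i)e^{-\beta_{\min}g(i)}/\sum_r p_0(r)e^{-\beta_{\min}g(r)}$; the $\B{q}$-subproblem is handled the same way with the max version. This route is cleaner and even shows full support is automatic when the priors have full support. The only step needing care is justifying the decoupling into two best-response problems; this holds by the saddle-point definition, after which strict convexity of KL gives uniqueness.

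Finally, I will identify the two displayed conditions with the defining equations of logit QRE in the zero-sum game, as in \citep{McKelveyPalfrey1995QRE,GoereeHoltPalfrey2005RegularQRE}. Each player's mixed strategy is a logit response to the expected payoffs induced by the opponent's strategy, with precisions $\beta_{\min}$ and $\beta_{\max}$ and prior-weighted base measures. This step is purely a matter of matching definitions.
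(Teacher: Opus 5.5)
Your proposal is correct and follows essentially the same route as the paper. The paper also treats the solution as a saddle point, solves each player's KL-regularized best-response problem via the KKT conditions to get $p^\star_i \propto p_0(i)\exp(-\beta_{\min}(\B{A}\B{q}^\star)_i)$ and $q^\star_j \propto q_0(j)\exp(\beta_{\max}(\B{A}^\top\B{p}^\star)_j)$, and then notes that both conditions hold simultaneously. You add details the paper omits, namely the explicit gradient and multiplier, the implicit requirement $p_0(i)>0$, and the optional shortcut through Proposition~\ref{prop:gibbs-dv}.
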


\begin{proof}
For fixed $\B{q}$, minimizing the strictly convex objective in Eq.~\eqref{eq:entropic-saddle} over $\B{p}\in\Delta_N$ yields the unique optimizer
$p_i \propto p_0(i)\exp(-\beta_{\min}(\B{A}\B{q})_i)$ according to the KKT conditions. Similarly, for fixed $p$, maximizing over $\B{q}$ yields
$q_j \propto q_0(j)\exp(\beta_{\max}(\B{A}^\top \B{p})_j)$. At a saddle point, both relations hold simultaneously.
\end{proof}

This simultaneous formulation is closely related to smooth learning dynamics, such as logit dynamics and noisy best-response dynamics
\citep{Blume1993LogitDynamics,HofbauerSandholm2002StochasticFP}, where temperature serves as a measure of behavioral precision or noise level, with an analogous KL-regularization interpretation.

\subsection{Limitations of Temperature and KL Relaxations}
\label{app:beyond-temperature}

The temperature and KL framework captures bounded rationality as stochasticity induced by an explicit information-processing cost; that is, it describes entropy-regularized optimization. Several influential models of imperfect rationality address different mechanisms and cannot be reduced to temperature alone. These models include depth-limited strategic reasoning, level-$k$ and cognitive hierarchy \citep{StahlWilson1995LevelK,CamererHoChong2004CH}, belief misspecification (e.g., cursed equilibrium \citep{EysterRabin2005Cursed}), and procedural or computational constraints (e.g., finite automata and algorithmic rationality \citep{Rubinstein1986FiniteAutomata,HalpernPass2015AlgorithmicRationality}). Therefore, our two-temperature relaxation should be considered a principled core model for soft-response and information-cost approaches, which can be integrated with these orthogonal behavioral axes.

\end{document}